\spnewtheorem{definitionrm}[theorem]{Definition}{\bfseries}{\normalfont}
\spnewtheorem{examplebf}[theorem]{Example}{\bfseries}{\normalfont}
\newcommand{\ok}{\textbf{\textcolor{Green}{ ok }}}
\newcommand{\err}{\textbf{\textcolor{red}{ err }}}
\newcommand{\green}[1]{\textcolor{Green}{#1}}
\newcommand{\red}[1]{\textcolor{red}{#1}}
\newcommand{\blue}[1]{\textcolor{blue}{#1}}
\newcommand{\disj} {
	\bigvee_{n \in \mathbb{N}}
}
\newcommand{\bdia}[1] {
	\boldsymbol{\langle} #1 \boldsymbol{|}
}
\newcommand{\intrp}[1] {
	\llbracket \mathtt{#1} \rrbracket_u
}
\newcommand{\csema}[1] {
	\llbracket \mathtt{#1} \rrbracket
}
\newcommand{\asema}[1] {
	\llbracket \mathtt{#1} \rrbracket_A^\sharp
}
\newcommand{\htri}[3] {
	\{#1\} \, #2 \, \{#3\}
}
\newcommand{\otri}[3] {
	[#1] \, #2 \, [#3]
}
\newcommand{\tri}[3] {
	 [#1] \; \mathtt{#2} \; [#3] 
}
\newcommand{\triok}[3] {
	 [#1] \; \mathtt{#2} \; \green{[\mathbf{ok} : #3]} 
}
\newcommand{\trierr}[3] {
	 [#1] \; \mathtt{#2} \; \red{[\mathbf{err} : #3]} 
}
\newcommand{\triokerr}[4] {
	 [#1] \; \mathtt{#2} \; \green{[\mathbf{ok} : #3]} \red{[\mathbf{err} : #4]}
}
\newcommand{\triboth}[3] {
	 [#1] \; \mathtt{#2} \; \blue{[\pmb{\epsilon} : #3]} 
}
\newcommand{\ptri}[3] {
	\vdash^{{\scriptscriptstyle\mathrm{K}}}_A [#1] \; \mathtt{#2} \; [#3] 
}
\newcommand{\ptritr}[3] {
	\vdash^{{\scriptscriptstyle\mathrm{K}}}_{A_{tr}} [#1] \; \mathtt{#2} \; [#3] 
}
\newcommand{\ptriul}[3] {
	\vdash_{\UL} [#1] \; \mathtt{#2} \; [#3] 
}
\newcommand{\ptriok}[3] {
	\vdash^{{\scriptscriptstyle\mathrm{K}}}_A [#1] \; \mathtt{#2} \; \green{[\mathbf{ok} : #3]} 
}
\newcommand{\ptriokil}[3] {
	\vdash_{\IL} [#1] \; \mathtt{#2} \; \green{[\mathbf{ok} : #3]} 
}
\newcommand{\ptrierr}[3] {
	\vdash^{{\scriptscriptstyle\mathrm{K}}}_A [#1] \; \mathtt{#2} \; \red{[\mathbf{err} : #3]} 
}
\newcommand{\ptrierrtr}[3] {
	\vdash^{{\scriptscriptstyle\mathrm{K}}}_{A_{tr}} [#1] \; \mathtt{#2} \; \red{[\mathbf{err} : #3]} 
}
\newcommand{\ptrierril}[3] {
	\vdash_{\IL} [#1] \; \mathtt{#2} \; \red{[\mathbf{err} : #3]} 
}
\newcommand{\ptriokerr}[4] {
	\vdash^{{\scriptscriptstyle\mathrm{K}}}_A [#1] \; \mathtt{#2} \; \green{[\mathbf{ok} : #3]} \red{[\mathbf{err} : #4]}
}
\newcommand{\ptriokerrtr}[4] {
	\vdash^{{\scriptscriptstyle\mathrm{K}}}_{A_{tr}} [#1] \; \mathtt{#2} \; \green{[\mathbf{ok} : #3]} \red{[\mathbf{err} : #4]}
}
\newcommand{\ptriokerril}[4] {
	\vdash_{\IL} [#1] \; \mathtt{#2} \; \green{[\mathbf{ok} : #3]} \red{[\mathbf{err} : #4]}
}
\newcommand{\ptriboth}[3] {
	\vdash^{{\scriptscriptstyle\mathrm{K}}}_A [#1] \; \mathtt{#2} \; \blue{[\pmb{\epsilon} : #3]} 
}
\newcommand{\ptribothil}[3] {
	\vdash_{\IL} [#1] \; \mathtt{#2} \; \blue{[\pmb{\epsilon} : #3]} 
}
\newcommand{\ptrit}[3] {
	\vdash^{{\scriptscriptstyle\mathrm{TK}}}_A [#1] \; \mathtt{#2} \; [#3] 
}
\newcommand{\ptrittr}[3] {
	\vdash^{{\scriptscriptstyle\mathrm{TK}}}_{A_{tr}} [#1] \; \mathtt{#2} \; [#3] 
}
\newcommand{\ptritokerr}[4] {
	\vdash^{{\scriptscriptstyle\mathrm{TK}}}_A [#1] \; \mathtt{#2} \; \green{[\mathbf{ok} : #3]} \red{[\mathbf{err} : #4]}
}
\newcommand{\ptritokerrtr}[4] {
	\vdash^{{\scriptscriptstyle\mathrm{TK}}}_{A_{tr}} [#1] \; \mathtt{#2} \; \green{[\mathbf{ok} : #3]} \red{[\mathbf{err} : #4]}
}
\newcommand{\ptritboth}[3] {
	\vdash^{{\scriptscriptstyle\mathrm{TK}}}_A [#1] \; \mathtt{#2} \; \blue{[\pmb{\epsilon} : #3]} 
}
\newcommand{\patriok}[4] {
	\vdash^{{\scriptscriptstyle\mathrm{K}}}_{#4} [#1] \; \mathtt{#2} \; \green{[\mathbf{ok} : #3]} 
}
\newcommand{\patrierr}[4] {
	\vdash^{{\scriptscriptstyle\mathrm{K}}}_{#4} [#1] \; \mathtt{#2} \; \red{[\mathbf{err} : #3]} 
}
\newcommand{\patrit}[4] {
	\vdash^{{\scriptscriptstyle\mathrm{TK}}}_{#4} [#1] \; \mathtt{#2} \; [#3] 
}
\newcommand{\vtri}[3] {
	\models^{{\scriptscriptstyle\mathrm{K}}}_A [#1] \; \mathtt{#2} \; [#3] 
}
\newcommand{\vtritr}[3] {
	\models^{{\scriptscriptstyle\mathrm{K}}}_{A_{tr}} [#1] \; \mathtt{#2} \; [#3] 
}
\newcommand{\vtriul}[3] {
	\models_{\UL} [#1] \; \mathtt{#2} \; [#3] 
}
\newcommand{\vtriok}[3] {
	\models^{{\scriptscriptstyle\mathrm{K}}}_A [#1] \; \mathtt{#2} \; \green{[\mathbf{ok} : #3]} 
}
\newcommand{\vtrierr}[3] {
	\models^{{\scriptscriptstyle\mathrm{K}}}_A [#1] \; \mathtt{#2} \; \red{[\mathbf{err} : #3]} 
}
\newcommand{\vtrierrtr}[3] {
	\models^{{\scriptscriptstyle\mathrm{K}}}_{A_{tr}} [#1] \; \mathtt{#2} \; \red{[\mathbf{err} : #3]} 
}
\newcommand{\vtriokerrtr}[4] {
	\models^{{\scriptscriptstyle\mathrm{K}}}_{A_{tr}} [#1] \; \mathtt{#2} \; \green{[\mathbf{ok} : #3]} \red{[\mathbf{err} : #4]}
}
\newcommand{\vtriokerril}[4] {
	\models_{\IL} [#1] \; \mathtt{#2} \; \green{[\mathbf{ok} : #3]} \red{[\mathbf{err} : #4]}
}
\newcommand{\vtrit}[3] {
	\models^{{\scriptscriptstyle\mathrm{TK}}}_A [#1] \; \mathtt{#2} \; [#3] 
}
\newcommand{\vtrittr}[3] {
	\models^{{\scriptscriptstyle\mathrm{TK}}}_{A_{tr}} [#1] \; \mathtt{#2} \; [#3] 
}
\newcommand{\ilc}[2]{\ok \green{: #1}, \err \red{: #2}}
\newcommand{\intrpok}[1] {
	\green{\llbracket \mathtt{#1} \rrbracket_{u_{\ok}}}
}
\newcommand{\intrperr}[1] {
	\red{\llbracket \mathtt{#1} \rrbracket_{u_{\err}}}
}
\newcommand{\asemaok}[1] {
	\green{\llbracket \mathtt{#1} \rrbracket^\sharp_{A_\ok}}
}
\newcommand{\asematrok}[1] {
	\green{\llbracket \mathtt{#1} \rrbracket^\sharp_{{A_{tr}}_\ok}}
}
\newcommand{\asemaerr}[1] {
	\red{\llbracket \mathtt{#1} \rrbracket^\sharp_{A_\err}}
}
\newcommand{\asematrerr}[1] {
	\red{\llbracket \mathtt{#1} \rrbracket^\sharp_{{A_{tr}}_\err}}
}
\newcommand{\csemaok}[1]{
	\green{\llbracket \mathtt{#1} \rrbracket_\ok}
}
\newcommand{\csemaerr}[1]{
	\red{\llbracket \mathtt{#1} \rrbracket_\err}
}
\newcommand{\udrshort}{\stackrel{{\mbox{\tiny\ensuremath{\vartriangle}}}}{\Leftrightarrow}}
\newcommand{\ra}{\rightarrow}
\newcommand{\ud}{\triangleq}
\newcommand{\tuple}[1]{\langle {#1}\rangle}
\DeclareMathOperator{\Test}{\mathsf{test}}
\DeclareMathOperator{\Atom}{Atom}
\DeclareMathOperator{\Error}{\mathtt{err}}
\DeclareMathOperator{\LCL}{LCL}
\DeclareMathOperator{\LCK}{LCK}
\DeclareMathOperator{\LCIL}{LCIL}
\DeclareMathOperator{\LCTK}{LCTK}
\DeclareMathOperator{\LCTIL}{LCTIL}
\DeclareMathOperator{\IL}{IL}
\DeclareMathOperator{\UL}{UL}
\DeclareMathOperator{\Int}{Int}
\DeclareMathOperator{\Sign}{Sign}
\newcommand{\bZ}{{\mathbb{Z}}}
\newcommand{\cG}{{\mathcal{G}}}
\DeclareMathOperator{\cod}{cod}
\DeclareMathOperator{\topp}{\mathsf{top}}
\DeclareMathOperator{\Spec}{Spec}
\newenvironment{scprooftree}[1]%
  {\gdef\scalefactor{#1}\begin{center}\proofSkipAmount \leavevmode}%
  {\scalebox{\scalefactor}{\DisplayProof}\proofSkipAmount \end{center} }
\begin{document}
\title{Local Completeness Logic on Kleene Algebra with Tests}
\pagestyle{plain}

\author{Marco Milanese \and Francesco Ranzato}
\authorrunning{M.\ Milanese and F.\ Ranzato}
\institute{Dipartimento di Matematica, University of Padova, Italy\\
}

\maketitle              %
\begin{abstract}
Local Completeness Logic (LCL) has been put forward as a program logic for proving both the correctness and incorrectness of program specifications. LCL is an abstract logic, parameterized by an abstract domain that allows combining over- and under-approximations of program behaviors. It turns out that LCL instantiated to the trivial singleton abstraction boils down to O'Hearn incorrectness logic, which allows us to prove the presence of program bugs. It has been recently proved that suitable extensions of Kleene algebra with tests (KAT) allow representing both O'Hearn incorrectness and Hoare correctness program logics within the same equational framework. In this work, we generalize this result by showing how KATs extended either with a modal diamond operator or with a top element are able to represent the local completeness logic LCL. This is achieved by studying how these extended KATs can be endowed with an abstract domain so as to define the validity of 
correctness/incorrectness LCL triples  and to show that the LCL proof system is logically sound and, under some hypotheses, 
complete.  
\keywords{Local Completeness Logic \and Incorrectness Logic \and Complete Abstract Interpretation \and Kleene Algebra with Tests.}
\end{abstract}
\section{Introduction}

Kleene algebra \cite{conway2012regular} with tests (KAT) \cite{kozen_kleene_1997} allows an
equational reasoning on programs and their properties. 
Programs are modeled as elements of a KAT, so that their properties can be algebraically derived  
through the general equational theory of KATs.  KATs feature sound, complete, and decidable equational theories
and have found successful applications 
in several different contexts, most notably in network programming~\cite{NetKAT,Cantor,BeckettGW16,FosterKM0T15,SmolkaEFG15}.  
The foundational study of Kozen~\cite{kozen_hoare_2000} has shown that the reasoning of 
Hoare correctness logic \cite{hoare_axiomatic_1969} can be encoded and formulated equationally within a KAT. Later work by Desharnais, M\"{o}ller and Struth \cite{desharnais_kleene_2006,moller_algebras_2006} extended KAT with a domain (KAD) to express the modal operators of propositional dynamic logic \cite{fischer_propositional_1979}, thus enabling a more natural way of reasoning through 
a map from actions to propositions. The expressive power of KAD has been recently substantiated by M\"{o}ller, O'Hearn and Hoare~\cite{fahrenberg_algebra_2021}, who have shown how to encode both Hoare~\cite{hoare_axiomatic_1969} correctness and O'Hearn~\cite{ohearn_incorrectness_2020} incorrectness program logics in a unique class of KAD where a backward diamond modality is exploited to encode strongest postconditions. 
Furthermore, very recently, Zhang, De Amorim and Gaboardi~\cite[Theorem~1]{zhang_incorrectness_2022}  have shown that O'Hearn incorrectness 
logic cannot be formulated within a conventional KAT, but, at the same time, a full fledged modal KAT is not needed. In fact, 
\cite{zhang_incorrectness_2022} proves that a KAT including a greatest element, called TopKAT, 
is capable to encode both Hoare and O'Hearn logic in a purely equational fashion. Moreover, \cite{zhang_incorrectness_2022} provides a PSPACE algorithm to decide TopKAT equality, based on a reduction to Cohen et al.~\cite{Cohen96thecomplexity}'s algorithm for KAT.
\\
\indent
This stream of works made it possible to reason equationally on both program correctness and incorrectness in the 
same algebraic framework. For example, in the KAD framework where a backward diamond modality $\bdia{a}p$ plays the role
of strongest postcondition of a KAT element $a$ (viz., a program) for a KAT test $p$ (viz., a precondition),  
the validity of 
a Hoare correctness triple $\htri{p}{a}{q}$ is determined by the 
inequality $\bdia{a} p \leq q$, while the validity of an O'Hearn incorrectness 
triple $\tri{p}{\text{$a$}}{q}$ boils down to $q\leq \bdia{a} p$. Moreover, if a KAT test $s$ plays the role of 
specification for a program $a$ and a Hoare triple $\htri{p}{\text{$a$}}{q}$ is provable, then $a$ can be proved correct through the inequality 
$q\leq s$. Vice versa, if $\tri{p}{\text{$a$}}{q}$ is a provable incorrectness triple, 
then incorrectness of $a$ can be verified as $q \leq \lnot s$.

\paragraph{\textbf{The Problem.}}
Recently, Bruni et al.~\cite{bruni_logic_2021} put forward a novel program logic, called local completeness logic $\LCL$, which 
is parameterized by an abstract domain \cite{CC77,CC79} of program stores and 
simultaneously combines over- and under-approximations of program behaviours. 
This program logic leverages the notion of \emph{locally complete} abstract interpretation, meaning that the abstract interpretation of atomic
program commands, such as variable assignments and Boolean guards, is complete (i.e.\ with no false alarm) \emph{locally} on the preconditions, as opposed
to standard completeness \cite{GRS00,Ranzato13} which must be satisfied \emph{globally} for all the preconditions. 
While a global completeness program logic was proposed in \cite{glr15},  
Bruni et al.~\cite{bruni_logic_2021}  design a proof system for inferring that a program analysis is locally complete.  
It turns out  \cite[Section~VI]{bruni_logic_2021} that the instantiation of
this $\LCL$ program logic to 
the trivial store abstraction with a unique ``don't know'' value abstracting any concrete store property, 
boils down to O'Hearn incorrectness logic \cite{ohearn_incorrectness_2020}.  Moreover, 
Bruni et al.~\cite{pldi2022} also show that abstract interpretations can be made locally complete through minimal 
domain refinements that repair the lack of local completeness in a given program analysis. 
\\
\indent
In the original definition of $\LCL$ in \cite{bruni_logic_2021} program properties are represented
as elements of a concrete domain $C$ and program semantics as functions of type $C \ra C$.
Although straightforward, this approach determines a specific type of program semantics.
Vice versa, by exploiting a KAT, program properties are represented as tests and programs
as generic elements of the KAT. Hence, a KAT based formulation becomes agnostic w.r.t.\ the
underlying
semantics and can therefore admit multiple different models of computation (e.g., trace-based semantics, or even
models not related to  program semantics as shown by the language-theoretic example in Section~\ref{ltKAT-subsec}). 
Furthermore, KAT is a particularly
suitable formalism for compositionally reasoning on programs as all its basic composition operations on
programs (concatenation, choice and Kleene iteration) are directly modeled within the algebra:
this
allows us to represent composite programs and tests as elements of the KAT and, in particular, to check
for their equality and inclusion directly in the algebra. Thus, 
following the KAT-based model of incorrectness logic 
advocated by M\"{o}ller, O'Hearn and Hoare~\cite{fahrenberg_algebra_2021}, this paper 
pushes forward this line of work by studying an algebraic
formulation of $\LCL$ program logic, with the objective of showing 
that there is no need to leverage particular semantic 
properties of programs to reason on their local completeness.

\paragraph{\textbf{Contributions.}}
In this work,  we show that  the local completeness logic $\LCL$ can be made fully algebraic in a suitable KAT, yet 
preserving all its noteworthy logical properties proved in \cite{bruni_logic_2021}.
For this purpose we show that:
\begin{itemize}
\item Our proof systems are logically sound and complete (likewise \cite{bruni_logic_2021}, 
completeness needs some additional
hypotheses).
\item By instantiating the algebraic version of the $\LCL$ logic to the trivial domain abstracting any
concrete value to ``don't know'' we exactly obtain O'Hearn incorrectness 
program logic \cite{ohearn_incorrectness_2020}, thus retrieving its logical soundness and completeness as consequences of our results. 
\item Triples of O'Hearn incorrectness logic carry two postconditions, corresponding to normal and erroneous program termination. 
While the original local completeness logic $\LCL$ in \cite{bruni_logic_2021} only considers normal termination, we propose a generalization that also 
supports erroneous termination. Moreover, we use the KAD construction of \cite{fahrenberg_algebra_2021} 
to generalize our logical soundness and completeness results to incorrectness triples.

\end{itemize}

In particular, we study two different formulations of $\LCL$ given: (1)~in a KAD, the KAT model used 
in \cite{fahrenberg_algebra_2021}, and (2)~in a TopKAT, the KAT model employed in \cite{zhang_incorrectness_2022}. 
In both frameworks, we put forward a suitable notion of abstract domain of KAT that, correspondingly, induces a  
sound abstract semantics for KAT programs (i.e., KAT terms). Our local completeness logic on KAT, called $\LCK$, turns out to be logically sound 
w.r.t.\ this abstract semantics, meaning that a provable 
$\LCK$ triple  $\tri{p}{\text{$a$}}{q}$ for an abstract domain $A$ on a KAT $K$ satisfies: 
\begin{enumerate}[{\rm (i)}]
\item $q$ is below the strongest postcondition in $K$ of the program term $a$ for the precondition $p$;
\item the program term $a$ is locally complete for the precondition $p$  in the abstract domain $A$;
\item the approximations in $A$ of $q$ and of the strongest postcondition of $a$ for $p$ coincide.
\end{enumerate}

\section{Background on Kleene Algebra with Tests}
A Kleene algebra with tests (KAT) is a purely algebraic structure that provides 
an elegant equational framework for program reasoning.
A KAT consists of actions, playing the role of 
programs, and tests, interpreted as pre/postconditions and Boolean guards. KAT elements can be combined with three 
basic operations: nondeterministic choice  $a_1+a_2$, sequential composition $a_1;a_2$, and Kleene iteration $a^*$. 
A standard model of KAT used to represent computations is the relational model, in which KAT elements are binary 
relations on some set, thus modeling programs as a relation between input and output states. 
Further models of KAT include regular languages over a finite alphabet,  
square matrices over another Kleene algebra, and Kleene algebra modulo theories~\cite{Greenberg22}. 
In the following, we briefly recall some basics of KAT. For more details, the reader is referred to \cite{conway2012regular,kozen_kleene_1997,desharnais_kleene_2006}.  

An \emph{idempotent-semiring} (\emph{i-semiring}) is a tuple $(A,+,\cdot,0,1)$ where:
(1)~$(A,+,0)$ is a commutative monoid with an idempotent addition, i.e., for all $a\in A$, $a+a=a$;
(2)~$(A,\cdot,1)$ is a monoid, where the multiplication symbol $\cdot$ is often omitted, such that,
for any $a \in A$, $0 \cdot a = a \cdot 0$;
(3)~multiplication distributes over addition (in both arguments).
In an i-semiring $A$, the relation $a \leq b \udrshort a + b = b$ is a partial order, referred to 
as the natural ordering, that we will implicitly use throughout the paper. Note that the addition 
$+$ is the join w.r.t.\ this natural ordering.

A \emph{test-semiring} is a tuple 
$(A,\Test(A),+,\cdot,\lnot,0,1)$ where: (1)~$(A,+,\cdot,0,1)$ is an i-semiring;  (2)~$\Test(A)\subseteq A$, 
and $(\Test(A), \lor, \land, \lnot, 0, 1)$ is a Boolean subalgebra of $A$ with greatest element $1$ and least element $0$, complement $\lnot$, where the meet $\land$ and join $\lor$ of the Boolean algebra $\Test(A)$
coincide, resp., with multiplication $\cdot$ and addition $+$.

A \emph{Kleene algebra} is a tuple $(K,+,\cdot,\mathord{^*},0,1)$ where: 
(1)~$(K,+,\cdot,0,1)$ is an i-semi\-ring; (2)~$(\cdot)^*:K\ra K$ is a unary operation, called Kleene star or iteration, 
satisfying the following conditions: 
	\begin{align*}
		\tag{$*$-unfold} \label{eqn:star_unfold} &1 + aa^* \leq a^* && 1 + a^*a \leq a^* \\
		\tag{$*$-induction} &b + ac \leq c \Rightarrow a^*b \leq c && b + ca \leq c \Rightarrow ba^* \leq c
	\end{align*}

\begin{definitionrm}[KAT \cite{kozen_kleene_1997}]
	A \emph{Kleene algebra with tests} (\emph{KAT}) is a two-sorted algebra
	$(K,\Test(K),$ $+, \cdot,\mathord{^*},$ $\lnot,0,1)$ such that
	$(K,\Test(K),+,\cdot,\lnot,0,1)$ is a test-semiring and $(K,+,\cdot,\mathord{^*},0,1)$ is a Kleene algebra.
	
	\noindent
	A KAT $K$ is \emph{countably-test-complete} (\emph{CTC}) if any countable subset of $\Test(K)$ admits least upper bound (lub).
	
	\noindent
	A KAT is \textit{$^*$-continuous}, referred to as KAT$^*$, if it satisfies the following condition: for all $a,b,c\in K$, 
	\( a b^* c = \bigvee_{n\in\mathbb{N}} a b^n c \) (this equation implicitly assumes that 
	the lub $\bigvee_{n\in\mathbb{N}} a b^n c$, w.r.t.\ the natural ordering of $K$,  exists).    
\end{definitionrm}

A \emph{relational KAT} 
\cite{goos_kleene_1997} on a carrier set $X$ is determined by a set $K\subseteq \wp(X\times X)$ of binary 
relations on $X$ with
tests $\Test(K)\subseteq \wp(\{(x,x) \mid x \in X\})$, where addition is union, multiplication is %
composition of relations, 
the additive identity is the empty relation, the multiplicative identity is $\{(x,x) \mid x \in X\}$, the Kleene star is the reflexive-transitive closure, and test complement is set complementation w.r.t.\ the multiplicative identity. 

Informally, a backward diamond $\bdia{\,\mathord{\cdot}\,}\,\cdot\:$
on a KAT allows us to compute strongest postcon\-ditions of programs, that is, $\bdia{a} p$ can be interpreted as $\mathsf{post}[a] p$.
\begin{definitionrm}[bdKAT \cite{fahrenberg_algebra_2021}]
	A \emph{backward-diamond KAT} (\emph{bdKAT}) is a two-sorted algebra
	$(K,$ $\Test(K),+,\cdot,\mathord{^*},\lnot,0,1,\bdia{})$ such that: 
	\begin{enumerate}[{\rm (1)}]
	\item $(K,\Test(K),+,\cdot,\mathord{^*},\lnot,0,1)$ is a KAT; 
	\item $\bdia{\,\mathord{\cdot}\,}\,\cdot : K \rightarrow (\Test(K) \rightarrow \Test(K))$ is a \emph{backward-diamond} operator 
	satisfying the following conditions: for all $a,b\in K$ and $p,q\in \Test(K)$, 
	\begin{align*}
		\tag{bd1} \label{eqn:bdia1} \bdia{a} p \leq q \Leftrightarrow pa \leq aq \\
		\tag{bd2} \label{eqn:bdia2} \bdia{a b} p = \bdia{b} (\bdia{a} p) %
	\end{align*}\qed
	\end{enumerate}
\end{definitionrm}

The following proposition summarizes few well-known operative properties of the ${\bdia{\cdot}}$ operator. Such operator can be defined using the codomain $\rho(\cdot)$ as $\bdia{a} p = \rho(pa)$ (e.g., as  in \cite[Section 5]{moller_algebras_2006}) and those properties are simply derived as a consequence of \cite[Lemma 4.11]{desharnais_kleene_2006}. On the other hand, in this paper we consider a (equivalent) direct axiomatization of the backward-diamond; therefore for the sake of completeness we provide a direct proof.
\begin{proposition} \label{prop:prop_bdia}
	Let $(K,\Test(K),+,\cdot,^*,\lnot,0,1,\bdia{})$ a backward-diamond KAT. For any $a,b \in K$ and $p,q \in \Test(K)$, the following properties hold
	\begin{align}
		\label{eqn:bdia_add_cmd} \bdia{a+b} p = \bdia{a} p + \bdia{b} p \\
		\label{eqn:bdia_add_cond} \bdia{a} (p+q) = \bdia{a} p + \bdia{a} q \\
		\label{eqn:bdia_isotony_cmd} a \leq b \Rightarrow \bdia{a} p \leq \bdia{b} p \\
		\label{eqn:bdia_isotony_cond} p \leq q \Rightarrow \bdia{a} p \leq \bdia{a} q \\
		\label{eqn:bdia_test} \bdia{s} p = p \cdot s \\
		\label{eqn:bdia_star_unfold} p + \bdia{a} \bdia{a^*} p \leq \bdia{a^*} p
	\end{align}
	If in addition, $K$ satisfies the CTC condition,
	\begin{equation} \label{eqn:bdia_star}
		\bdia{a^*} p = \disj \bdia{a^n} p.
	\end{equation}
\end{proposition}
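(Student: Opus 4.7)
The plan is to establish the seven properties in roughly the order listed, since (\ref{eqn:bdia_isotony_cmd})--(\ref{eqn:bdia_isotony_cond}) reduce to (\ref{eqn:bdia_add_cmd})--(\ref{eqn:bdia_add_cond}) and (\ref{eqn:bdia_star_unfold})--(\ref{eqn:bdia_star}) depend on all earlier items. The central lever throughout is axiom (\ref{eqn:bdia1}), which characterises $\bdia{a}p$ as the least test $q$ satisfying $pa \leq aq$; virtually every goal $\bdia{a}p \leq r$ will be discharged by translating it to $pa \leq ar$ and back.

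\textbf{Additivity and isotony.} For (\ref{eqn:bdia_add_cond}), the adjunction perspective of (\ref{eqn:bdia1}) (with $a$ fixed) makes $\bdia{a}$ a lower adjoint on $\Test(K)$, hence finite-join-preserving. The $\leq$ direction takes $r := \bdia{a}p + \bdia{a}q$ and verifies $(p+q)a = pa+qa \leq ar$ using $pa \leq a\bdia{a}p \leq ar$ (an instance of (\ref{eqn:bdia1}) at $q := \bdia{a}p$) and dually; the $\geq$ direction uses $p, q \leq p+q$ to obtain $\bdia{a}p, \bdia{a}q \leq \bdia{a}(p+q)$ by the same mechanism. For (\ref{eqn:bdia_add_cmd}) the $\leq$ direction is analogous with $r := \bdia{a}p+\bdia{b}p$, since $p(a+b) = pa+pb \leq ar+br = (a+b)r$. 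For the $\geq$ direction the adjunction in (\ref{eqn:bdia1}) only fixes the test argument, so I plan to identify $\rho(x) := \bdia{x}1$ (meaningful by (\ref{eqn:bdia1})) with the codomain operator and appeal to its standard additivity $\rho(x+y) = \rho(x) + \rho(y)$ from the modal/domain Kleene algebra development of \cite{desharnais_kleene_2006,moller_algebras_2006}, obtaining $\bdia{a+b}p = \rho(p(a+b)) = \rho(pa)+\rho(pb) = \bdia{a}p+\bdia{b}p$. Properties (\ref{eqn:bdia_isotony_cmd}) and (\ref{eqn:bdia_isotony_cond}) are then immediate corollaries: $a \leq b$ rewrites as $a+b=b$, and (\ref{eqn:bdia_add_cmd}) gives $\bdia{b}p = \bdia{a}p + \bdia{b}p \geq \bdia{a}p$; dually for tests via (\ref{eqn:bdia_add_cond}).

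\textbf{Test case and star pre-fixpoint.} For (\ref{eqn:bdia_test}), (\ref{eqn:bdia1}) turns $\bdia{s}p \leq r$ into $ps \leq sr$; since $p,s,r \in \Test(K)$ multiply as Boolean meets, this simplifies to $ps \leq r$, so $\bdia{s}p = ps$ by antisymmetry. For (\ref{eqn:bdia_star_unfold}), start from the right $*$-unfold $1+a^*a \leq a^*$; isotony (\ref{eqn:bdia_isotony_cmd}) yields $\bdia{1+a^*a}p \leq \bdia{a^*}p$, and additivity (\ref{eqn:bdia_add_cmd}), (\ref{eqn:bdia2}), and (\ref{eqn:bdia_test}) expand the left-hand side to $\bdia{1}p + \bdia{a^*a}p = p + \bdia{a}\bdia{a^*}p$.

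\textbf{Star sum under CTC.} Set $q := \bigvee_n \bdia{a^n}p$, which exists by CTC. The bound $q \leq \bdia{a^*}p$ follows from isotony (\ref{eqn:bdia_isotony_cmd}) together with $a^n \leq a^*$ (a standard consequence of the $^*$-axioms by induction on $n$). For the reverse I reduce $\bdia{a^*}p \leq q$ via (\ref{eqn:bdia1}) to $pa^* \leq a^*q$ and apply the right $*$-induction rule with $b = p$, $c = a^*q$. The first premise $p \leq a^*q$ is immediate from $p = \bdia{1}p \leq q$ and $1 \leq a^*$. The second premise $a^*qa \leq a^*q$ reduces to $qa \leq aq$, equivalently $\bdia{a}q \leq q$ via (\ref{eqn:bdia1}), and this last is verified by rewriting $\bdia{a}q = \bigvee_n \bdia{a}\bdia{a^n}p = \bigvee_n \bdia{a^{n+1}}p \leq q$, using (\ref{eqn:bdia2}) and the preservation of countable joins of tests by $\bdia{a}$. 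The main obstacle is precisely this last preservation step: (\ref{eqn:bdia1}) gives join-preservation for all joins that exist, but lifting it to countable families requires countable right-distributivity of multiplication over joins of tests (so that $(\bigvee_n p_n)a \leq ar \iff p_n a \leq ar$ for all $n$, thereby extending the biconditional of (\ref{eqn:bdia1}) to countable joins). Pinning down this distributivity under the bare CTC hypothesis is the delicate point of the whole proof.
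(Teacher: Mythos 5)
Your handling of \eqref{eqn:bdia_isotony_cmd}, \eqref{eqn:bdia_isotony_cond}, \eqref{eqn:bdia_test} and \eqref{eqn:bdia_star_unfold} coincides with the paper's proof. For additivity, however, you diverge: the paper's argument is fully self-contained, resting on the observation that $pa \leq aq \Leftrightarrow pa\lnot q = 0$, so that $p(a+b)\lnot q = pa\lnot q + pb\lnot q$ splits the sum and delivers \emph{both} inequalities of \eqref{eqn:bdia_add_cmd} (and \eqref{eqn:bdia_add_cond}) at once, including the direction you call hard. Your plan instead imports additivity of the codomain operator $\rho(x)=\bdia{x}1$ from \cite{desharnais_kleene_2006,moller_algebras_2006}. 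That route can be made to work, but it is exactly the detour the paper deliberately avoids (``we provide a direct proof''), and as written it is under-justified: to invoke those results you must check that $\rho$ so defined satisfies the (pre)codomain axioms and that $\bdia{a}p=\rho(pa)$, which amounts to showing $pa\leq aq \Leftrightarrow pa=paq$, i.e.\ the content of Lemma~\ref{prop:bdia_char}. Using the complement trick would make this part both shorter and independent of external results.

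The genuine gap is in \eqref{eqn:bdia_star}, and you have flagged it yourself. Setting $q=\disj \bdia{a^n}p$, your star-induction argument needs $\bdia{a}q\leq q$, i.e.\ that $\bdia{a}$ preserves this countable join; but \eqref{eqn:bdia1} only transfers joins from the test side when multiplication cooperates, CTC gives no countable distributivity of $\cdot$ over joins of tests, and a bdKAT has no forward box/domain operator whose Galois adjunction would force $\bdia{a}$ to preserve existing suprema. So this step does not close under the stated hypotheses by the means you propose, and the complement reformulation ($q_n\,a\lnot q=0$ for all $n$ implies $q\,a\lnot q=0$) runs into the same obstruction. The paper does not prove \eqref{eqn:bdia_star} directly either: it delegates precisely this point to \cite[Lemma 3.3]{fahrenberg_algebra_2021}, whose modal setting supplies the missing sup-preservation. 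So either cite that lemma as the paper does, or make explicit the extra assumption you need; as it stands, your proof of the final item is incomplete.
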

\begin{proof}
	\leavevmode
	
	\noindent (\ref{eqn:bdia_add_cmd}): For any $p,q \in \Test(K)$ and $a \in K$
	\begin{align*} 
		pa \leq aq &\Rightarrow paq' \leq aqq' \Rightarrow paq' = 0 \\
		paq' = 0 &\Rightarrow pa = pa \cdot 1 = pa (q + q') = paq + paq' = paq \Rightarrow pa \leq aq
	\end{align*}
	thus
	\begin{equation} \label{eqn:claim_9}
		pa \leq aq \Leftrightarrow paq' = 0
	\end{equation}
	where $\lnot q$ is denoted with $q'$. We have
	\begin{align*}
		\bdia{a+b} p \leq q &\Leftrightarrow \textrm{[By (\ref{eqn:bdia1})]} \\
		p(a+b) \leq (a+b)q &\Leftrightarrow \textrm{[By (\ref{eqn:claim_9})]} \\
		p(a+b) q' = 0 &\Leftrightarrow \\
		paq' + pbq' = 0 &\Leftrightarrow \\
		paq' = 0 \land pbq' = 0 &\Leftrightarrow \textrm{[By (\ref{eqn:claim_9})]} \\
		pa \leq aq \land pb \leq bq &\Leftrightarrow \textrm{[By (\ref{eqn:bdia1})]} \\
		\bdia{a} p \leq q \land \bdia{b} p \leq q &\Leftrightarrow \\
		\bdia{a} p + \bdia{b} p \leq q
	\end{align*}
	Finally, since $q$ is arbitrary:
	\begin{align*}
		\bdia{a+b} p \leq \bdia{a+b} p \Rightarrow \bdia{a} p + \bdia{b} p \leq \bdia{a+b} p \\
		\bdia{a} p + \bdia{b} p \leq \bdia{a} p + \bdia{b} p \Rightarrow \bdia{a+b} p \leq \bdia{a} p + \bdia{b} p \\
	\end{align*}
	implying the claim $\bdia{a+b} p = \bdia{a} p + \bdia{b} p$.
	
	\vspace{5px} \noindent (\ref{eqn:bdia_add_cond}): Same as (\ref{eqn:bdia_add_cmd}).
	
	\vspace{5px} \noindent (\ref{eqn:bdia_isotony_cmd}): By definition $a \leq b \Leftrightarrow a + b = b$, thus
	\[ \bdia{b} p = \bdia{a+b} p \stackrel{(\ref{eqn:bdia_add_cmd})}{=} \bdia{a} p + \bdia{b} p \]
	implying $\bdia{a} p \leq \bdia{b} p$.
	
	\vspace{5px} \noindent (\ref{eqn:bdia_isotony_cond}): Same as (\ref{eqn:bdia_isotony_cmd}).
	
	\vspace{5px} \noindent (\ref{eqn:bdia_test}): Since $p,s \in \Test(K)$ and $\Test(K)$ forms a boolean algebra $ps = sps$ and by (\ref{eqn:bdia1}) $\bdia{s} p \leq ps$. Still by (\ref{eqn:bdia1}), $ps \leq s \bdia{s} p \leq \bdia{s} p$, implying $\bdia{s} p = ps$.
	
	\vspace{5px} \noindent (\ref{eqn:bdia_star_unfold}): The axiom (\ref{eqn:star_unfold}) implies $1 + a^* a \leq a^*$ so that
	\begin{align*}
		1 + a^* a \leq a^* &\Rightarrow \textrm{[By (\ref{eqn:bdia_isotony_cmd})]} \\
		\bdia{1 + a^* a} p \leq \bdia{a^*} p &\Rightarrow \textrm{[By (\ref{eqn:bdia_add_cmd})]} \\
		\bdia{1} p + \bdia{a^* a} p \leq \bdia{a^*} p &\Rightarrow \textrm{[By (\ref{eqn:bdia_test})]} \\
		p + \bdia{a^* a} p \leq \bdia{a^*} p &\Rightarrow \textrm{[By (\ref{eqn:bdia2})]} \\
		p + \bdia{a} \bdia{a^*} p \leq \bdia{a^*} p.
	\end{align*}

	\vspace{5px} \noindent (\ref{eqn:bdia_star}): The proof of this property can be found in \cite[Lemma 3.3]{fahrenberg_algebra_2021}.	\qed
\end{proof}

The axiom \eqref{eqn:bdia1} is equivalent to requiring that $\bdia{a} p$ is the least test in $K$ satisfying $pa \leq aq$ (the original definition of Kleene algebra with domain in \cite{desharnais_kleene_2006}  is of this form). Moreover, $pa \leq aq$ in \eqref{eqn:bdia1} is equivalent to $pa = paq$ (see \cite[Lemma 3.4]{desharnais_kleene_2006}).
For the sake of completeness we provide a direct proof of this fact.

\begin{restatable}{lemma}{propbdiachar}\label{prop:bdia_char}
	Let $(K,\Test(K),+,\cdot,\mathord{^*},\lnot,0,1,\bdia{})$ a bdKAT. For all $p \in \Test(K)$ and $a \in K$,  
	\eqref{eqn:bdia1} holds iff $\bdia{a} p$ is the least (w.r.t.\ $\leq$) ${q \in \Test(K)}$ such that $pa = paq$.
\end{restatable}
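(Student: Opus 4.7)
The plan is to first prove a single auxiliary equivalence that makes the whole lemma almost immediate: for any test $q\in\Test(K)$, any $p\in\Test(K)$, and any $a\in K$,
\[
pa \leq aq \;\Longleftrightarrow\; pa = paq.
\]
For ($\Rightarrow$), set $q'=\lnot q$; then $pa\leq aq$ yields $paq'\leq aqq'=0$ since $qq'=0$ in the Boolean algebra $\Test(K)$, so $paq'=0$ and hence $pa=pa(q+q')=paq+paq'=paq$. For ($\Leftarrow$), observe that $p\leq 1$ gives $pa\leq a$, so $paq\leq aq$; combined with $pa=paq$ this yields $pa\leq aq$. This is essentially the same Boolean manipulation already carried out inside the proof of Proposition \ref{prop:prop_bdia}, and it is the only delicate step of the argument.

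Given this auxiliary fact, axiom \eqref{eqn:bdia1} reformulates as $\bdia{a}p \leq q \Leftrightarrow pa = paq$, from which both directions of the lemma drop out. For the forward direction ($\Rightarrow$), assume \eqref{eqn:bdia1}. Instantiating the reformulation at $q = \bdia{a}p$ and using $\bdia{a}p\leq\bdia{a}p$ shows $pa = pa\,\bdia{a}p$, so $\bdia{a}p$ is a witness. Minimality is immediate: if $q\in\Test(K)$ satisfies $pa=paq$, then $pa\leq aq$ by the auxiliary fact, whence $\bdia{a}p\leq q$ by \eqref{eqn:bdia1}.

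For the converse direction ($\Leftarrow$), assume $\bdia{a}p$ is the least test $q$ with $pa=paq$. To prove \eqref{eqn:bdia1}, take an arbitrary $q\in\Test(K)$. If $pa\leq aq$, then $pa=paq$ by the auxiliary fact, and minimality gives $\bdia{a}p\leq q$. Conversely, if $\bdia{a}p\leq q$, then monotonicity of multiplication yields $pa\,\bdia{a}p \leq paq$; using the witness equality $pa=pa\,\bdia{a}p$ and $paq\leq aq$ (again from $p\leq 1$) we conclude $pa\leq aq$.

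The plan is therefore: (i) prove the auxiliary Boolean-algebra equivalence; (ii) reformulate \eqref{eqn:bdia1} through it; (iii) read off both directions of the lemma by instantiating the reformulated statement at $q=\bdia{a}p$ (existence) and at arbitrary $q$ (minimality / \eqref{eqn:bdia1}). The only step that requires any care is (i), because it is the one place where the Boolean complement $\lnot q$ and the identities $q+\lnot q = 1$, $q\lnot q=0$ of $\Test(K)$ are used; once it is in place, steps (ii)--(iii) are purely formal.
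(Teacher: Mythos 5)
Your proposal is correct and follows essentially the same route as the paper: first establish the equivalence $pa \leq aq \Leftrightarrow pa = paq$, then show that \eqref{eqn:bdia1} amounts exactly to $\bdia{a}p$ being the least test satisfying that condition. The only cosmetic difference is that you prove the forward half of the auxiliary equivalence via the complement $\lnot q$ (as in the proof of \eqref{eqn:claim_9}), where the paper multiplies by $p$ on the left and uses $q \leq 1$; both are equally valid.
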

\begin{proof}
	First we show the equivalence between $pa \leq aq$ and $pa = paq$. Indeed: ${pa \leq aq}$ implies $pa = ppa \leq paq$ and $1 \geq q$ implies $pa \geq paq$, meaning that $pa = paq$. Viceversa if $pa = paq$, then $p \leq 1$ implies $pa = paq \leq aq$.
	
	We can show that (\ref{eqn:bdia1}) is equivalent to say that $\bdia{a} p$ is the least $q \in \Test(K)$ satisfying ${pa \leq aq}$ (which is equivalent to $pa = paq$). In formulas the latter is
	\begin{align}
		\label{eqn:prop_rela_exts_claim_1} pa &\leq a \bdia{a} p \\
		\label{eqn:prop_rela_exts_claim_2} pa &\leq aq \Rightarrow \bdia{a} p \leq q
	\end{align}
	Assume (\ref{eqn:bdia1}). (\ref{eqn:prop_rela_exts_claim_2}) is the direction $\Leftarrow$ of (\ref{eqn:bdia1}). (\ref{eqn:prop_rela_exts_claim_1}) holds by instantiating $q = \bdia{a} p$ in (\ref{eqn:bdia1}).
	Viceversa, assume (\ref{eqn:prop_rela_exts_claim_1}) and (\ref{eqn:prop_rela_exts_claim_2}). (\ref{eqn:prop_rela_exts_claim_2}) is the direction $\Leftarrow$ of (\ref{eqn:bdia1}). The other direction holds because $pa \leq a \bdia{a} p \leq aq$. \qed
\end{proof}

\begin{definitionrm}[TopKAT \cite{esparza_equational_2017}]
A \emph{KAT with top} (\emph{TopKAT}) is a KAT $K$ that contains a largest element  $\top\in K$, 
that is, for all $a\in K$, $a\leq \top$. 
\qed
\end{definitionrm}

\section{Local Completeness Logic in KAT}\label{sec:lclkat}
We investigate how the local completeness program logic $\LCL$ 
\cite{bruni_logic_2021} can be interpreted on a KAT. To achieve this, we need to address
the following tasks: 
\begin{itemize}
	\item To define a notion of abstract domain of a KAT, with the aim of abstracting
the set of program predicates, namely tests of a KAT;
	\item To establish a concrete semantics and a corresponding sound abstract semantics of programs on KATs;
	\item To adapt the local completeness proof system to attain valid triples on a KAT; 
	\item To prove logical soundness and completeness w.r.t.\ a KAT of this new proof system.
\end{itemize}

\subsection{Program Properties in KAT} \label{sect:exts_prop_kat}
Program properties can be broadly classified as intensional and extensional. The former relate to how programs are written, while the latter concern the input-output relation of a program,  i.e., its strongest postcondition denotational semantics.
Local completeness logic $\LCL$ relies on 
an abstract interpretation of programs which crucially depends on intensional properties of programs, meaning
that even if two programs share the same denotation, they could well have different abstract semantics. 
Thus, we expect that an appropriate definition of abstract semantics based on a KAT model should also be intensional. 
Given two elements $a$ and $b$ of a modal bdKAT playing the role of programs, we therefore expect that 
their backward diamond functions might coincide, i.e.\ $\bdia{a} = \bdia{b}$, even if $a$ and $b$ encode different programs, i.e.\ $a\neq b$. 
However, as shown by the following remark for the basic
relational model of KAT, %
it might happen that for certain classes of KAT models  the backward diamond interpretation is injective. 

In order to prove such result, we require an auxiliary Lemma to provide an explicit formula for the $\bdia{\cdot}$ operator in the special case of full set of tests (i.e., $\Test(K)=\wp(\{(x,x) \mid x \in X\})$).

\begin{lemma} \label{lemma:bdia_rela}
	Let $K$ be a relational KAT on a set $X$ where $\Test(K)=\wp(\{(x,x) \mid x \in X\})$. Then, 
	for all $a \in K$ and $p \in \Test(K)$,
	\[ \bdia{a} p = \{(y,y) \mid \exists x \in X . \; (x,x) \in p, (x,y) \in a \} \]
\end{lemma}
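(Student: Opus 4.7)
The plan is to invoke the characterization of the backward diamond given by Lemma~\ref{prop:bdia_char}: in any bdKAT, $\bdia{a} p$ is the least test $q$ satisfying $pa = paq$. Thus, setting
\[
	R \;\ud\; \{(y,y) \mid \exists x \in X . \; (x,x) \in p, \; (x,y) \in a \},
\]
it suffices to verify that (i) $R$ is a test, (ii) $pa = paR$, and (iii) $R$ is the least test with this property.

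For (i), note that $R \subseteq \{(x,x) \mid x \in X\}$ by construction, and by hypothesis $\Test(K) = \wp(\{(x,x) \mid x \in X\})$, so $R \in \Test(K)$. For (ii), I will unfold the definitions of the relational KAT operations: since $p$ is a subset of the diagonal, $(x,y) \in pa$ iff $(x,x) \in p$ and $(x,y) \in a$; and $(x,y) \in paR$ iff $(x,y) \in pa$ and $(y,y) \in R$. The inclusion $paR \subseteq pa$ is immediate. For the reverse inclusion, if $(x,y) \in pa$, then $(x,x) \in p$ and $(x,y) \in a$, which witnesses $(y,y) \in R$ by definition, so $(x,y) \in paR$.

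For (iii), suppose $q' \in \Test(K)$ satisfies $pa = paq'$; I must show $R \subseteq q'$. Take $(y,y) \in R$, so there exists $x \in X$ with $(x,x) \in p$ and $(x,y) \in a$. Then $(x,y) \in pa = paq'$, which forces $(y,y) \in q'$ by the unfolding of $paq'$ above. Hence $R \subseteq q'$, i.e., $R \leq q'$ in the natural ordering.

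Combining (i)--(iii) with Lemma~\ref{prop:bdia_char} yields $\bdia{a}p = R$, which is the claim. The argument is essentially routine unpacking of relational composition; the only point deserving care is verifying that $R$ belongs to $\Test(K)$, which is where the hypothesis that $\Test(K)$ is the full powerset of the diagonal is used (so no subtle issue arises about $R$ being ``representable'' as a test). The minimality step is the most content-bearing part, and it leverages precisely the fact that $p$ lies on the diagonal, which forces the middle coordinate of a composition to coincide with the first.
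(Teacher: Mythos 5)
Your proof is correct and follows essentially the same route as the paper's: both invoke Lemma~\ref{prop:bdia_char} and verify that the candidate set is the least test $q$ with $pa = paq$ by unfolding relational composition (your minimality step is phrased directly rather than by contradiction, but it is the same argument). The explicit check that $R\in\Test(K)$ via the full-powerset-of-the-diagonal hypothesis is a welcome, if minor, addition.
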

\begin{proof}
	We want to show that $\bdia{a} p$ exists and is equal to
	\[ s = \{(y,y) \mid \exists x \in X . \; (x,x) \in p, (x,y) \in a \}. \]
	Using the characterization of $\bdia{a} p$ of Lemma~\ref{prop:bdia_char}, we need to show that $s$ is the least $q \in \Test(K)$ satisfying $pa = paq$. In this context
	\begin{align*}
		paq &= \{(x,y) \mid (x,x) \in p, (x,y) \in a, (y,y) \in q \} \\
		pa &= \{(x,y) \mid (x,x) \in p, (x,y) \in a \}
	\end{align*}
	
	First we show that $pa = pas$. Assume $(x,y) \in pa$, then by definition of $s$ it holds $(y,y) \in s$, meaning that $(x,y) \in pas$. This shows $pa \leq pas$, the other inequality is immediate since $s \leq 1$ so that $pa \geq pas$.
	
	Secondly we show that $s$ is the least such element. Assume by contradiction that $\exists t \in \Test(K)$ satisfying $pa = pat$ and $t < s$. Let $(y,y) \in s \setminus t$. By definition of $s$ exists $x \in X$ such that $(x,x) \in p$ and $(x,y) \in a$, therefore $(x,y) \notin pat$ because $t$ is a subidentity and $(y,y) \notin t$. This contradicts the hypothesis $pa = pat$. \qed
\end{proof}

\begin{restatable}{proposition}{proprelaexts} \label{prop:rela_exts}
	Let $K$ be a relational KAT on a set $X$ where $\Test(K)=\wp(\{(x,x) \mid x \in X\})$. Then, 
	for all $a,b \in K$, 
	\( \bdia{a} = \bdia{b} \; \Leftrightarrow \; a = b \).
\end{restatable}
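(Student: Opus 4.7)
The plan is to prove the nontrivial direction $\Rightarrow$ by showing that the family of tests $\{\bdia{a} p \mid p \in \Test(K)\}$ already contains enough information to recover $a$ as a binary relation. The reverse direction $\Leftarrow$ is immediate: if $a = b$ then obviously $\bdia{a} = \bdia{b}$ as functions on $\Test(K)$.

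For the forward direction, I would rely on Lemma~\ref{lemma:bdia_rela}, which gives the explicit formula
\[
\bdia{a} p = \{(y,y) \mid \exists x \in X.\ (x,x) \in p,\ (x,y) \in a\}.
\]
The key step is to probe $\bdia{a}$ with singleton tests. Since $\Test(K) = \wp(\{(x,x)\mid x\in X\})$, for every $x \in X$ the singleton $p_x \ud \{(x,x)\}$ is a valid test. Instantiating the formula above yields
\[
\bdia{a} p_x = \{(y,y) \mid (x,y) \in a\},
\]
so the diagonal encoding of $\bdia{a} p_x$ is exactly the $x$-slice $a_x \ud \{y \mid (x,y) \in a\}$ of $a$. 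Analogously $\bdia{b} p_x$ encodes $b_x$. From the hypothesis $\bdia{a} = \bdia{b}$ we get $\bdia{a} p_x = \bdia{b} p_x$ for every $x \in X$, hence $a_x = b_x$ for every $x$, and thus $a = \bigcup_{x\in X} \{x\}\times a_x = \bigcup_{x\in X} \{x\}\times b_x = b$.

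There is no substantial obstacle here: the only subtlety is to observe that having the \emph{full} powerset of subidentities available as tests is exactly what permits the separating family $\{p_x\}_{x\in X}$ to be used. If $\Test(K)$ were a proper sub-Boolean-algebra of $\wp(\{(x,x)\mid x\in X\})$ this argument would break, which is consistent with the fact that the hypothesis on $\Test(K)$ is essential in the statement. The proof is therefore a short two-line argument applying Lemma~\ref{lemma:bdia_rela} to singleton tests.
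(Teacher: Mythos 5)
Your proof is correct and follows essentially the same route as the paper: both apply Lemma~\ref{lemma:bdia_rela} to the singleton tests $\{(x,x)\}$ (available precisely because $\Test(K)$ is the full powerset of subidentities) to recover the relation $a$ slice by slice, the paper phrasing this as an element-wise membership argument rather than via $x$-slices. No gap; nothing further is needed.
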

\begin{proof}
	The direction $\Leftarrow$ is trivial, because $\bdia{\cdot}$ is a function on the first argument. Instead, we focus on the $\Rightarrow$ direction.
	
	Assume $\bdia{a} = \bdia{b}$. If $(x,y) \in a$, then by Lemma~\ref{lemma:bdia_rela}
	\[ \bdia{a} \{(x,x)\} = \bdia{b} \{(x,x)\} \supseteq \{(y,y)\} \]
	This implies the existence of $z \in X$ such that $(z,z) \in \{(x,x)\}$ and $(z,y) \in b$, but the only possibility is $z=x$ so that $(x,y) \in b$. By symmetry we can obtain $(x,y) \in b \Rightarrow (x,y) \in a$.
	
	Notice that we can be sure that $\{(x,x)\} \in \Test(K)$ because of the hypothesis $\Test(K) = \wp(\{(x,x) \mid x \in X\})$. \qed
\end{proof}

This means that, at least for some fundamental KAT models, KAT elements are equal iff they are extensionally
equal, or, equivalently, they carry exclusively extensional program properties. In this case,
when a program is encoded with a KAT element all the intensional properties are lost and it is indistinguishable
from any other program with the same denotational semantics. Therefore, an abstract interpretation-based
semantics can not be defined directly on KAT elements.

\subsection{KAT Language}
As a consequence of the discussion in Section~\ref{sect:exts_prop_kat}, the concrete semantics cannot be directly 
defined on KAT elements. A solution is to define it on an inductive language. Actually, in a language of programs, two elements are equal iff they are syntactically equal, or, in other terms, if the corresponding programs are written in the same way. This property makes a language an ideal basis upon which a semantics can be defined, because this brings the chance of depending on   
intensional properties.

A natural choice for defining this language of programs is the so-called \emph{KAT language}, as originally defined by Kozen and Smith~\cite[Section~2.3]{goos_kleene_1997}, because it contains all and only the operators of a KAT, so that the interpretation of language terms as 
KAT elements is the most natural one. This language 
is inductively defined from two disjoint sets of primitive actions and tests through the basic elements/operations $0,1,+,\cdot,^*$ of KATs. 
More precisely, given a set $\Sigma$ of \textit{primitive actions} and a set $B$ of  \textit{primitive tests} such that $\Sigma\cap B=\varnothing$, the corresponding \emph{KAT language}  $T_{\Sigma,B}$ of terms is defined as follows: 
\begin{align*}
&\Atom \ni \mathtt{a} ::= a \in \mathit{\Sigma} \mid p \in \mathit{B} \\
& T_{\Sigma,B}\ni \mathtt{t ::= \mathtt{a} \mid 0 \mid 1 \mid t_1 + t_2 \mid t_1 \cdot t_2 \mid t^*}
\end{align*}
For simplicity, we assume that $\mathtt{0}$ and $\mathtt{1}$ are primitive tests in $B$, 
so that $\mathtt{0},\mathtt{1}\in \Atom$. 
The notation $\Atom(\mathtt{t})\subseteq \Atom$ will denote the set of atoms occurring in a term 
$\mathtt{t} \in T_{\Sigma, B}$.
Notice that a KAT language $T_{\Sigma,B}$ is an equivalent representation of the language of regular commands 
used in \cite{ohearn_incorrectness_2020,bruni_logic_2021} for their program logics. 

Given a KAT $K$, an evaluation of atoms in $K$ is a mapping $u : \Atom \rightarrow K$ 
such that $\mathtt{p} \in B \Rightarrow u(\mathtt{p}) \in \Test(K)$.
An evaluation $u$ induces an interpretation of terms $\intrp{\cdot} : T_{\Sigma,B} \rightarrow K$, which is inductively 
defined as expected:
	\begin{align*}
		&\intrp{a} \ud u(\mathtt{a}) && 
		\intrp{t_1 + t_2} \ud \intrp{t_1} + \intrp{t_2}\\
		&\intrp{t_1 \cdot t_2} \ud \intrp{t_1} \cdot \intrp{t_2} &&
		\intrp{t^*} \ud \intrp{t}^*
	\end{align*}

In turn, the concrete semantic function
\[ \csema{\cdot}^K : T_{\Sigma,B} \rightarrow \big(\Test(K) \rightarrow \Test(K)\big) \]
models the strongest postcondition of a program, i.e.\ of a language term, for a 
given precondition, i.e.\ a KAT test. This is therefore defined in terms of the backward diamond of a bdKAT as follows:  
\begin{align}
 \csema{t}^K p \triangleq \bdia{\intrp{t}} p. \label{bd-csem}
\end{align}
We will often use $\csema{t}$ to denote a concrete semantics, by omitting the superscript $K$ 
when it is clear from the context. 

\noindent The construction of the concrete semantic leverages the backward-diamond (\ref{bd-csem}) and the interpretation $\intrp{\cdot}$, hence the properties of Proposition \ref{prop:prop_bdia} can be transferred to $\csema{\cdot}^K$. For completeness we summarize them in the following:
\begin{align}
	\label{eqn:csema_isotony} p \leq q \Rightarrow \csema{t} p \leq \csema{t} q \\
	\label{eqn:csema_add_cmd} \csema{t_1 + t_2} p = \csema{t_1} p + \csema{t_2} p \\
	\label{eqn:csema_add_cond} \csema{t} (p+q) = \csema{t} p + \csema{t} q \\
	\label{eqn:csema_mul} \csema{t_1 \cdot t_2} p = \csema{t_2} \csema{t_1} p \\
	\label{eqn:csema_star} \csema{t^*} p = \disj (\csema{t})^n p \\
	\label{eqn:csema_star_unfold} p + \csema{t} \csema{t^*} p \leq \csema{t^*} p
\end{align}

\subsection{Kleene Abstractions}
An abstract domain is used in abstract interpretation for approximating store properties, i.e., sets of program 
stores form the concrete domain, likewise in our KAT model, the role of concrete domain is played by the set of tests $\Test(K)$ of a KAT $K$, ordered by the natural 
ordering induced by $K$.  

\begin{definitionrm}[Kleene Abstract Domain]\label{def-kad}
	A poset $(A, \leq_A)$ is a \emph{Kleene abstract domain} of a bdKAT $K$ if: 
	\begin{enumerate}[{\rm (i)}]
	\item There exists a Galois insertion, defined by 
	a concretization map $\gamma : A \rightarrow \Test(K)$ and an abstraction map $\alpha : \Test(K) \rightarrow A$,  
		of the poset $(A,\leq_A)$ into the poset 
		${(\Test(K),\leq_K)}$;  
     \item $A$ is countably-complete, i.e., any countable subset of $A$ admits a lub. \qed
     \end{enumerate}
\end{definitionrm}

The abstract semantic function
\( \asema{\cdot} : T_{\Sigma, B} \rightarrow (A \rightarrow A) \)
defines how abstract preconditions are transformed into abstract postconditions. Likewise store-based abstract interpretation, 
this abstract semantics is inductively defined as follows: 
\begin{equation}
\begin{aligned}\label{def:abs-sem}
&\asema{a} p^\sharp \triangleq \alpha(\csema{a}^K \gamma(p^\sharp)) &&
	\quad\asema{t_{\text{$1$}} + t_{\text{$2$}} } p^\sharp \triangleq \asema{t_{\text{$1$}}} p^\sharp + \asema{t_{\text{$2$}}} p^\sharp \\
	&\asema{t_{\text{$1$}} \cdot t_{\text{$2$}}}p^\sharp \triangleq \asema{t_{\text{$2$}}}(\asema{t_{\text{$1$}}} p^\sharp) &&
	\quad\asema{t^*} p^\sharp \triangleq \textstyle \disj (\asema{t})^n p^\sharp
\end{aligned}
\end{equation}

We recall in the following Proposition few properties of Galois Insertion. Proofs and a more in depth-presentation are available in \cite{mine_tutorial_2017}.
\begin{proposition} \label{prop:prop_ai}
	Let $(C, \leq_C) \galois{\alpha}{\gamma} (A, \leq_A)$ a Galois Insertion.
	\begin{align}
		\label{eqn:alpha_isotony} \forall c_1,c_2 \in C. \; &c_1 \leq_C c_2 \Rightarrow \alpha(c_1) \leq_A \alpha(c_2) \\
		\label{eqn:gamma_isotony} \forall a_1, a_2 \in A. \; &a_1 \leq_A a_2 \Rightarrow \gamma(a_1) \leq_C \gamma(a_2) \\
		\label{eqn:gamma_inject} \forall a_1, a_2 \in A. \; &a_1 \neq a_2 \Rightarrow \gamma(a_1) \neq \gamma(a_2) \\
		\label{eqn:gamma_alpha} \forall c \in C. \; &c \leq_C \gamma(\alpha(c)) \\
		\label{eqn:uco_idem} &\gamma \comp \alpha \comp \gamma \comp \alpha = \gamma \comp \alpha
	\end{align}
\end{proposition}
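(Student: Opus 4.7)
The plan is to derive all five items from the two defining features of a Galois insertion, namely the adjunction $\alpha(c) \leq_A a \Leftrightarrow c \leq_C \gamma(a)$ and the insertion identity $\alpha \circ \gamma = \mathrm{id}_A$. I would arrange the items in a dependency-respecting order so that each proof uses only facts already established, which is why I would not prove them in the order in which they are listed.

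First I would prove (\ref{eqn:gamma_alpha}): instantiating the adjunction with $a := \alpha(c)$ reduces its left-hand side to the tautology $\alpha(c) \leq_A \alpha(c)$, forcing $c \leq_C \gamma(\alpha(c))$. With this unit in hand, (\ref{eqn:alpha_isotony}) follows by transitivity and a single application of the adjunction: from $c_1 \leq_C c_2 \leq_C \gamma(\alpha(c_2))$ one derives $\alpha(c_1) \leq_A \alpha(c_2)$. The argument for (\ref{eqn:gamma_isotony}) is dual, using the counit $\alpha(\gamma(a)) = a$ granted by the insertion condition: from $a_1 \leq_A a_2 = \alpha(\gamma(a_2))$ the adjunction gives $\gamma(a_1) \leq_C \gamma(a_2)$. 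For (\ref{eqn:gamma_inject}) the argument is a one-liner by contrapositive: if $\gamma(a_1) = \gamma(a_2)$, applying $\alpha$ to both sides and using $\alpha \circ \gamma = \mathrm{id}_A$ yields $a_1 = a_2$. Finally, (\ref{eqn:uco_idem}) is immediate from the insertion identity: $\alpha \circ \gamma \circ \alpha = \mathrm{id}_A \circ \alpha = \alpha$, and precomposing with $\gamma$ on the left gives $\gamma \circ \alpha \circ \gamma \circ \alpha = \gamma \circ \alpha$.

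These are classical facts about Galois insertions, and there is no real obstacle to overcome; indeed the paper already cites \cite{mine_tutorial_2017} as a reference. The only design choice is whether to write out the short derivation or merely refer to the citation; given that Kleene abstract domains were just introduced in Definition \ref{def-kad} and the proposition will be invoked repeatedly in the soundness and completeness results that follow, including an explicit, self-contained proof seems worthwhile. I would expect the full write-up to fit comfortably in under half a page.
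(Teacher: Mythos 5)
Your proposal is correct: all five items do follow, exactly as you argue, from the adjunction $\alpha(c) \leq_A a \Leftrightarrow c \leq_C \gamma(a)$ together with the insertion identity $\alpha \circ \gamma = \mathrm{id}_A$, and the dependency order you chose (unit first, then monotonicity, injectivity, and idempotence) is the standard one. The paper, however, does not prove this proposition at all: it simply records these facts and defers to the cited tutorial \cite{mine_tutorial_2017}, so your self-contained half-page derivation is strictly more than what appears in the text — a reasonable design choice, just not the one the authors made. One small imprecision worth fixing in the write-up of \eqref{eqn:gamma_isotony}: to feed the adjunction you should rewrite the \emph{left-hand} element, i.e.\ use $a_1 = \alpha(\gamma(a_1))$ so that $a_1 \leq_A a_2$ becomes $\alpha(\gamma(a_1)) \leq_A a_2$, which the adjunction turns into $\gamma(a_1) \leq_C \gamma(a_2)$; rewriting $a_2$ as $\alpha(\gamma(a_2))$, as you state it, does not plug directly into the adjunction (alternatively, note that monotonicity of $\gamma$ already holds for any Galois connection, without the insertion condition).
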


It is worth remarking that condition (ii) of Definition~\ref{def-kad} ensures that the abstract semantics of the 
Kleene star  in~\eqref{def:abs-sem} is well defined. 
It turns out that $\asema{\cdot}$ is a sound (and monotonic) abstract semantics.  
\begin{restatable}[Soundness of bdKAT Abstract Semantics]{theorem}{propasema}\label{prop:prop_asema}
Let $A$ be a Kleene abstraction of a CTC bdKAT $K$ and $T_{\Sigma,B}$ be a language interpreted on $K$. 
	For all $p^\sharp, q^\sharp \in A$, $p\in \Test(K)$, and $\mathtt{t} \in T_{\Sigma, B}$:
	\begin{align*}
		\tag{monotonicity} &p^\sharp \leq_A q^\sharp \Rightarrow \asema{t} p^\sharp \leq_A \asema{t} q^\sharp \\
		\tag{soundness} &\alpha(\csema{t}^K p) \leq_A \asema{t} \alpha(p)
	\end{align*}
\end{restatable}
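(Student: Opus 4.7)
The plan is to proceed by structural induction on $\mathtt{t} \in T_{\Sigma,B}$, establishing monotonicity and soundness simultaneously (monotonicity is needed in the inductive step for soundness in the sequential and star cases). Throughout, I will rely on the standard facts about Galois insertions listed in Proposition~\ref{prop:prop_ai}, in particular that $\alpha$ is join-preserving (it is a left adjoint, so it preserves all existing joins, including the countable ones produced by the star case) and that $\gamma \circ \alpha$ is extensive. The CTC assumption on $K$ together with countable-completeness of $A$ (condition (ii) of Definition~\ref{def-kad}) guarantees that all lubs appearing in the definition of $\asema{t^*}$ and in the expansions $\csema{t^*}\,p = \bigvee_n (\csema{t})^n p$ (Property~\eqref{eqn:csema_star}) are well defined.

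For monotonicity, the atomic case reduces to isotony of $\gamma$, of $\csema{a}$ (Property~\eqref{eqn:csema_isotony}), and of $\alpha$, composed. The cases $\mathtt{t_1+t_2}$ and $\mathtt{t_1 \cdot t_2}$ are straightforward from the inductive hypotheses and the definition~\eqref{def:abs-sem}: in particular, monotonicity in the sequential case uses the IH for $\mathtt{t_1}$ followed by the IH for $\mathtt{t_2}$. The star case $\mathtt{t^*}$ requires first an auxiliary induction on $n \in \mathbb{N}$ showing that each iterate $(\asema{t})^n$ is monotone, which follows from the IH for $\mathtt{t}$; then the countable lub $\bigvee_n (\asema{t})^n p^\sharp$ is monotone in $p^\sharp$.

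For soundness, the atom case uses $p \leq \gamma(\alpha(p))$ together with isotony of $\csema{a}$ and of $\alpha$ to conclude $\alpha(\csema{a}\,p) \leq_A \alpha(\csema{a}\,\gamma(\alpha(p))) = \asema{a} \alpha(p)$. In the sum case, $\alpha(\csema{t_1+t_2}\,p) = \alpha(\csema{t_1}\,p + \csema{t_2}\,p) = \alpha(\csema{t_1}\,p) + \alpha(\csema{t_2}\,p)$, using Property~\eqref{eqn:csema_add_cmd} and the fact that $\alpha$ preserves joins; then the two IHs give the result. In the sequential case, we chain the IH for $\mathtt{t_2}$ (soundness applied at the test $\csema{t_1}\,p$), followed by monotonicity of $\asema{t_2}$ (already proved) applied to the IH for $\mathtt{t_1}$, giving $\alpha(\csema{t_2}\,\csema{t_1}\,p) \leq_A \asema{t_2}(\alpha(\csema{t_1}\,p)) \leq_A \asema{t_2}(\asema{t_1}\,\alpha(p))$.

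The main obstacle is the Kleene star case, where we need to justify $\alpha(\csema{t^*}\,p) \leq_A \asema{t^*} \alpha(p)$. Using Property~\eqref{eqn:csema_star} and join-preservation of $\alpha$, the left-hand side equals $\bigvee_n \alpha((\csema{t})^n\,p)$. I then carry out a secondary induction on $n$ proving $\alpha((\csema{t})^n\,p) \leq_A (\asema{t})^n \alpha(p)$: the base $n=0$ is immediate, and the inductive step combines the IH for $\mathtt{t}$ (soundness, at $(\csema{t})^n\,p$) with monotonicity of $\asema{t}$ applied to the secondary IH. Taking the countable lub over $n$ on both sides and invoking monotonicity of lubs in $A$ yields $\bigvee_n \alpha((\csema{t})^n\,p) \leq_A \bigvee_n (\asema{t})^n \alpha(p) = \asema{t^*} \alpha(p)$, which completes the induction.
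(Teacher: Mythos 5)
Your proposal is correct and follows essentially the same route as the paper's proof: structural induction on $\mathtt{t}$ with monotonicity established and then used in the sequential case of soundness, and, for the Kleene star, the same secondary induction on $n$ showing $\alpha((\csema{t})^n p) \leq_A (\asema{t})^n \alpha(p)$ before passing to countable lubs via join-preservation of $\alpha$ and the CTC condition (Property~\eqref{eqn:csema_star}).
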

\begin{proof}
	Let us prove the first implication. The proof is by induction on the structure of $\mathtt{t} \in T_{\Sigma, B}$. \vspace{5px} \\
	($\mathtt{a} \in \Atom$)\,:
	\begin{align*}
		\asema{a} p^\sharp &= \\
		\alpha(\csema{a} \gamma(p^\sharp)) &\leq \textrm{[isotony of $\alpha(\cdot)$,$\gamma(\cdot)$,$\csema{\cdot}$]} \\
		\alpha(\csema{a} \gamma(q^\sharp)) &= \\
		\asema{a} q^\sharp 
	\end{align*}
	($\mathtt{t_1 + t_2}$)\,: By induction it holds $\asema{t_i} p^\sharp \leq \asema{t_i} q^\sharp$ for $i \in \{1,2\}$
	\[ \asema{t_1 + t_2} p^\sharp = \asema{t_1} p^\sharp + \asema{t_2} p^\sharp \leq \asema{t_1} q^\sharp + \asema{t_2} q^\sharp = \asema{t_1 + t_2} q^\sharp \ \]
	($\mathtt{t_1 \cdot t_2}$)\,: By induction it holds $\asema{t_1} p^\sharp \leq \asema{t_1} q^\sharp$ and also $\asema{t_2} (\asema{t_1} p^\sharp) \leq \asema{t_2} (\asema{t_1} q^\sharp)$
	\[ \asema{t_1 \cdot t_2} p^\sharp = \asema{t_2} \asema{t_1} p^\sharp \leq \asema{t_2} \asema{t_1} q^\sharp = \asema{t_1 \cdot t_2} q^\sharp \]
	($\mathtt{t_0^*}$)\,: The result is a consequence of the following claim
	\begin{equation} \label{eqn:claim_10}
		\forall n \in \mathbb{N}. \; p^\sharp \leq q^\sharp \Rightarrow (\asema{t_0})^n p^\sharp \leq (\asema{t_0})^n q^\sharp
	\end{equation}
	which can be shown by induction on $n$. ${(\asema{t_0})^0 p^\sharp = p^\sharp \leq q^\sharp = (\asema{t_0})^0 q^\sharp}$ and for the inductive case: $(\asema{t_0})^{n+1} p^\sharp = \asema{t_0} (\asema{t_0})^n p^\sharp \leq \asema{t_0} (\asema{t_0})^n q^\sharp \leq (\asema{t_0})^{n+1} q^\sharp$.
	Therefore
	\[ \asema{t_0^*} p^\sharp = \disj (\asema{t_0})^n p^\sharp \stackrel{(\ref{eqn:claim_10})}{\leq} \disj (\asema{t_0})^n q^\sharp = \asema{t_0^*} q^\sharp \]

	\noindent Similarly, the soundness is proven by induction on the structure of $\mathtt{t} \in T_{\Sigma, B}$. \vspace{5px} \\
	($\mathtt{a} \in \Atom$)\,: By isotony of $\alpha(\cdot)$, $\csema{\cdot}$ and (\ref{eqn:gamma_alpha}) we have
	\[ \alpha(\csema{a} p) \leq \alpha(\csema{a} \gamma(\alpha(p))) = \asema{a} \alpha(p) \]
	($\mathtt{t_1 + t_2}$)\,: By additivity of $\csema{\cdot}$ and $\alpha(\cdot)$ we have
	\begin{align*}
		\alpha(\csema{t_1 + t_2} p) &= \alpha(\csema{t_1} p + \csema{t_2} p) \\
		&= \alpha(\csema{t_1} p) + \alpha(\csema{t_2} p) \\
		&\leq \asema{t_1} \alpha(p) + \asema{t_2} \alpha(p) \\
		&= \asema{t_1 + t_2} \alpha(p)
	\end{align*}
	($\mathtt{t_1 \cdot t_2}$)\,: By isotony of $\asema{\cdot}$ we have
	\begin{align*}
		\alpha(\csema{t_1 \cdot t_2} p) &= \alpha(\csema{t_2} \csema{t_1} p) \\
		&\leq \asema{t_2} \alpha(\csema{t_1} p) \\
		&\leq \asema{t_2} \asema{t_1} \alpha(p) \\
		&= \asema{t_1 \cdot t_2} \alpha(p)
	\end{align*}
	($\mathtt{t_0^*}$)\,: In order to show the result we need an auxiliary claim:
	\begin{equation} \label{eqn:claim_1}
		\alpha((\csema{t_0})^n p) \leq (\asema{t_0})^n \alpha(p)
	\end{equation}
	The proof is by induction on $n$. The base case holds as $\alpha((\csema{t_0})^0 p) = \alpha(p) = (\asema{t_0})^0 \alpha(p)$ and similarly the inductive case as $\alpha((\csema{t_0})^{n+1} p) = \alpha((\csema{t_0})^n \csema{t_0} p) \leq (\asema{t_0})^n \alpha(\csema{t_0} p) \leq (\asema{t_0})^n \asema{t_0} \alpha(p) = (\asema{t_0})^{n+1} \alpha(p)$. 
	Therefore we have
	\begin{align*}
		\alpha(\csema{t_0^*} p) &= \textrm{[By (\ref{eqn:csema_star})]} \\
		\alpha(\disj (\csema{t_0})^n p) &= \textrm{[By additivity of $\alpha(\cdot)$]} \\
		\disj \alpha((\csema{t_0})^n p) &\leq \textrm{[By (\ref{eqn:claim_1})]} \\
		\disj (\asema{t_0})^n \alpha(p) &= \\
		\asema{t_0^*} \alpha(p)
	\end{align*}
	Notice that the CTC condition on $K$ must hold in order to apply (\ref{eqn:csema_star}). \qed
\end{proof}

\subsection{Local Completeness Logic on bdKAT}
Given a Kleene abstract domain $A$, we will slightly abuse notation by using 
\begin{equation*}
A \ud \gamma\comp\alpha : \Test(K)\ra \Test(K)
\end{equation*} 
as a function (indeed, this is the upper closure operator on tests induced by the Galois insertion defining $A$). 
Let us recall the notions of global vs.\ local completeness.  If $f : \Test(K) \rightarrow \Test(K)$ is any test transformer  then:
		\begin{itemize}
		\item $A$ is \emph{globally complete} for $f$, denoted $\mathbb{C}^A(f)$, iff $A \comp f = A \comp f \comp A$;
		\item $A$ is \emph{locally complete} for $f$ on a concrete test $p \in \Test(K)$, denoted $\mathbb{C}^A_p(f)$, iff
		$A \comp f(p) = A \comp f \comp A(p)$.
\end{itemize}

It is known \cite{glr15} 
that global completeness is hard to achieve in practice, even for simple programs. 
Moreover, a complete and compositional (i.e., inductively defined on program structure)  
abstract interpretation is even harder to design \cite{BruniGGGP20}. This motivated to study a local notion of completeness
in abstract interpretation \cite{bruni_logic_2021} as a pragmatic and more attainable weakening of standard global completeness.

In our local completeness logic on a Kleene algebra, a triple $\tri{p}{t}{q}$, where $p$ and $q$ are 
tests and $\mathtt{t}$ is a language term, will be valid when: 
\begin{enumerate}[{\rm (1)}]
	\item $q$ is an under-approximation of the concrete semantics of $\mathtt{t}$ from a precondition $p$;
	\item $A$ is locally complete for $\csema{t}$ on the precondition $p$;
	\item $q$ and $\csema{t} p$ have the same over-approximation in $A$.
\end{enumerate}

\begin{definitionrm}[Triple Validity]\label{def:validity}
	Let $K$ a CTC bdKAT, $A$ be a Kleene abstraction of $K$, and $T_{\Sigma, B}$ be a KAT language interpreted on $K$.
	For all $p,q \in \Test(K)$ and $\mathtt{t} \in T_{\Sigma, B}$, a triple 
	$\tri{p}{t}{q}$ is valid in $A$, denoted by $\vtri{p}{t}{q}$, if
		\begin{enumerate}[{\rm (i)}]
			\item\label{val-cond1} $q \leq_K \csema{t}^K p$;
			\item\label{val-cond2} $\asema{t} \alpha(p) = \alpha(q) = \alpha(\csema{t}^K p)$. \qed
		\end{enumerate}
\end{definitionrm}

The local completeness proof system in \cite{bruni_logic_2021} can be adapted to our algebraic framework, 
yielding the set of rules denoted by $\LCK_A$ in Figure \ref{fig:lcl}.
The only syntactic difference concerns the usage of elements of $\Test(K)$ as pre/postconditions and the language of terms 
$T_{\Sigma, B}$ playing the role of programs.

\begin{figure}[t]
	\centering
	\begin{framed}
	\vspace*{-5pt}
	\begin{minipage}{\textwidth}
		\begin{minipage}{\textwidth}
			\begin{prooftree}
				\AxiomC{$\mathtt{a} \in \Sigma \cup B$ \qquad $\mathbb{C}^A_p(\csema{a})$}
				\RightLabel{(transfer)}
				\UnaryInfC{$\ptri{p}{a}{\csema{a} p}$}
			\end{prooftree}
		\end{minipage}
		\\[3pt]
		\begin{minipage}{\textwidth}
			\begin{prooftree}
				\AxiomC{$p' \leq p \leq\! A(p') \qquad \ptri{p'}{t}{q'} \qquad q \leq q' \leq\! A(q')$}
				\RightLabel{(relax)}
				\UnaryInfC{$\ptri{p}{t}{q}$}
			\end{prooftree}
		\end{minipage}
	\end{minipage}
	\\[3pt]
	\begin{minipage}{\textwidth}
		\begin{minipage}{0.5\textwidth}
			\begin{prooftree}
				\AxiomC{$\ptri{p}{t_{\text{$1$}}}{r}$}
				\AxiomC{$\ptri{r}{t_{\text{$2$}}}{q}$}
				\RightLabel{(seq)}
				\BinaryInfC{$\ptri{p}{t_{\text{$1$}} \cdot t_{\text{$2$}}}{q}$}
			\end{prooftree}
		\end{minipage}
		\hfill
		\begin{minipage}{0.5\textwidth}
			\begin{prooftree}
				\AxiomC{$\ptri{p}{t_{\text{$1$}}}{q_1}$}
				\AxiomC{$\ptri{p}{t_{\text{$2$}}}{q_2}$}
				\RightLabel{(join)}
				\BinaryInfC{$\ptri{p}{t_{\text{$1$}} + t_{\text{$2$}}}{q_1 + q_2}$}
			\end{prooftree}
		\end{minipage}
	\end{minipage}
    \\[3pt]
	\begin{minipage}{\textwidth}
		\begin{minipage}{0.5\textwidth}
			\begin{prooftree}
				\AxiomC{$\ptri{p}{t}{r}$}
				\AxiomC{$\ptri{p + r}{t^*}{q}$}
				\RightLabel{(rec)}
				\BinaryInfC{$\ptri{p}{t^*}{q}$}
			\end{prooftree}
		\end{minipage}
		\hfill
		\begin{minipage}{0.5\textwidth}
			\begin{prooftree}
				\AxiomC{$\ptri{p}{t}{q}$}
				\AxiomC{$q \leq A(p)$}
				\RightLabel{(iterate)}
				\BinaryInfC{$\ptri{p}{t^*}{p + q}$}
			\end{prooftree}
		\end{minipage}
	\end{minipage}
	\end{framed}
	\caption{Proof system $\LCK_A$.}
	\label{fig:lcl}
\end{figure}

It turns out that the logic $\LCK_A$ is logically sound (we use ``logical'' soundness to avoid overloading the soundness 
of abstract semantics). 

\begin{restatable}[Logical Soundness of $\vdash^{{\scriptscriptstyle\mathrm{K}}}_{A}$]{theorem}{thmlsound} \label{thm:lsound}
	If $\ptri{p}{t}{q}$ then
	\begin{enumerate}[{\rm (i)}]
		\item \label{lsound_1} $q \leq_K \csema{t} p$;
		\item \label{lsound_2} $\asema{t} \alpha(p) = \alpha(q) = \alpha(\csema{t} p)$.
	\end{enumerate}
\end{restatable}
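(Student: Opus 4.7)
The plan is to proceed by structural induction on the derivation of the triple $\ptri{p}{t}{q}$ in the proof system $\LCK_A$ of Figure~\ref{fig:lcl}, simultaneously establishing clauses (\ref{lsound_1}) and (\ref{lsound_2}) for each of the six rules. The recurring toolkit is the Galois-insertion machinery of Proposition~\ref{prop:prop_ai} (particularly $\alpha\circ\gamma=\mathrm{id}_A$ and the injectivity of $\gamma$), the additivity/monotonicity/composition laws of the concrete semantics in \eqref{eqn:csema_isotony}--\eqref{eqn:csema_star_unfold}, and the monotonicity and soundness bounds of the abstract semantics from Theorem~\ref{prop:prop_asema}.

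For (transfer), clause (\ref{lsound_1}) is immediate. For (\ref{lsound_2}), I would unfold $\asema{a}\alpha(p)$ as $\alpha(\csema{a}A(p))$ and invoke the local-completeness hypothesis $\mathbb{C}^A_p(\csema{a})$, which by injectivity of $\gamma$ rewrites it as $\alpha(\csema{a}p)$. For (relax), the bounds $p'\leq p\leq A(p')$ pinched by $\alpha$ yield $\alpha(p)=\alpha(p')$ and analogously $\alpha(q)=\alpha(q')$, reducing the conclusion to the inductive hypothesis via monotonicity of $\csema{t}$. The rule (join) follows from the additivity of $\csema{\cdot}$ in the term argument (eq.~\eqref{eqn:csema_add_cmd}) together with the additivity of $\alpha$. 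For (seq) with intermediate test $r$, the sandwich $\csema{t_2}r\leq\csema{t_2}\csema{t_1}p\leq\csema{t_2}(\gamma\alpha(r))$, combined with the inductive equality $\asema{t_2}\alpha(r)=\alpha(\csema{t_2}r)$ and with \eqref{eqn:bdia2}, collapses the abstraction of $\csema{t_1\cdot t_2}p$ to $\alpha(q)$.

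The two iteration rules carry the real technical content. For (rec), I would first observe, using \eqref{eqn:csema_star} (applicable because $K$ is CTC), that $\csema{t}p\leq\csema{t^*}p$, whence $p+r\leq\csema{t^*}p$ via the inductive bound $r\leq\csema{t}p$. Combined with the KAT identity $\intrp{t^*}\cdot\intrp{t^*}=\intrp{t^*}$ and \eqref{eqn:bdia2}, this gives $\csema{t^*}(p+r)\leq\csema{t^*}(\csema{t^*}p)=\csema{t^*}p$, which delivers clause (\ref{lsound_1}) for $t^*$ and the equality $\alpha(q)=\alpha(\csema{t^*}p)$. The remaining equality $\asema{t^*}\alpha(p)=\alpha(q)$ is then forced by a two-sided pinch: soundness of $\asema{\cdot}$ gives $\alpha(\csema{t^*}p)\leq\asema{t^*}\alpha(p)$, while monotonicity of $\asema{t^*}$ applied to $\alpha(p)\leq\alpha(p+r)$ gives $\asema{t^*}\alpha(p)\leq\asema{t^*}\alpha(p+r)=\alpha(q)$. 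For (iterate), the side condition $q\leq A(p)=\gamma\alpha(p)$ combined with the inductive equality $\asema{t}\alpha(p)=\alpha(q)$ forces $\asema{t}\alpha(p)\leq\alpha(p)$; a routine induction on $n$ then shows $(\asema{t})^n\alpha(p)\leq\alpha(p)$ for every $n$, so that $\asema{t^*}\alpha(p)=\alpha(p)=\alpha(p+q)$, and the remaining equality $\alpha(\csema{t^*}p)=\alpha(p+q)$ follows from soundness of $\asema{\cdot}$ together with $\csema{t^*}p\geq p$.

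The main obstacle is the (rec) case, where the inductive hypothesis is stated for the enlarged precondition $p+r$ while the conclusion concerns $p$ alone. The crucial observation is that this enlargement is absorbed simultaneously at the concrete level by the idempotence of $\csema{t^*}$ (itself a consequence of $\intrp{t^*}\cdot\intrp{t^*}=\intrp{t^*}$ together with \eqref{eqn:bdia2}) and at the abstract level by the monotonicity of $\asema{t^*}$, so that both inequalities needed for the abstract equality are derivable without any appeal to join preservation of $\asema{\cdot}$ in its argument, a property that is not in general available for the abstract Kleene iteration.
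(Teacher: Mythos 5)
Your proposal is correct and follows essentially the same route as the paper's proof: induction on the derivation, local completeness (plus injectivity of $\gamma$) for (transfer), the $\alpha$-pinching of $p'\leq p\leq A(p')$ for (relax), additivity for (join), composition for (seq), idempotence of $\csema{t^*}$ together with the chain $\alpha(q)\leq\alpha(\csema{t^*}p)\leq\asema{t^*}\alpha(p)\leq\asema{t^*}\alpha(p+r)=\alpha(q)$ for (rec), and the stabilization $\disj(\asema{t})^n\alpha(p)=\alpha(p)$ for (iterate). The only cosmetic differences are that the paper reduces clause (ii) once and for all to the single equality $\asema{t}\alpha(p)=\alpha(q)$ via soundness, and derives $\csema{t}p\leq\csema{t^*}p$ from $a\leq a^*$ and diamond isotony rather than from \eqref{eqn:csema_star}; these do not affect correctness.
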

\begin{proof}
	The proof is adapted from \cite[Theorem 5.5]{bruni_logic_2021}. The first equality of (\ref{lsound_2}) is a consequence of the first one, (\ref{lsound_1}) and soundness:
	\[ \alpha(q) \leq \alpha(\csema{t} p) \leq \asema{t} \alpha(p) = \alpha(q) \]
	that implies $\alpha(q) = \alpha(\csema{t} p)$. For this reason we only need to prove $\asema{t} \alpha(p) = \alpha(q)$.
	
	\vspace{5px} \noindent The proof is on the structure of the derivation tree of $\ptri{p}{t}{q}$.
	
	\vspace{5px} \noindent (transfer)\,: (\ref{lsound_1}) is trivial, while (\ref{lsound_2}) is a consequence of the local completeness axiom $\mathbb{C}_p^A(\csema{a})$, $\asema{a} \alpha(p) = \alpha(\csema{a} \gamma(\alpha(p))) \stackrel{\text{LC}}{=} \alpha(\csema{a} p) = \alpha(q)$.
	
	\vspace{5px} \noindent (relax)\,: By induction $q' \leq \csema{t} p'$ so that $q \leq q' \leq \csema{t} p' \leq \csema{t} p$, implying (\ref{lsound_1}).
		By hypothesis we have $p' \leq p \leq A(p')$. By isotony of $\alpha(\cdot)$ and $\gamma(\cdot)$, and (\ref{eqn:uco_idem}) we have that $A(p') \leq A(p) \leq A(A(p')) = A(p')$, which means $A(p') = A(p)$. By the same reasoning $A(q) = A(q')$. By injectivity of $\gamma(\cdot)$ we have $\alpha(p') = \alpha(p)$ and ${\alpha(q') = \alpha(q)}$. This proves (\ref{lsound_2}).
		
	\vspace{5px} \noindent (seq)\,: (\ref{lsound_1}) holds by isotony of $\csema{\cdot}$ and by induction $q \leq \csema{t_2} r \leq \csema{t_2} \csema{t_1} p = \csema{t_1 \cdot t_2} p$, while (\ref{lsound_2}) holds by induction as ${\asema{t_1 \cdot t_2} \alpha(p) = \asema{t_2} \asema{t_1} \alpha(p) = \asema{t_2} \alpha(r) = \alpha(q)}$.
	
	\vspace{5px} \noindent (join)\,: The inductive hypothesis yields $q_1 \leq \csema{t_1} p$ and $q_2 \leq \csema{t_2} p$, thus by additivity of $\csema{\cdot}$, $q_1 + q_2 \leq \csema{t_1} p + \csema{t_2} p = \csema{t_1 + t_2} p$, implying (\ref{lsound_1}). (\ref{lsound_2}) instead can be obtained by induction and additivity of $\alpha(\cdot)$:
		\[ \asema{t_1 + t_2} \alpha(p) = \asema{t_1} \alpha(p) + \asema{t_2} \alpha(p) = \alpha(q_1) + \alpha(q_2) = \alpha(q_1 + q_2) \]

	\noindent (rec)\,: By induction we have that $r \leq \csema{t} p$. For any $a \in K$ it holds $a \leq a^*$ \cite[Section 2.1]{KOZEN1994366}, and by isotony of $\bdia{\cdot}$,
	\[ r \leq \csema{t} p = \bdia{\intrp{t}} p \leq \bdia{(\intrp{t})^*} p = \bdia{\intrp{t^*}} p = \csema{t^*} p \]
	Moreover, for any $a \in K$ it holds $a^* a^* = a^*$ \cite[Section 2.1]{KOZEN1994366}, so that
	\begin{align*}
		\csema{t^*} \csema{t^*} p &= \textrm{[By (\ref{eqn:csema_mul})]} \\		
		\csema{t^* \cdot t^*} p &= \\
		\bdia{\intrp{t^* \cdot t^*}} p &= \\
		\bdia{(\intrp{t})^* (\intrp{t})^*} p &= \\
		\csema{t^*} p
	\end{align*}
	By induction, it holds $q \leq \csema{t^*} (p+r) = \csema{t^*} p + \csema{t^*} r$, but ${\csema{t^*} r \leq \csema{t^*} \csema{t^*} p = \csema{t^*} p}$, thus implying (\ref{lsound_1}). (\ref{lsound_2}) can be retrieved by (\ref{lsound_1}), isotony of $\asema{\cdot}$, $\alpha(\cdot)$ and soundness
	\[ \alpha(q) \leq \alpha(\csema{t^*} p) \leq \asema{t^*} \alpha(p) \leq \asema{t^*} \alpha(p + r) = \alpha(q) \]
	
	\vspace{5px} \noindent (iterate)\,: For any $a \in K$ it holds $1 \leq a^*$ and $a \leq a^*$ \cite[Section 2.1]{KOZEN1994366}, thus by isotony of $\bdia{\cdot}$, $q \leq \csema{t} p = \bdia{\intrp{t}} p \leq \bdia{(\intrp{t})^*} p = \bdia{\intrp{t^*}} p \leq \csema{t^*} p$ and by (\ref{eqn:bdia_test}) and isotony of $\bdia{\cdot}$, $p = p \cdot 1 = \bdia{1} p \leq \bdia{(\intrp{t})^*} p = \csema{t^*} p$, thus implying (\ref{lsound_1}) $p + q \leq \csema{t^*} p$.
		To show (\ref{lsound_2}) we need the following preliminary fact:
		\begin{equation} \label{eqn:claim_7}
			\asema{t} \alpha(p) \leq \alpha(p) \Rightarrow \disj (\asema{t})^n \alpha(p) = \alpha(p)
		\end{equation}
		First, let us prove by induction on $n$ that $\alpha(p)$ is an upper-bound of the elements of the disjunction, i.e., $\forall n \in \mathbb{N} . \; (\asema{t})^n \alpha(p) \leq \alpha(p)$. The base case $n=0$ is true since $(\asema{t})^0 \alpha(p) = \alpha(p) \leq \alpha(p)$, while the inductive case holds by isotony of $\asema{\cdot}$, $(\asema{t})^{n+1} \alpha(p) = (\asema{t})^n \asema{t} \alpha(p) \leq (\asema{t})^n \alpha(p) \leq \alpha(p)$. Notice that $\alpha(p)$ is also the least upper bound because $(\asema{t})^0 \alpha(p) = \alpha(p)$. The hypothesis $q \leq A(p)$ implies $\alpha(q) \leq \alpha(p)$ by isotony of $\alpha(\cdot)$ and $\gamma(\cdot)$, (\ref{eqn:gamma_inject}) and (\ref{eqn:uco_idem}). Notice that ${\asema{t} \alpha(p) = \alpha(q) \leq \alpha(p)}$, meaning that we can apply (\ref{eqn:claim_7}):
		\begin{align*} 
		\asema{t^*} \alpha(p) = \disj (\asema{t})^n \alpha(p) = \alpha(p) = \alpha(p) + \alpha(q) = \alpha(p + q). 
		\tag*{\qed}
		\end{align*} 
\end{proof}

Analogously to what happens for $\LCL$, we can prove that $\LCK_A$ is logically complete under these two additional hypotheses: 
\begin{enumerate}[{\rm (A)}]
	\item \label{add1} The following infinitary rule is added to $\LCK_A$: 
	\begin{prooftree}
		\AxiomC{$\forall n \in \mathbb{N} . \; \ptri{p_n}{t}{p_{n+1}}$}
		\RightLabel{(limit)}
		\UnaryInfC{$\ptri{p_0}{t^*}{\disj p_n}$}
	\end{prooftree}
	Let us point out that the lub $\disj p_n$ always exists in $K$, as a consequence of 
	the CTC requirement on $K$.

	\item \label{add2} The concrete semantics of the primitive actions and tests occurring in the program are globally complete.
\end{enumerate}

It can be proved that the rule (limit) preserves logical soundness:
\begin{restatable}{lemma}{lmlsoundlimit} \label{lemma:lsound_limit}
	With the hypothesis of Theorem \ref{thm:lsound}, the rule (limit) preserves logical soundness.
\end{restatable}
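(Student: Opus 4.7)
The plan is to verify the two validity conditions of Definition~\ref{def:validity} for the conclusion $\tri{p_0}{t^*}{\disj p_n}$, given as induction hypothesis (on the shape of the derivation tree, extending the proof of Theorem~\ref{thm:lsound}) that each premise $\ptri{p_n}{t}{p_{n+1}}$ is logically sound. Concretely, for every $n \in \mathbb{N}$ I may use $p_{n+1} \leq_K \csema{t} p_n$ together with $\asema{t} \alpha(p_n) = \alpha(p_{n+1}) = \alpha(\csema{t} p_n)$. Note also that $\disj p_n$ is guaranteed to exist in $K$ by the CTC hypothesis.

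For condition (i), namely $\disj p_n \leq_K \csema{t^*} p_0$, I will first prove by induction on $n$ that $p_n \leq_K \csema{t^*} p_0$. The base case $p_0 \leq \csema{t^*} p_0$ follows from the Kleene axiom $1 \leq \intrp{t}^*$ combined with $\bdia{1} p_0 = p_0$ (by \eqref{eqn:bdia_test}) and isotony of the backward diamond \eqref{eqn:bdia_isotony_cmd}. For the inductive step, the premise gives $p_{n+1} \leq \csema{t} p_n$, and then monotonicity of $\csema{t}$ together with the inequality $\intrp{t}\cdot\intrp{t}^* \leq \intrp{t}^*$ (applied via \eqref{eqn:bdia_isotony_cmd} and \eqref{eqn:csema_mul}) yields $\csema{t} p_n \leq \csema{t} \csema{t^*} p_0 \leq \csema{t^*} p_0$. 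Taking the countable lub over $n$ then produces the desired $\disj p_n \leq \csema{t^*} p_0$.

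For condition (ii), the crux is to establish $\alpha(\disj p_n) = \asema{t^*} \alpha(p_0)$. I would first prove by induction that $\alpha(p_n) = (\asema{t})^n \alpha(p_0)$: the base case is immediate, and the inductive step invokes the second premise equality $\alpha(p_{n+1}) = \asema{t}\alpha(p_n)$. Next, since $\alpha$ is the left adjoint of the Galois insertion defining $A$, it preserves countable joins, and hence $\alpha(\disj p_n) = \disj \alpha(p_n) = \disj (\asema{t})^n \alpha(p_0) = \asema{t^*} \alpha(p_0)$ by the abstract Kleene-star clause of \eqref{def:abs-sem}. To conclude, I will apply soundness (Theorem~\ref{prop:prop_asema}) to obtain $\alpha(\csema{t^*} p_0) \leq \asema{t^*}\alpha(p_0)$, combine it with $\alpha(\disj p_n) \leq \alpha(\csema{t^*} p_0)$ (isotony of $\alpha$ applied to part~(i)), and thereby sandwich everything into $\asema{t^*}\alpha(p_0) = \alpha(\disj p_n) = \alpha(\csema{t^*} p_0)$.

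The main delicate point will be the countable additivity of $\alpha$: this must be justified by the adjoint property and by the fact that both $\Test(K)$ and $A$ admit countable lubs, thanks to the CTC assumption on $K$ (also needed to apply \eqref{eqn:csema_star}) and condition~(ii) of Definition~\ref{def-kad}. Everything else mirrors the algebraic manipulations already used in the (rec) and (iterate) cases of Theorem~\ref{thm:lsound}, so once countable joins are handled carefully the rest is routine.
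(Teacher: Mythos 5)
Your proposal is correct and follows essentially the same route as the paper: induction on $n$ to get $p_n \leq \csema{t^*}p_0$ for condition (i), the auxiliary claim $(\asema{t})^n\alpha(p_0)=\alpha(p_n)$ plus additivity of $\alpha$ for condition (ii), and the soundness-based sandwich for the remaining equality (which the paper handles once, generically, at the start of the proof of Theorem~\ref{thm:lsound}). The only cosmetic difference is that you invoke $1\leq \intrp{t}^*$ and $\intrp{t}\intrp{t}^*\leq\intrp{t}^*$ directly via isotony of the diamond where the paper uses the derived unfold property \eqref{eqn:csema_star_unfold}; these are equivalent.
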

\begin{proof}
	Let's assume that (limit) proves a triple $\tri{p_0}{t^*}{\disj p_n}$. The conditions (\ref{lsound_1}) is $\disj p_n \leq \csema{t^*} p_0$, which is equivalent to $\forall n \in \mathbb{N} . \; p_n \leq \csema{t^*} p_0$. The latter can be shown by induction on $n$. By (\ref{eqn:csema_star_unfold}), ${p_0 \leq p_0 + \csema{t} \csema{t^*} p_0 \leq \csema{t^*} p_0}$. Similarly for the inductive case $p_{n+1} \leq \csema{t} p_n \leq \csema{t} \csema{t^*} p_0 \leq p_0 + \csema{t} \csema{t^*} p_0 \leq \csema{t^*} p_0$, where the first inequality holds by structural induction, i.e., $\ptri{p_n}{t}{p_{n+1}}$ implies $\vtri{p_n}{t}{p_{n+1}}$.
	To show (\ref{lsound_2}) we need the auxiliary claim $\forall n \in \mathbb{N} . \; (\asema{t})^n \alpha(p_0) = \alpha(p_n)$. That can be proved by induction on $n$. The base case is trivially true as $(\asema{t})^0 \alpha(p_0) = \alpha(p_0)$. While the inductive case holds by structural induction, i.e., $\asema{t} \alpha(p_n) = \alpha(p_{n+1})$, as $(\asema{t})^{n+1} \alpha(p_0) = \asema{t} (\asema{t})^n \alpha(p_0) = \asema{t} \alpha(p_n) = \alpha(p_{n+1})$.
	Therefore by additivity of $\alpha(\cdot)$,
	\begin{align*}
 \asema{t^*} \alpha(p_0) = \disj (\asema{t})^n \alpha(p_0) = \disj \alpha(p_n) = \alpha(\disj p_n). \tag*{\qed}
 \end{align*}
\end{proof}

The condition \eqref{add2} entails global completeness:
\begin{restatable}{lemma}{lmglobcompl} \label{lemma:glob_compl}
	Let $K$ a backward-diamond CTC Kleene algebra, $A$ a Kleene abstract domain on $K$ and $T_{\Sigma,B}$ a KAT language on $K$. For any $\mathtt{t} \in T_{\Sigma, B}$ and $p \in \Test(K)$ we have
	\[ (\forall \mathtt{a} \in \Atom(\mathtt{t}) . \; \mathbb{C}^A(\csema{a})) \Rightarrow \asema{t} \alpha(p) = \alpha(\csema{t} p) \]
\end{restatable}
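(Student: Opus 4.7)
The plan is to proceed by structural induction on $\mathtt{t} \in T_{\Sigma, B}$, noting that the inductive hypothesis must be universally quantified over all preconditions $p \in \Test(K)$ (not just the one given), since subterms will be evaluated at derived preconditions. The assumption that every atom in $\mathtt{t}$ is globally complete is inherited by subterms, since $\Atom(\mathtt{t}_i) \subseteq \Atom(\mathtt{t})$ for every subterm.

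For the base case $\mathtt{a} \in \Atom(\mathtt{t})$, unfolding the definition of $\asema{\cdot}$ gives $\asema{a}\alpha(p) = \alpha(\csema{a}\gamma(\alpha(p))) = \alpha(\csema{a}(A(p)))$, and global completeness $\mathbb{C}^A(\csema{a})$, i.e.\ $A \comp \csema{a} = A \comp \csema{a} \comp A$, yields exactly $\alpha(\csema{a}(A(p))) = \alpha(\csema{a} p)$. For $\mathtt{t}_1 + \mathtt{t}_2$, combine additivity of $\csema{\cdot}$ from (\ref{eqn:csema_add_cmd}), additivity of $\alpha$ (which it enjoys as a lower adjoint), and the two inductive hypotheses:
\[
\alpha(\csema{t_1+t_2} p) = \alpha(\csema{t_1}p) + \alpha(\csema{t_2}p) = \asema{t_1}\alpha(p) + \asema{t_2}\alpha(p) = \asema{t_1+t_2}\alpha(p).
\]
For $\mathtt{t}_1 \cdot \mathtt{t}_2$, use (\ref{eqn:csema_mul}) and apply the inductive hypothesis on $\mathtt{t}_2$ at the precondition $\csema{t_1} p$, then the inductive hypothesis on $\mathtt{t}_1$ at $p$:
\[
\alpha(\csema{t_1\cdot t_2}p) = \alpha(\csema{t_2}\csema{t_1}p) = \asema{t_2}\alpha(\csema{t_1}p) = \asema{t_2}\asema{t_1}\alpha(p) = \asema{t_1\cdot t_2}\alpha(p).
\]

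The main obstacle is the Kleene star case $\mathtt{t}_0^*$. Here I would invoke (\ref{eqn:csema_star}) to write $\csema{t_0^*}p = \bigvee_{n\in\mathbb{N}} (\csema{t_0})^n p$, which is legitimate because $K$ is CTC, and then use additivity of $\alpha$ over countable joins to push $\alpha$ inside. I would then prove by an auxiliary induction on $n$ that $\alpha((\csema{t_0})^n p) = (\asema{t_0})^n \alpha(p)$: the base case is immediate, and the inductive step applies the outer inductive hypothesis on $\mathtt{t}_0$ at the precondition $(\csema{t_0})^n p$, giving $\alpha((\csema{t_0})^{n+1}p) = \alpha(\csema{t_0}(\csema{t_0})^n p) = \asema{t_0}\alpha((\csema{t_0})^n p) = \asema{t_0}(\asema{t_0})^n \alpha(p) = (\asema{t_0})^{n+1}\alpha(p)$. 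Combining these chains:
\[
\alpha(\csema{t_0^*}p) = \textstyle \bigvee_{n\in\mathbb{N}} \alpha((\csema{t_0})^n p) = \bigvee_{n\in\mathbb{N}} (\asema{t_0})^n \alpha(p) = \asema{t_0^*}\alpha(p),
\]
where the last equality is the definition in (\ref{def:abs-sem}), well-posed thanks to countable completeness of $A$. The delicate point is having the inductive hypothesis in strong enough form (for all preconditions) to be reusable at every iterate $(\csema{t_0})^n p$, and reliance on both the CTC hypothesis on $K$ and additivity of $\alpha$ over countable joins to commute abstraction with the Kleene-star supremum.
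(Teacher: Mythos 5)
Your proposal is correct and follows essentially the same route as the paper's proof: structural induction on $\mathtt{t}$ with the same treatment of each case, including the auxiliary claim $\alpha((\csema{t_0})^n p) = (\asema{t_0})^n \alpha(p)$ for the Kleene star, relying on the CTC hypothesis, countable completeness of $A$, and join-preservation of $\alpha$. The only point you gloss over is in the base case, where passing from $A \comp \csema{a}(p) = A \comp \csema{a} \comp A(p)$ to $\alpha(\csema{a}\, p) = \alpha(\csema{a}\, A(p))$ uses injectivity of $\gamma$ (equivalently $\alpha \comp \gamma = \mathrm{id}$), which the paper cites explicitly via (\ref{eqn:gamma_inject}); this is a trivially fillable step.
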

\begin{proof}
	The proof is on the structure of $\mathtt{t}$:
	
	\vspace{5px} \noindent ($\mathtt{a} \in \Atom$)\,:
	\begin{align*}
		\asema{a} \comp \alpha &= \\
		\alpha \comp \csema{a} \comp \gamma \comp \alpha &= \textrm{[By $\mathbb{C}^A(\csema{a})$, (\ref{eqn:gamma_inject})]} \\
		\alpha \comp \csema{a}
	\end{align*}		

	\noindent ($\mathtt{t_1 + t_2}$)\,: Let us denote with $p$ a generic test.
	\begin{align*}
		\asema{t_1 + t_2} \alpha(p) &= \\
		\asema{t_1} \alpha(p) + \asema{t_2} \alpha(p) &= \textrm{[By induction]} \\
		\alpha(\csema{t_1} p) + \alpha(\csema{t_2} p) &= \textrm{[By additivity of $\alpha(\cdot)$,$\csema{\cdot}$]} \\
		\alpha(\csema{t_1 + t_2} p)
	\end{align*}
	
	\noindent ($\mathtt{t_1 \cdot t_2}$)\,: Let us denote with $p$ a generic test.
	\begin{align*}
		\asema{t_1 \cdot t_2} \alpha(p) &= \\
		\asema{t_2} \asema{t_1} \alpha(p) &= \textrm{[By induction]} \\
		\asema{t_2} \alpha(\csema{t_1} p) &= \textrm{[By induction]} \\
		\alpha(\csema{t_2} \csema{t_1} p) &= \textrm{[By (\ref{eqn:csema_mul})]} \\
		\alpha(\csema{t_1 \cdot t_2} p)
	\end{align*}

	\noindent ($\mathtt{t_0^*}$)\,: The result is a consequence of the following claim
		\begin{equation} \label{eqn:claim_3}
			\forall n \in \mathbb{N} . \; (\asema{t_0})^n \alpha(p) = \alpha((\csema{t_0})^n p)
		\end{equation}
		that can be proved by induction on $n$. The base case is ${(\asema{t_0})^0 \alpha(p) = \alpha(p) = \alpha((\csema{t_0})^0 p)}$. While the inductive case can be proved as follows
		\[ (\asema{t_0})^{n+1} \alpha(p) = \asema{t_0} (\asema{t_0})^n \alpha(p) = \asema{t_0} \alpha((\csema{t_0})^n p) = \alpha((\csema{t_0})^{n+1} p) \]
		Therefore,
		\begin{align*}
			\asema{t_0^*} \alpha(p) &= \\
			\disj (\asema{t_0})^n \alpha(p) &= \textrm{[By (\ref{eqn:claim_3})]} \\
			\disj \alpha((\csema{t_0})^n p) &= \textrm{[By additivity of $\alpha(\cdot)$]} \\
			\alpha(\disj (\csema{t_0})^n p) &= \textrm{[By (\ref{eqn:csema_star})]}\\
			\alpha(\csema{t_0^*} p) \tag*{\qed}
		\end{align*} 
\end{proof}

\begin{restatable}[Logical Completeness of $\vdash^{{\scriptscriptstyle\mathrm{K}}}_{A}$]{theorem}{thmlcompl} \label{thm:lcompl}
	Assume that conditions \eqref{add1} and \eqref{add2} hold. If $\:\vtri{p}{t}{q}\:$ then $\:\ptri{p}{t}{q}$.
\end{restatable}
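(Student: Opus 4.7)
The plan is to proceed by structural induction on $\mathtt{t} \in T_{\Sigma,B}$, proving as an intermediate claim that, under hypotheses~\eqref{add1}--\eqref{add2}, the \emph{canonical} triple $\ptri{p}{t}{\csema{t}p}$ is derivable for every test $p \in \Test(K)$. Given such a canonical derivation, the target $\ptri{p}{t}{q}$ follows by one application of (relax): the left side condition $p \leq p \leq A(p)$ is trivial, the right side decomposes as $q \leq \csema{t}p$ (validity clause~(\ref{val-cond1})) and $\csema{t}p \leq A(\csema{t}p) = \gamma(\alpha(\csema{t}p)) = \gamma(\alpha(q)) = A(q)$, where the last step uses validity clause~(\ref{val-cond2}).

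For the base case $\mathtt{a} \in \Sigma \cup B$, hypothesis~\eqref{add2} yields $\mathbb{C}^A(\csema{a})$, hence a fortiori $\mathbb{C}_p^A(\csema{a})$, and (transfer) immediately produces $\ptri{p}{a}{\csema{a}p}$. For $\mathtt{t_1+t_2}$, each subtriple $\vtri{p}{t_i}{\csema{t_i}p}$ is valid: clause~(\ref{val-cond1}) is reflexivity, and clause~(\ref{val-cond2}) is exactly Lemma~\ref{lemma:glob_compl} applied to $\mathtt{t_i}$, whose premise is inherited because $\Atom(\mathtt{t_i}) \subseteq \Atom(\mathtt{t})$. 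The induction hypothesis then supplies $\ptri{p}{t_i}{\csema{t_i}p}$ for $i=1,2$, and (join) combined with additivity~\eqref{eqn:csema_add_cmd} closes the case. The sequence $\mathtt{t_1 \cdot t_2}$ is analogous: setting $r \ud \csema{t_1}p$, induction yields $\ptri{p}{t_1}{r}$ and $\ptri{r}{t_2}{\csema{t_2}r}$, and (seq) together with~\eqref{eqn:csema_mul} delivers $\ptri{p}{t_1 \cdot t_2}{\csema{t_1 \cdot t_2}p}$.

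The main obstacle, and the only place where hypothesis~\eqref{add1} is essential, is the Kleene star case $\mathtt{t_0^*}$. I would build the chain $p_0 \ud p$ and $p_{n+1} \ud \csema{t_0}p_n$, so that $p_n = (\csema{t_0})^n p$, and verify that $\vtri{p_n}{t_0}{p_{n+1}}$ is valid for every $n \in \mathbb{N}$: clause~(\ref{val-cond1}) is trivial by construction, and clause~(\ref{val-cond2}) follows from Lemma~\ref{lemma:glob_compl} at precondition $p_n$ since $\Atom(\mathtt{t_0}) \subseteq \Atom(\mathtt{t_0^*})$. Structural induction then yields the infinite family $\{\ptri{p_n}{t_0}{p_{n+1}}\}_{n \in \mathbb{N}}$, to which (limit) applies, producing $\ptri{p_0}{t_0^*}{\disj p_n}$. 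A final appeal to equation~\eqref{eqn:csema_star} identifies $\disj p_n$ with $\csema{t_0^*}p$, giving the canonical triple $\ptri{p}{t_0^*}{\csema{t_0^*}p}$. The CTC requirement on $K$ is precisely what guarantees that the lub $\disj p_n$ exists inside the algebra, so \eqref{add1}, \eqref{add2} and countable-test-completeness conspire exactly at this step.
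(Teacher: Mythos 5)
Your proposal is correct and follows essentially the same route as the paper's proof: first derive the canonical triple $\ptri{p}{t}{\csema{t}p}$ by structural induction, using (transfer) with global completeness of atoms, Lemma~\ref{lemma:glob_compl} plus (join)/(seq) for sum and product, and the chain $p_n = (\csema{t_0})^n p$ with (limit) and \eqref{eqn:csema_star} for the star; then conclude with one application of (relax) using validity clauses~(\ref{val-cond1}) and~(\ref{val-cond2}). The side-condition check $\csema{t}p \leq A(q)$ via $\alpha(\csema{t}p)=\alpha(q)$ is exactly the paper's second part, so no gap remains.
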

\begin{proof}
	The proof is organized in two parts. In the first part we prove that $\ptri{p}{t}{\csema{t} p}$. In the second one instead we prove
	\begin{prooftree}
		\AxiomC{$\ptri{p}{t}{\csema{t} p}$}
		\UnaryInfC{$\ptri{p}{t}{q}$}
	\end{prooftree}

	\noindent 1) The proof is by induction on the structure of $\mathtt{t} \in T_{\Sigma, B}$:
	
	\vspace{5px} \noindent ($\mathtt{a} \in \Atom$)\,: By the global completeness hypothesis we have $\mathbb{C}^A_p(\csema{a})$, hence (transfer) yields $\ptri{p}{a}{\csema{a}p}$.
	
	\vspace{5px} \noindent ($\mathtt{t_1 + t_2}$)\,: Let $q_i = \csema{t_i} p$, and by Lemma \ref{lemma:glob_compl} $\asema{t_i} \alpha(p) = \alpha(q_i)$, which means $\vtri{p}{t_i}{q_i}$, thus by induction $\ptri{p}{t_i}{q_i}$. Notice that by additivity of $\csema{\cdot}$, $q_1 + q_2 = \csema{t_1} p + \csema{t_2} p = \csema{t_1 + t_2} p$. Finally, (join) yields the result ${\ptri{p}{t_1 + t_2}{\csema{t_1 + t_2} p}}$
	
	\vspace{5px} \noindent ($\mathtt{t_1 \cdot t_2}$)\,: Let $q_1 = \csema{t_1} p$ and $q_2 = \csema{t_2} q_1$. By Lemma \ref{lemma:glob_compl} $\asema{t_1} \alpha(p) = \alpha(\csema{t_1} p)$ and $\asema{t_2} \alpha(q_1) = \alpha(\csema{t_2} q_1)$, which means $\vtri{p}{t_1}{q_1}$ and $\vtri{q_1}{t_2}{q_2}$, and by induction $\ptri{p}{t_1}{q_1}$ and $\ptri{q_1}{t_2}{q_2}$. Finally, (seq) yields the result $\ptri{p}{t_1 \cdot t_2}{\csema{t_1 \cdot t_2} p}$, where by (\ref{eqn:csema_mul}) $q_2 = \csema{t_2} \csema{t_1} p = \csema{t_1 \cdot t_2} p$.
	
	\vspace{5px} \noindent ($\mathtt{t_0^*}$)\,: Let $p_n = (\csema{t_0})^n p$. As preliminary step let us prove that $\vtri{p_n}{t_0}{p_{n+1}}$ for any $n$. (\ref{lsound_1}) holds because $p_{n+1} = \csema{t_0} p_n$, while (\ref{lsound_2}) is immediate by Lemma \ref{lemma:glob_compl}, $\asema{t_0} \alpha(p_n) = \alpha(\csema{t_0} p_n) = \alpha(p_{n+1})$. By induction we have $\forall n \in \mathbb{N} . \; \ptri{p_n}{t_0}{p_{n+1}}$, thus (limit) yields the result $\ptri{p}{t_0^*}{\csema{t_0^*} p}$ (notice that $p = (\csema{t_0})^0 p = p_0$ and by (\ref{eqn:csema_star}) $\csema{t_0^*} p = \disj (\csema{t_0})^n p = \disj p_n$).
	
	\vspace{5px} \noindent 2) By hypothesis $\vtri{p}{t}{q}$, thus $q \leq \csema{t} p$ and also $A(q) = A(\csema{t} p)$ so that we can use (relax) to obtain the result: $\ptri{p}{t}{q}$. \qed
\end{proof}

Summing up, this shows that the local completeness logic $\LCL$ introduced 
in \cite{bruni_logic_2021} can be made \emph{fully algebraic} by means of a natural interpretation 
on modal KATs with a backward diamond operator, still preserving its logical soundness and completeness, which are proved by
using just the algebraic axioms of this class of KATs. Hence, this shows that there is no need to leverage particular semantic 
properties of programs to reason on their local completeness.

\subsection{An Example of a Language-Theoretic KAT}\label{ltKAT-subsec}
To give an example digressing from programs and showing the generality of the KAT-based approach, we describe a language-theoretic model of Kleene algebra, early introduced by Kozen and Smith~\cite[Section 3]{goos_kleene_1997}.

Let $\Sigma = \{\mathtt{u}\}$ and $B = \{\mathtt{b}_{1}, \mathtt{b}_{2}\}$ be, resp., the sets of primitive actions and tests. An \emph{atom} is a string $c_1 c_2$, where $c_i \in \{\mathtt{b}_{i}, \overline{\mathtt{b}}_i\}$. If $c_i = \mathtt{b}_{i}$, where $i\in \{1,2\}$, then $\mathtt{b}_{i}$ appears positively in the atom $c_1 c_2$, while if $c_i = \mathtt{\overline{b}}_i$ it appears negatively. A \emph{guarded string} is 
either a single atom or 
a string $\alpha_0 \mathtt{a}_{1} \alpha_1 ... \mathtt{a}_{n} \alpha_n$, where $\alpha_i$ are atoms and $\mathtt{a}_i \in \Sigma$. If we are only interested in the first (last) atom of a guarded string $\alpha \mathtt{a}_{1} \alpha_1 ... \mathtt{a}_{n} \beta$, we may refer to it through the syntax $\alpha x$ ($x \beta$). 
Concatenation of guarded strings is given by a coalesced product operation $\diamond$, which is partially defined as follows:
\begin{equation*}
	x \alpha \diamond \beta y \triangleq
    \begin{cases*}
      x \alpha y & if $\alpha = \beta$ \\
      \textrm{undefined}        & otherwise
    \end{cases*}
\end{equation*}
The elements of this KAT $\mathcal{G}$ are sets of guarded strings. Thus, $+$ is set 
union, the product is defined as pointwise coalesced product:
\[ A \cdot B \ud \{ x \diamond y \mid x \in A, y \in B \}, \]
while the Kleene iteration is:
\( A^* \ud \bigcup_{n \in \mathbb{N}} A^n \).
The product identity  corresponds to the whole set of atoms 
$1_\mathcal{G}\ud\{\mathtt{b}_1 \mathtt{b}_2, \mathtt{b}_1 \overline{\mathtt{b}}_2, 
\overline{\mathtt{b}}_1 \mathtt{b}_2, \overline{\mathtt{b}}_1 \overline{\mathtt{b}}_2\}$, while
$0_\mathcal{G}$ is the empty set. The set of tests is $\Test(\mathcal{G})\ud \wp(1_\mathcal{G})$.

It turns out that $\mathcal{G}$ is a bdKAT, whose backward diamond is as follows: for all $a\in \cG$ and $p\in \Test(\cG)$, 
\begin{equation} \label{eqn:bdia_example}
	\bdia{a} p = \{ \beta \mid x \beta \in pa \}.
\end{equation}
\begin{proof}
	Let $r = \{ \beta \mid x \beta \in pa \}$. By Lemma~\ref{prop:bdia_char}, it is enough to show that $r$ is the least $q \in \Test(\mathcal{G})$ satisfying $pa = paq$.
		
	\noindent
	Let $x \beta \in pa$. By definition, $\beta \in r$ means that $x \beta \diamond \beta = x \beta$ is contained in $par$. This therefore means that $pa \leq par$. The opposite inequality is trivial since $r$ is a test, hence $r \leq 1_{\mathcal{G}}$, and by monotonicity of $\cdot$, we have that $pa \geq par$, thus implying $pa = par$.

	\noindent
	Assume now, by contradiction, that there exists $t \in \Test(\mathcal{G})$ such that $pa = pat$, $t \leq r$ and 
	$t\neq r$. This means that there is at least an atom $\beta$ in $r$ which is not in $t$. By definition of $r$, there is a guarded string $x \beta \in pa$. Since $pa = pat$, the last atom of all the guarded strings in $pa$ must be in $t$, but this does not hold for $x \beta$ as $\beta \notin t$. \qed
\end{proof}

\smallskip
Let us consider the evaluation function $G : \Atom \ra \mathcal{G}$ as defined in~\cite{goos_kleene_1997}:
\begin{align*}
	G(\mathtt{a}) &\triangleq \{ \alpha \mathtt{a} \beta \mid \alpha, \beta \in 1_\mathcal{G} \}, \\
	G(\mathtt{b}) &\triangleq \{ \alpha \in 1_\mathcal{G} \mid \mathtt{b} \textrm{ appears positively in } \alpha \}.
\end{align*}

We consider the abstract domain $A \ud \{ \top, e, o, \bot \}$ determined by the following abstraction 
$\alpha:\Test(\cG)\ra A$
and concretization $\gamma: A\ra \Test(\cG)$ maps:
\begin{align*}
	\alpha(p) \triangleq
	\begin{cases*}
		\bot & if $p = \emptyset$ \\
		e & if $\emptyset \subsetneq p \subseteq \{\mathtt{b}_{1} \mathtt{b}_2, \overline{\mathtt{b}}_1 \overline{\mathtt{b}}_2\}$ \\
		o & if $\emptyset \subsetneq p \subseteq \{\overline{\mathtt{b}}_1 \mathtt{b}_2, \mathtt{b}_1 \overline{\mathtt{b}}_2\}$ \\
		\top & otherwise
	\end{cases*}
	&&
	\gamma(p^\sharp) \triangleq
	\begin{cases*}
		\emptyset & if $p^\sharp = \bot$ \\
		 \{\mathtt{b}_{1} \mathtt{b}_2, \overline{\mathtt{b}}_1 \overline{\mathtt{b}}_2\} & if $p^\sharp = e$ \\
		 \{\overline{\mathtt{b}}_1 \mathtt{b}_2, \mathtt{b}_{1} \overline{\mathtt{b}}_2\} & if $p^\sharp = o$ \\
		 1_\mathcal{G} & if $p^\sharp = \top$
	\end{cases*}
\end{align*}
By counting, in an atom, the number of primitive tests that appear positively we obtain an integer that may be even or odd. Hence, 
this abstract domain $A$ represents the property of being even $e$ or odd $o$ of all the atoms occurring in a test $p\in \Test(\cG)$. 

By using our logic $\LCK$, we study the correctness of the program $\mathtt{r} \ud \mathtt{(u \cdot b_1)^*}\in \cG$, 
assuming a precondition $p \ud \{\mathtt{b}_1 \mathtt{b}_2, \overline{\mathtt{b}}_1\overline{\mathtt{b}}_2\}\in \Test(\cG)$ and a specification $\Spec \ud p = \gamma(e)$. Let us define two auxiliary tests: $q \ud \{\mathtt{b}_1 \mathtt{b}_2, \mathtt{b}_1 \overline{\mathtt{b}}_2 \}$, $s \ud \{ \mathtt{b}_1 \mathtt{b}_2, \mathtt{b}_1 \overline{\mathtt{b}}_2 , \overline{\mathtt{b}}_1\overline{\mathtt{b}}_2 \}$. Using the equation~\eqref{eqn:bdia_example}, we can easily check the following 
local completeness equations: 
\begin{align*}
	&\alpha(\csema{u} A(s)) = \alpha(\csema{u} 1_\mathcal{G}) = \alpha(1_\mathcal{G}) = \top = \alpha(1_\mathcal{G}) = \alpha(\csema{u} s) && \Rightarrow\; \mathbb{C}^A_s(\csema{u}) \\
	&\alpha(\csema{u} A(p)) = \alpha(\csema{u} p) = \alpha(1_\mathcal{G}) = \top = \alpha(1_\mathcal{G}) = \alpha(\csema{u} p) && \Rightarrow\; \mathbb{C}^A_p(\csema{u}) \\
	&\alpha(\csema{b_{\text{$1$}}} A(1_\mathcal{G})) = \alpha(\csema{b_{\text{$1$}}} 1_\mathcal{G}) = \alpha(q) = \top = \alpha(q) = \alpha(\csema{b_{\text{$1$}}} 1_\mathcal{G}) && \Rightarrow\; \mathbb{C}^A_{1_\mathcal{G}}(\csema{b_{\text{$1$}}})
\end{align*}
Therefore, we have  the following derivation in $\LCK_A$ of the triple $[p] \;\mathtt{r}\; [s]$:
\begin{scprooftree}{0.65}
		\AxiomC{$\mathbb{C}^A_p(\csema{u})$}		
		\RightLabel{(transfer)}
		\UnaryInfC{$\ptri{p}{u}{1_\mathcal{G}}$}
		\AxiomC{$\mathbb{C}^A_{1_\mathcal{G}}(\csema{b_{\text{$1$}}})$}		
		\RightLabel{(transfer)}
		\UnaryInfC{$\ptri{1_\mathcal{G}}{b_{\text{$1$}}}{q}$}
		\RightLabel{(seq)}
		\BinaryInfC{$\ptri{p}{u \cdot b_{\text{$1$}}}{q}$}
		\AxiomC{$\mathbb{C}^A_s(\csema{u})$}		
		\RightLabel{(transfer)}
		\UnaryInfC{$\ptri{s}{u}{1_\mathcal{G}}$}
		\AxiomC{$\mathbb{C}^A_{1_\mathcal{G}}(\csema{b_{\text{$1$}}})$}		
		\RightLabel{(transfer)}
		\UnaryInfC{$\ptri{1_\mathcal{G}}{b_{\text{$1$}}}{q}$}
		\RightLabel{(seq)}
		\BinaryInfC{$\ptri{s}{u \cdot b_{\text{$1$}}}{q}$}
		\AxiomC{$\!\!\!\!\!\!\!\!\! q \leq A(s)$}
		\RightLabel{(iterate)}
		\BinaryInfC{$\ptri{s}{(u \cdot b_{\text{$1$}})^*}{s}$}
		\RightLabel{(rec)}
		\BinaryInfC{$\ptri{p}{(u \cdot b_{\text{$1$}})^*}{s}$}
\end{scprooftree}

\medskip
Here, in accordance with the soundness Theorem~\ref{thm:lsound}, we have that $s \subseteq \csema{\mathtt{r}} p \subseteq A(s)$. Observe that $A(s) \nsubseteq \Spec$ holds, meaning that an abstract interpretation-based analysis fails to prove that 
the program $\mathtt{r}$ is correct for $\Spec$. However, unlike conventional abstract interpretation, $\LCK_A$ is capable to 
show that $s \smallsetminus \Spec = \{ \mathtt{b}_1 \overline{\mathtt{b}}_2 \}$ is indeed a true alert, meaning that the program $\mathtt{r}$ is really incorrect and the failure to prove its correctness was not due to a false alarm.

\subsection{Under-Approximation Logic}\label{sec:ual}
\begin{figure}[t]
	\centering
    \begin{framed}
    \vspace*{-5pt}
	\begin{minipage}{\textwidth}
		\begin{minipage}{\textwidth}
			\begin{prooftree}
				\AxiomC{$\mathtt{a} \in \Atom$}
				\RightLabel{(transfer)}
				\UnaryInfC{$\ptriul{p}{a}{\csema{a} p}$}
			\end{prooftree}
		\end{minipage}
	\end{minipage}
    \\[3pt]
	\begin{minipage}{\textwidth}
		\begin{minipage}{0.39\textwidth}
			\begin{prooftree}
			\AxiomC{\phantom{$\ptriul{p'}{t}{q'}$}} 
			\RightLabel{(empty)}
			\UnaryInfC{$\ptriul{p}{t}{0}$}
	         \end{prooftree}
		\end{minipage}
		\hfill
		\begin{minipage}{0.61\textwidth}
			\begin{prooftree}
				\AxiomC{$p' \leq p \qquad \ptriul{p'}{t}{q'} \qquad q \leq q'$}
				\RightLabel{(consequence)}
				\UnaryInfC{$\ptriul{p}{t}{q}$}
			\end{prooftree}
		\end{minipage}
	\end{minipage}
	\\[3pt]
	\begin{minipage}{\textwidth}
		\begin{minipage}{0.46\textwidth}
			\begin{prooftree}
				\AxiomC{$\ptriul{p_1}{t}{q_1}$}
				\AxiomC{$\ptriul{p_2}{t}{q_2}$}
				\RightLabel{(disj)}
				\BinaryInfC{$\ptriul{p_1 + p_2}{t}{q_1 + q_2}$}
			\end{prooftree}
		\end{minipage}
		\hfill
		\begin{minipage}{0.59\textwidth}
			\begin{prooftree}
				\AxiomC{$\ptriul{p}{t_{\text{$1$}}}{r}$}
				\AxiomC{$\ptriul{r}{t_{\text{$2$}}}{q}$}
				\RightLabel{(seq)}
				\BinaryInfC{$\ptriul{p}{t_{\text{$1$}} \cdot t_{\text{$2$}}}{q}$}
			\end{prooftree}
		\end{minipage}
	\end{minipage}
	\\[3pt]
	\begin{minipage}{\textwidth}
		\begin{minipage}{0.43\textwidth}
		\begin{prooftree}
			\AxiomC{\phantom{$\ptriul{p'}{t}{q'}$}} 
			\RightLabel{(iterate zero)}
			\UnaryInfC{$\ptriul{p}{t^*}{p}$}
	         \end{prooftree}
		\end{minipage}
		\hfill
		\begin{minipage}{0.59\textwidth}
			\begin{prooftree}
				\AxiomC{$\ptriul{p}{t^* \cdot t}{q}$}
				\RightLabel{(iterate non-zero)}
				\UnaryInfC{$\ptriul{p}{t^*}{q}$}
			\end{prooftree}
		\end{minipage}
	\end{minipage}
	\\[3pt]
	\begin{minipage}{\textwidth}
		\begin{minipage}{0.43\textwidth}
			\begin{prooftree}
				\AxiomC{$\forall n \in \mathbb{N} . \; \ptriul{p_n}{t}{p_{n+1}}$}
				\RightLabel{(back-v)}
				\UnaryInfC{$\ptriul{p_0}{t^*}{\disj p_n}$}
			\end{prooftree}
		\end{minipage}
		\hfill
		\begin{minipage}{0.59\textwidth}
			\begin{prooftree}
				\AxiomC{$\ptriul{p}{t_{\text{$i$}}}{q} \textrm{, with $i=1$ or $i=2$}$}
				\RightLabel{(choice)}
				\UnaryInfC{$\ptriul{p}{t_{\text{$1$}} + t_{\text{$2$}}}{q}$}
			\end{prooftree}
		\end{minipage}
	\end{minipage}
	\end{framed}
	\caption{Proof System $\UL$.}
	\label{fig:ul}
\end{figure}

O'Hearn~\cite{ohearn_incorrectness_2020} incorrectness logic ($\IL$) 
establishes two main novelties w.r.t.\ the seminal Hoare logic of program correctness~\cite{hoare_axiomatic_1969}:
	(1)~a valid  postcondition of an incorrectness triple for a program $P$  
	is an \emph{under-approxi\-mation} of the strongest postcondition of $P$, rather than an over-approximation of Hoare logic;
	(2)~incorrectness triples feature two postconditions: one corresponding to a ``normal'' program termination and 
	one corresponding to an erroneous termination.
Even if $\IL$ was originally defined with both those features, we first neglect the second one~---~i.e., we 
consider ``normal'' termination only~---~and we refer to the resulting program logic as Under-approximation Logic, denoted
by $\UL$. For the sake of clarity,  
Figure~\ref{fig:ul} recalls the $\UL$ proof system, adapted to our algebraic framework. We only focus on the 
``propositional'' fragment of this logic, meaning that the roles of all the special program commands (i.e., $\mathtt{error}$, $\mathtt{assume}$, $\mathtt{skip}$, $\mathtt{nondet}$ used in \cite{ohearn_incorrectness_2020}) and variable manipulations commands of incorrectness logic 
are played by some corresponding elements in $\Atom$. Hence, for all of them, 
the single rule (transfer) is unifying and enough for our purposes. 

Analogously to what has been proved in \cite[Section 6]{bruni_logic_2021} for $\LCL$, it turns out 
that the trivial abstraction, i.e., the abstract domain $A_{tr} \ud \{ \top \}$ that approximates  all the concrete elements to a single abstract element $\top$, allows us to show that  the instantiation $\LCK_{A_{tr}}$, with the additional rule (limit), boils down 
to $\UL$, namely, our $\LCK$ logic generalizes $\UL$, even when both are interpreted on KATs. 

\begin{restatable}[$\LCK_{A_{tr}} \equiv \UL$]{theorem}{thmlclulequiv} \label{thm:lcl_ul_equiv}
Let $K$ be a CTC bdKAT. Assume that $\LCK_{A_{tr}}$ includes the rule \mbox{\rm\text{(limit)}}.
	For any $p,q \in \Test(K)$ and $\mathtt{t} \in T_{\Sigma, B}$:
	\[ \ptritr{p}{t}{q} \quad \Leftrightarrow \quad \ptriul{p}{t}{q}. \]
\end{restatable}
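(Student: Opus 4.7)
The plan is to show that both $\LCK_{A_{tr}}$ (augmented with (limit)) and $\UL$ prove exactly those triples $\tri{p}{t}{q}$ for which $q \leq_K \csema{t}^K p$; the equivalence then follows immediately from this common semantic characterization.

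First I would handle the $\LCK_{A_{tr}}$ side. For the trivial abstraction $A_{tr} = \{\top\}$, the abstraction map $\alpha$ is constantly $\top$ and $\gamma(\top) = 1$, so condition~\eqref{val-cond2} of Definition~\ref{def:validity} collapses to the tautology $\top = \top = \top$. Hence $\vtritr{p}{t}{q}$ is equivalent to $q \leq_K \csema{t}^K p$. Combining Theorem~\ref{thm:lsound} with Lemma~\ref{lemma:lsound_limit} already delivers the forward direction. For the converse, note that hypothesis~\eqref{add2} of Theorem~\ref{thm:lcompl}~---~global completeness of primitive actions for $A_{tr}$~---~is vacuously satisfied (both sides equal $1$), so Theorem~\ref{thm:lcompl} applies directly.

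Next I would prove the analogous characterization $\ptriul{p}{t}{q} \Leftrightarrow q \leq_K \csema{t}^K p$ for $\UL$. Soundness is by a straightforward induction on the derivation, relying on \eqref{eqn:csema_isotony}--\eqref{eqn:csema_star_unfold}: the only mildly interesting steps are (iterate non-zero), which exploits $\bdia{\intrp{t}\cdot\intrp{t^*}} p \leq \bdia{\intrp{t^*}} p$, and (back-v), which needs an auxiliary induction on $n$ showing $p_n \leq \csema{t^n} p_0 \leq \csema{t^*} p_0$.

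Completeness of $\UL$ is by induction on the structure of $\mathtt{t}$. For an atom, combine (transfer) with (consequence). For $\mathtt{t_1 + t_2}$, exploit the Boolean algebra of tests and distributivity to decompose $q = q \cdot \csema{t_1} p + q \cdot \csema{t_2} p$, apply the IH to each summand $q_i \triangleq q \cdot \csema{t_i} p$, use (choice) twice to unify the program to $\mathtt{t_1 + t_2}$, and then (disj) with $p_1 = p_2 = p$ to unify the precondition. For $\mathtt{t_1 \cdot t_2}$, pick $r \triangleq \csema{t_1}^K p$ and apply (seq). The main obstacle is the Kleene star case $\mathtt{t_0^*}$: setting $p_n \triangleq (\csema{t_0}^K)^n p$, the IH on $t_0$ delivers $\ptriul{p_n}{t_0}{p_{n+1}}$ for every $n \in \mathbb{N}$, after which the infinitary rule (back-v) produces $\ptriul{p_0}{t_0^*}{\disj p_n}$, which by~\eqref{eqn:csema_star} equals $\ptriul{p}{t_0^*}{\csema{t_0^*}^K p}$; a final use of (consequence) weakens the postcondition to $q$. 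The CTC hypothesis on $K$ is precisely what makes this countable lub exist. Stringing the two characterizations together yields the claimed equivalence.
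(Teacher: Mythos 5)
Your proposal is correct, but it reaches the equivalence by a partly different route than the paper. The paper argues proof-theoretically: it sets up a rule-by-rule correspondence (transfer/transfer, relax/consequence, seq/seq, limit/back-v), and for the $\Rightarrow$ direction it first invokes soundness and completeness of $\LCK_{A_{tr}}$ to replace the given derivation by one that avoids (rec) and (iterate) (which have no $\UL$ counterpart), simulating (join) via (choice)+(disj); for the $\Leftarrow$ direction it checks, for each remaining $\UL$ rule, that the concluded triple satisfies condition (i) of validity (condition (ii) being trivial for $A_{tr}$) and then applies completeness of $\LCK_{A_{tr}}$. You instead prove that both provability relations coincide with the single semantic condition $q \leq_K \csema{t}\, p$: on the $\LCK_{A_{tr}}$ side this is exactly Theorem~\ref{thm:lsound}, Lemma~\ref{lemma:lsound_limit} and Theorem~\ref{thm:lcompl} (with hypothesis~\eqref{add2} vacuous for $A_{tr}$), which matches the paper; on the $\UL$ side you supply a direct soundness proof by induction on derivations and a direct completeness proof by structural induction on $\mathtt{t}$ (using (back-v) with $p_n = (\csema{t_0})^n p$ and the CTC hypothesis). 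This second ingredient is genuinely new relative to the paper's proof: the paper never proves $\UL$ completeness directly, but obtains it afterwards as Corollary~\ref{cor:ul_snd_compl} from the equivalence, whereas your argument establishes it independently and then gets the equivalence for free, avoiding the normalization step for (rec)/(iterate) altogether (and with no circularity, since you do not use the corollary). The trade-off is that the paper's translation exhibits an explicit syntactic correspondence between the two proof systems, while your route is shorter on the forward direction but duplicates, inside the $\UL$ completeness argument, work the paper already did once in Theorem~\ref{thm:lcompl}. One cosmetic slip: in the (iterate non-zero) soundness step the relevant product is $\intrp{t^*}\cdot\intrp{t}$, since by \eqref{eqn:bdia2} one has $\csema{t^*\cdot t}\,p = \bdia{\intrp{t^*}\intrp{t}}\,p$; the inequality you wrote concerns $\intrp{t}\cdot\intrp{t^*}$, but both follow from \eqref{eqn:star_unfold} and \eqref{eqn:bdia_isotony_cmd}, so the argument is unaffected.
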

\begin{proof}
	Notice that in the $\LCK_{A_{tr}}$ proof system, the premise $\mathbb{C}^{A_{tr}}_p(\csema{a})$ of (transfer) is always true because $A_{tr} \comp \csema{a} p = 1 = A_{tr} \comp \csema{a} \comp A_{tr}(p)$ and similarly the conditions $p \leq A_{tr}(p')$ and $q' \leq A_{tr}(q')$ of (relax) are always true because $A_{tr}(p) = 1$ which is the greatest element of $\Test(K)$.
	
	Moreover, there is a strong correspondence between the two proof systems: in particular the rules of Table \ref{tbl:rules_equiv} coincide.
	\begin{table}[t]
		\centering
		\begin{tabular}{|c|c|}
			\hline
			$\LCK_{A_{Tr}}$ & $\UL$ \\
			\hline \hline
			transfer & transfer \\
			relax & consequence \\
			seq & seq \\
			limit & back-v \\
			\hline 
		\end{tabular}
		\caption{Correspondence between the rules of $\LCK_{A_{tr}}$ and $\UL$.}
		\label{tbl:rules_equiv}
	\end{table}
	
	\noindent $\Rightarrow)$ As preliminary step, notice that derivation trees built by the proof of Theorem \ref{thm:lcompl} do not require the rules (rec) and (iterate), thus by soundness and completeness we can build a derivation tree of $\ptritr{p}{t}{q}$ which does not contain such rules.
	
	 The proof is by induction on the derivation tree of $\ptritr{p}{t}{q}$. The rules of Table \ref{tbl:rules_equiv} are immediate. The only rule left to prove is (join). By induction we have $\ptriul{p}{t_1}{q_1}$ and $\ptriul{p}{t_2}{q_2}$ and we need to show $\ptriul{p}{t_1 + t_2}{q_1 + q_2}$:
	\begin{prooftree}
		\AxiomC{$\ptriul{p}{t_1}{q_1}$}
		\RightLabel{(choice)}
		\UnaryInfC{$\ptriul{p}{t_1 + t_2}{q_1}$}
		\AxiomC{$\ptriul{p}{t_2}{q_2}$}
		\RightLabel{(choice)}
		\UnaryInfC{$\ptriul{p}{t_1 + t_2}{q_2}$}
		\RightLabel{(disj)}
		\BinaryInfC{$\ptriul{p}{t_1 + t_2}{q_1 + q_2}$}
	\end{prooftree}
	 
	\noindent $\Leftarrow)$ The proof is by induction on the derivation tree of $\ptriul{p}{t}{q}$. As in the previous case, the rules of Table \ref{tbl:rules_equiv} are immediate. The rules that are left to prove are (empty),(choice),(disj),(iterate zero) and (iterate non-zero).
	
	In order to carry out the derivation in those cases we exploit the completeness of $\LCK$. To do so we only need to verify condition (\ref{lsound_1}), because (\ref{lsound_2}) is trivially true: $\asema{t} \alpha(p) = \top = \alpha(q) = \top = \alpha(\csema{t} p)$ for any $p,q \in \Test(K)$ and $\mathtt{t} \in T_{\Sigma, B}$.

	\vspace{5px} \noindent (empty)\,: $\vtritr{p}{t}{0}$ is trivial since $0 \leq \csema{t} p$ is always true, therefore by completeness of $\LCK_{A_{tr}}$, $\ptritr{p}{t}{0}$.
	
	\vspace{5px} \noindent (choice)\,: By induction we have, without loss of generality, $\ptritr{p}{t_1}{q}$ and we need to show $\ptritr{p}{t_1 + t_2}{q}$. Notice that for any $\mathtt{t_2} \in T_{\Sigma, B}$, $\vtritr{p}{t_2}{0}$ because (\ref{lsound_1}) is trivially true: $0 \leq \csema{t_2} p$, therefore by completeness of $\LCK_{A_{tr}}$ $\ptritr{p}{t_2}{0}$. Finally, (join) yields the result $\ptritr{p}{t_1 + t_2}{q}$.
		
	\vspace{5px} \noindent (disj)\,: By induction we have $\ptritr{p_1}{t}{q_1}$ and $\ptritr{p_2}{t}{q_2}$ and we need to show $\ptritr{p_1 + p_2}{t}{q_1 + q_2}$. By soundness we have $\vtritr{p_1}{t}{q_1}$ and $\vtritr{p_2}{t}{q_2}$, meaning $q_1 \leq \csema{t} p_1$ and $q_2 \leq \csema{t} p_2$, so that $q_1 + q_2 \leq \csema{t} (p_1 + p_2)$. Finally, by completeness we retrieve the result: $\ptritr{p_1 + p_2}{t}{q_1 + q_2}$.
	
	\vspace{5px} \noindent (iterate zero)\,: By (\ref{eqn:csema_star_unfold}) it holds $\csema{t^*} p \geq p$ which means $\vtritr{p}{t^*}{p}$, and by completeness we obtain the result $\ptritr{p}{t^*}{p}$.
	
	\vspace{5px} \noindent (iterate non-zero)\,: By (\ref{eqn:csema_star_unfold}) it holds $\csema{t^*} p \geq \csema{t^* \cdot t} p$. By hypothesis $\ptritr{p}{t^* \cdot t}{q}$, hence by soundness $q \leq \csema{t^* \cdot t} p$. Combining the two we yields $q \leq \csema{t^*} p$, which means $\vtritr{p}{t^*}{q}$ and by completeness $\ptritr{p}{t^*}{q}$. \qed
\end{proof}

Moreover, since the abstraction map defining $A_{tr}$ is $\alpha_{A_{tr}}= \lambda x.\top$, 
we have that condition~\eqref{val-cond2} of  Definition~\ref{def:validity} for the validity of a  $\LCK$ triple 
trivially holds, that is, $\asema{t} \alpha(p) = \top = \alpha(q) = \alpha(\csema{t}^K p)$. 
This therefore entails that 
\begin{equation}\label{equiv3}
\vtritr{p}{t}{q}\quad\Leftrightarrow\quad \vtriul{p}{t}{q}.
\end{equation}

This allows us to retrieve the soundness and completeness results of incorrectness logic \cite{ohearn_incorrectness_2020} as a consequence of those for $\LCK$.

\begin{restatable}{corollary}{corulsndcompl} \label{cor:ul_snd_compl}
	Under the same hypotheses of Theorem \ref{thm:lcl_ul_equiv}, the proof system $\UL$ is sound and complete, that is,
	\(\ptriul{p}{t}{q} \; \Leftrightarrow \;\: \vtriul{p}{t}{q} \).
\end{restatable}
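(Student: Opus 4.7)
The plan is to chain together three equivalences that have essentially already been established in the excerpt, so the corollary will follow as a direct consequence rather than requiring a fresh inductive argument.

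First, I would invoke Theorem~\ref{thm:lcl_ul_equiv} to convert the syntactic $\UL$ derivability $\ptriul{p}{t}{q}$ into the $\LCK_{A_{tr}}$ derivability $\ptritr{p}{t}{q}$, both directions. Since the corollary is stated under the same hypotheses as Theorem~\ref{thm:lcl_ul_equiv}, this is immediate and requires no additional work.

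Next, I would bridge derivability to semantic validity using Theorems~\ref{thm:lsound} and~\ref{thm:lcompl} in the instance $A = A_{tr}$. Logical soundness gives $\ptritr{p}{t}{q} \Rightarrow \vtritr{p}{t}{q}$ unconditionally, while logical completeness gives the converse provided its two hypotheses hold. For hypothesis~\eqref{add1}, the inclusion of (limit) is already assumed in the statement of Theorem~\ref{thm:lcl_ul_equiv} and hence in the corollary. For hypothesis~\eqref{add2}, I need to verify that the concrete semantics of every primitive action/test is globally complete in $A_{tr}$; but since $\alpha_{A_{tr}} = \lambda x.\top$ and $A_{tr} = \gamma \circ \alpha$ is the constant $\lambda x.1$, for any $f:\Test(K)\to\Test(K)$ we have $A_{tr} \circ f = \lambda x.1 = A_{tr} \circ f \circ A_{tr}$, so $\mathbb{C}^{A_{tr}}(f)$ holds trivially. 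Hence completeness of $\LCK_{A_{tr}}$ applies without restriction.

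Finally, I would close the loop by invoking the equivalence~\eqref{equiv3}, which observes that validity clause~\eqref{val-cond2} collapses to the tautology $\top = \top = \top$ in the trivial abstraction, so $\vtritr{p}{t}{q}$ reduces to $q \leq \csema{t} p$, i.e., precisely $\vtriul{p}{t}{q}$. Chaining the three equivalences
\[
\ptriul{p}{t}{q} \;\Leftrightarrow\; \ptritr{p}{t}{q} \;\Leftrightarrow\; \vtritr{p}{t}{q} \;\Leftrightarrow\; \vtriul{p}{t}{q}
\]
yields the corollary. The only non-routine part of the plan is verifying that global completeness in~\eqref{add2} is vacuous for $A_{tr}$, which is a one-line calculation, so I do not anticipate any real obstacle.
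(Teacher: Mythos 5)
Your proposal is correct and matches the paper's proof essentially verbatim: the paper proves the corollary by chaining exactly the same three equivalences (Theorem~\ref{thm:lcl_ul_equiv}, then Theorems~\ref{thm:lsound} and~\ref{thm:lcompl} instantiated at $A_{tr}$, then the equivalence~\eqref{equiv3}). Your explicit check that hypothesis~\eqref{add2} is vacuous for $A_{tr}$ is a sound detail the paper leaves implicit, but it does not change the argument.
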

\begin{proof}
	\begin{align*}
		\ptriul{p}{t}{q} &\Leftrightarrow \qquad \textrm{[by Theorem \ref{thm:lcl_ul_equiv}]} \\
		\ptritr{p}{t}{q} &\Leftrightarrow \qquad \textrm{[by Theorems \ref{thm:lsound} and \ref{thm:lcompl}]} \\
		\vtritr{p}{t}{q} &\Leftrightarrow \qquad \textrm{[by \eqref{equiv3}]}\\
		\vtriul{p}{t}{q} & \tag*{\qed}
	\end{align*}
\end{proof}

\section{Incorrectness Logic in KAT} \label{sect:inc}

Incorrectness logic $\IL$ has been introduced by O'Hearn \cite{ohearn_incorrectness_2020} as a natural under-approxi\-mating 
counterpart of the pivotal Hoare correctness logic \cite{hoare_axiomatic_1969}, and quickly attracted a lot of research interest \cite{RaadBDO22,RaadBDDOV20,bugs2022,yan2022,Poskitt21}.
Incorrectness logic distinguishes two postconditions corresponding to normal and erroneous/abnormal program termination. Here, we generalize the algebraic formulation of our $\LCK$ logic to support abnormal termination.
We follow the approach of M{\"{o}}ller, O'Hearn and Hoare~\cite{fahrenberg_algebra_2021}, namely, each language term is interpreted as a pair of KAT elements which model the normal and abnormal execution. The evaluation function has type $u : \Atom \rightarrow (K \times K)$, while the interpretation function has type $\intrp{\cdot} : T_{\Sigma, B} \rightarrow (K \times K)$. As a shorthand $\intrp{\cdot}$ can be subscripted with \ok or \err to denote, resp., its first normal and second erroneous component. The definition is as follows:
\begin{equation}\label{def:il_intrp}
\begin{aligned}
	\intrp{a} &\triangleq u(\mathtt{a}) 
	\\
	\intrp{t_{\text{$1$}} + t_{\text{$2$}}} &\triangleq (\intrpok{t_{\text{$1$}}} + \intrpok{t_{\text{$2$}}}, \intrperr{t_{\text{$1$}}} + \intrperr{t_{\text{$2$}}}) \\
	\intrp{t_{\text{$1$}} \cdot t_{\text{$2$}}} &\triangleq (\intrpok{t_{\text{$1$}}} \cdot \intrpok{t_{\text{$2$}}}, \intrperr{t_{\text{$1$}}} + \intrpok{t_{\text{$1$}}} \cdot \intrperr{t_{\text{$2$}}}) \\
	\intrp{t^*} &\triangleq (\green{\llbracket \mathtt{t} \rrbracket_{u_{\ok}}^*}, \green{\llbracket \mathtt{t} \rrbracket_{u_{\ok}}^*} \cdot \intrperr{t})
\end{aligned}
\end{equation}
Following the original definition of $\IL$, the precondition encodes an \ok condition only, while the postcondition contains both an \ok and an \err component. Hence, the latter is given by a pair $(p,q)\in\Test(K) \times \Test(K)$, typically denoted by $\ilc{p}{q}$. The concrete semantics \( \csema{\cdot} : T_{\Sigma, B} \rightarrow (\Test(K) \rightarrow (\Test(K) \times \Test(K))) \) is
defined as
\[ \csema{t} p \triangleq \ilc{\bdia{\intrpok{t}}p}{\bdia{\intrperr{t}}p} \]
To refer to one of its components, $\csema{\cdot}$ can be subscripted with \ok or \err, e.g., $\csemaok{t} p$.

Like for the $\LCK$ formulation, the concrete semantic of $\LCIL$ leverages the backward-diamond operator, therefore it inherits its properties (e.g., isotony, additivity). Moreover, by (\ref{eqn:bdia_add_cmd}), (\ref{eqn:bdia2}) we have that
\begin{align}
	\label{eqn:csema_err_mul} \csemaerr{t_1 \cdot t_2} p &= \csemaerr{t_1} p + \csemaerr{t_2} \csemaok{t_1} p \\
	\label{eqn:csema_err_star} \csemaerr{t^*} p &= \csemaerr{t} \csemaok{t^*} p
\end{align}

Given a Kleene abstract domain $A$ on $K$, the corresponding abstract semantics 
\( \asema{\cdot} : T_{\Sigma, B} \rightarrow (A \rightarrow (A \times A)) \) is
defined as follows:
\begin{equation} \label{def:abs_sem_il}
\scalebox{0.99}
{$
\begin{aligned}
	\asema{a} p^\sharp &\triangleq \ilc{\alpha(\csemaok{a} \gamma(p^\sharp))}{\alpha(\csemaerr{a} \gamma(p^\sharp))} 
	\\
	\asema{t_{\text{$1$}} + t_{\text{$2$}}} p^\sharp &\triangleq \ilc{\asemaok{t_{\text{$1$}}} p^\sharp + \asemaok{t_{\text{$2$}}} p^\sharp}{\asemaerr{t_{\text{$1$}}} p^\sharp + \asemaerr{t_{\text{$2$}}} p^\sharp} \\
	\asema{t_{\text{$1$}} \cdot t_{\text{$2$}}} p^\sharp &\triangleq \ilc{\asemaok{t_{\text{$2$}}}(\asemaok{t_{\text{$1$}}} p^\sharp)}{\asemaerr{t_{\text{$1$}}} p^\sharp + \asemaerr{t_{\text{$2$}}}(\asemaok{t_{\text{$1$}}} p^\sharp)} \\
	\asema{t^*} p^\sharp &\triangleq \ilc{{\textstyle\disj} (\asemaok{t})^n p^\sharp}{\asemaerr{t} {\textstyle\disj}  (\asemaok{t})^n p^\sharp}
\end{aligned}
$}
\end{equation}

The \ok part coincides with the semantics of $\LCK$, while the \err component puts in place some differences. In particular, 
the composition exhibits a short-circuiting behavior, meaning that an error in the first command aborts the execution without executing the second one, while the Kleene star allows an error to occur after some error-free iterations. It is straightforward to check that this definition of abstract semantics is monotonic and sound.

The proof system $\LCK$ can be extended with incorrectness triples. In particular, a triple $\triokerr{p}{t}{q}{r}$ is valid 
if the standard validity conditions hold for both \ok and \err.
\begin{definitionrm}[Incorrectness Triple]
Let $K$ be a CTC bdKAT $K$ and $T_{\Sigma,B}$ be a language interpreted on $K$. An 
\emph{incorrectness triple} is either $\triok{p}{t}{q}$ or $\trierr{p}{t}{r}$, where $p,q,r \in \Test(K)$ and $\mathtt{t} \in T_{\Sigma,B}$.

\noindent
Let $A$ be a Kleene abstract domain on $K$ with abstraction map $\alpha:\Test(K)\ra A$.   

\begin{itemize}
\item The triple $\triok{p}{t}{q}$ is valid if: (1)~$q \leq \csemaok{t} p$, and (2)~$\asemaok{t} \alpha(p) = \alpha(q) = \alpha(\csemaok{t} p)$.
\item The triple $\trierr{p}{t}{r}$ is valid if: (1)~$r \leq \csemaerr{t} p$, and 
(2)~$\asemaerr{t} \alpha(p) = \alpha(r) = \alpha(\csemaerr{t} p)$.
\item
A triple $\triokerr{p}{t}{q}{r}$ is valid when both $\triok{p}{t}{q}$ and $\trierr{p}{t}{r}$ are valid. 
In particular, if $q = r$ then the triple $\triboth{p}{t}{q}$ is valid. \qed
\end{itemize}
\end{definitionrm}

The proof system $\LCIL_A$ defining the local completeness  incorrectness logic is given in
Figure \ref{fig:ilcl}. 

\begin{figure}[t]
	\centering
	\begin{framed}
	    \vspace*{-5pt}
	\begin{minipage}{\textwidth}
		\begin{minipage}{\textwidth}
			\begin{prooftree}
				\AxiomC{$\mathtt{a} \in \Atom \qquad \mathbb{C}^A_p(\csemaok{a}) \qquad \mathbb{C}^A_p(\csemaerr{a})$}
				\RightLabel{(transfer)}
				\UnaryInfC{$\ptriokerr{p}{a}{\csemaok{a} p}{\csemaerr{a} p}$}
			\end{prooftree}
		\end{minipage}
		\hfill
	\end{minipage}
	\\[3pt]
	\begin{minipage}{\textwidth}
		\begin{prooftree}
			\AxiomC{$ p' \leq p \leq A(p') \qquad \ptriboth{p'}{t}{q'} \qquad q \leq q' \leq A(q)$}
			\RightLabel{(relax)}
			\UnaryInfC{$\ptriboth{p}{t}{q}$}
		\end{prooftree}			
	\end{minipage}
	\\[3pt]
	\begin{minipage}{\textwidth}
		\begin{minipage}{\textwidth}
			\begin{prooftree}
				\AxiomC{$\ptriok{p}{t_{\text{$1$}}}{r}$}
				\AxiomC{$\ptriok{r}{t_{\text{$2$}}}{q}$}
				\RightLabel{(seq-ok)}
				\BinaryInfC{$\ptriok{p}{t_{\text{$1$}} \cdot t_{\text{$2$}}}{q}$}
			\end{prooftree}	
		\end{minipage}
	\\[3pt]
		\begin{minipage}{\textwidth}
			\begin{prooftree}
				\AxiomC{$\ptriokerr{p}{t_{\text{$1$}}}{q}{r}$}
				\AxiomC{$\ptrierr{q}{t_{\text{$2$}}}{s}$}
				\RightLabel{(seq-err)}
				\BinaryInfC{$\ptrierr{p}{t_{\text{$1$}} \cdot t_{\text{$2$}}}{r + s}$}
			\end{prooftree}			
		\end{minipage}
	\end{minipage}
	\\[3pt]
	\begin{minipage}{\textwidth}
		\begin{minipage}{\textwidth}
			\begin{prooftree}
				\AxiomC{$\ptriok{p}{t^*}{q}$}
				\AxiomC{$\ptrierr{q}{t}{r}$}
				\RightLabel{(rec-err)}
				\BinaryInfC{$\ptrierr{p}{t^*}{r}$}
			\end{prooftree}	
		\end{minipage}
		\end{minipage}
	\\[3pt]
		\begin{minipage}{\textwidth}
		\begin{minipage}{\textwidth}
			\begin{prooftree}
				\AxiomC{$\ptriboth{p}{t_{\text{$1$}}}{q_1}$}
				\AxiomC{$\ptriboth{p}{t_{\text{$2$}}}{q_2}$}
				\RightLabel{(join)}
				\BinaryInfC{$\ptriboth{p}{t_{\text{$1$}} + t_{\text{$2$}}}{q_1 + q_2}$}
			\end{prooftree}			
		\end{minipage}
		\\[3pt]
	\begin{minipage}{\textwidth}
		\begin{prooftree}
			\AxiomC{$\forall n \in \mathbb{N} . \; \ptriok{p_n}{t}{p_{n+1}}$}
			\RightLabel{(limit)}
			\UnaryInfC{$\ptriok{p_0}{t^*}{\disj p_n}$}
		\end{prooftree}	
	\end{minipage}
		\end{minipage}
		\end{framed}
	\caption{Proof system $\LCIL_A$.}
	\label{fig:ilcl}
\end{figure}

\begin{restatable}[Logical Soundness of $\LCIL_A$]{theorem}{thmlsoundil} \label{thm:lsound_il}
	The triples provable in $\LCIL_A$ are valid.
\end{restatable}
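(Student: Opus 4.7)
The plan is to proceed by structural induction on the derivation tree of the provable triple, splitting cases according to the last rule applied. For each rule we must verify both the under-approximation condition (the postcondition is below the corresponding concrete strongest postcondition) and the abstract-equality condition (the abstractions coincide), for whichever of $\ok$ and/or $\err$ labels the conclusion carries. As in the proof of Theorem~\ref{thm:lsound}, the second equality $\alpha(q) = \alpha(\csema{t} p)$ follows from the first by isotony and the soundness of the abstract semantics, so it suffices to establish the under-approximation together with $\asema{t} \alpha(p) = \alpha(q)$ (done separately for $\ok$ and $\err$).

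The rules (transfer), (relax), (join), and (limit) are handled essentially as in Theorem~\ref{thm:lsound} and Lemma~\ref{lemma:lsound_limit}, now applied componentwise: the two local-completeness premises of (transfer) supply the abstract equalities for $\ok$ and $\err$ separately; (relax) uses the Galois insertion properties \eqref{eqn:gamma_inject}--\eqref{eqn:uco_idem} to transfer validity along $A$-equivalent tests; (join) exploits additivity of $\bdia{\cdot}$ on both components via \eqref{eqn:bdia_add_cmd} and additivity of $\alpha$; and (limit) proves the $\ok$-component claim by the same chain argument used in Lemma~\ref{lemma:lsound_limit}. The rule (seq-ok) is identical to (seq) of $\LCK$ applied to $\csemaok{\cdot}$ and $\asemaok{\cdot}$.

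The genuinely new cases are (seq-err) and (rec-err), and these are where the interplay between normal and erroneous components must be handled carefully. For (seq-err) with premises $\ptriokerr{p}{t_1}{q}{r}$ and $\ptrierr{q}{t_2}{s}$, I would use the decomposition \eqref{eqn:csema_err_mul}, namely $\csemaerr{t_1 \cdot t_2} p = \csemaerr{t_1} p + \csemaerr{t_2} \csemaok{t_1} p$. The inductive hypothesis gives $r \leq \csemaerr{t_1} p$, $s \leq \csemaerr{t_2} q$, and $q \leq \csemaok{t_1} p$; by monotonicity of $\csemaerr{t_2}$ we get $s \leq \csemaerr{t_2} \csemaok{t_1} p$, hence $r + s \leq \csemaerr{t_1 \cdot t_2} p$. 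For the abstract-equality part I would expand $\asemaerr{t_1 \cdot t_2} \alpha(p) = \asemaerr{t_1} \alpha(p) + \asemaerr{t_2}(\asemaok{t_1} \alpha(p))$ from the definition \eqref{def:abs_sem_il}, apply the inductive equalities $\asemaerr{t_1} \alpha(p) = \alpha(r)$, $\asemaok{t_1} \alpha(p) = \alpha(q)$, and $\asemaerr{t_2} \alpha(q) = \alpha(s)$, and conclude with additivity of $\alpha$: $\alpha(r) + \alpha(s) = \alpha(r+s)$. For (rec-err) with premises $\ptriok{p}{t^*}{q}$ and $\ptrierr{q}{t}{r}$, I would use \eqref{eqn:csema_err_star}: by induction $q \leq \csemaok{t^*} p$ and $r \leq \csemaerr{t} q$, so monotonicity yields $r \leq \csemaerr{t} \csemaok{t^*} p = \csemaerr{t^*} p$. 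For the abstract side, from $\asemaok{t^*} \alpha(p) = \alpha(q)$ (induction) and $\asemaerr{t} \alpha(q) = \alpha(r)$ (induction), together with the definition $\asemaerr{t^*} p^\sharp = \asemaerr{t}(\textstyle\disj (\asemaok{t})^n p^\sharp) = \asemaerr{t}(\asemaok{t^*} p^\sharp)$, we obtain $\asemaerr{t^*} \alpha(p) = \asemaerr{t} \alpha(q) = \alpha(r)$.

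The principal obstacle is the (seq-err) case, because its erroneous-postcondition premise $\ptrierr{q}{t_2}{s}$ uses the intermediate test $q$ whose $\ok$-validity is needed to bridge to $\csemaok{t_1} p$, so we must rely on the fact that the hypothesis $\ptriokerr{p}{t_1}{q}{r}$ packages both the $\ok$ and $\err$ validities in compatible form. Once this is set up correctly, the abstract-side computation is purely algebraic, combining additivity of $\alpha$, the definition \eqref{def:abs_sem_il}, and the inductive equalities on both components. \qed
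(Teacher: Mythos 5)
Your proposal is correct and follows essentially the same route as the paper: reduce the second abstract equality to the first via soundness, dispatch the $\ok$ part and the symmetric $\err$ cases as in Theorem~\ref{thm:lsound} and Lemma~\ref{lemma:lsound_limit}, and handle (seq-err) and (rec-err) exactly via \eqref{eqn:csema_err_mul} and \eqref{eqn:csema_err_star} together with the definition \eqref{def:abs_sem_il} and additivity of $\alpha$. The computations you sketch for the two new cases match the paper's proof step for step.
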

\begin{proof}
	Definitions of $\intrpok{\cdot}$ and $\asemaok{\cdot}$ coincide, resp., with $\intrp{\cdot}$ and $\asema{\cdot}$ of $\LCK$, and the \ok part of $\LCIL$ is a subset of $\LCK$\footnote{Here, we refer to the proof system $\LCK$ with the additional rule (limit).}. For this reason the soundness of the \ok part of the triple follows by Theorem \ref{thm:lsound}. Moreover, the base case $\mathtt{t} \equiv \mathtt{a} \in \Atom$ and the inductive case $\mathtt{t} \equiv \mathtt{t_1 + t_2}$ of $\intrperr{\cdot}$ and $\asemaerr{\cdot}$ are completely symmetric to their \ok counterparts. For this reason the \err part of (transfer),(join) can be shown in the same way as the \ok one. Likewise, the soundness of the \err part of (relax) follows as for the \ok one.

The only cases that are left to prove are (seq-err) and (rec-err). As in Theorem \ref{thm:lsound}, notice that the first equality of (2) entails the second one, i.e., if $\asemaerr{t} \alpha(p) = \alpha(r)$ is true, then $\asemaerr{t} \alpha(p) = \alpha(\csemaerr{t} p)$. For this reason it is enough to show only the former.

	\vspace{5px} \noindent (seq-err)\,: By induction we have $q \leq \csemaok{t_1} p$, $r \leq \csemaerr{t_1} p$ and $s \leq \csemaerr{t_2} q$, thus $s \leq \csemaerr{t_2} q \leq \csemaerr{t_2} \csemaok{t_1} p$. By (\ref{eqn:csema_err_mul}) we have that $\csemaerr{t_1 \cdot t_2} p = \csemaerr{t_1} p + \csemaerr{t_2} \csemaok{t_1} p \geq r + s$, thus implying (1).
	By induction we have $\asemaok{t_1} \alpha(p) = \alpha(q)$, $\asemaerr{t_1} \alpha(p) = \alpha(r)$ and $\asemaerr{t_2} \alpha(q) = \alpha(s)$, so that $\asemaerr{t_2} \asemaok{t_1} \alpha(p) = \asemaerr{t_2} \alpha(q) = \alpha(s)$. By definition of ${\asemaerr{t_1 \cdot t_2}}$, we can prove (2) as follows
	\begin{align*}
		\asemaerr{t_1 \cdot t_2} \alpha(p) &= \\
		\asemaerr{t_1} \alpha(p) + \asemaerr{t_2} \asemaok{t_1} \alpha(p) &= \\
		\alpha(r) + \alpha(s) &= \textrm{[Additivity of $\alpha(\cdot)$]} \\
		\alpha(r + s).
	\end{align*}
	
	\vspace{5px} \noindent (rec-err)\,: By induction we have $q \leq \csemaok{t^*} p$, $r \leq \csemaerr{t} q$. By (\ref{eqn:csema_err_star}) we have that ${\csemaerr{t^*} p = \csemaerr{t} \csemaok{t^*} p \geq \csemaerr{t} q \geq r}$, thus implying (1). By induction we have that $\asemaok{t^*} \alpha(p) = \alpha(q)$ and $\asemaerr{t} \alpha(q) = \alpha(r)$. Using the definition of $\asemaerr{t^*}$ we can prove (2) as $\asemaerr{t^*} \alpha(p) = \asemaerr{t} \asemaok{t^*} \alpha(p) = \asemaerr{t} \alpha(q) = \alpha(r)$. \qed
\end{proof}

Furthermore, it turns out that $\LCIL_A$ is logically complete. 

\begin{restatable}{lemma}{lemmaglobcomplil}\label{lemma:glob_compl_il}
	Let $A$ be a Kleene abstract domain on a CTC bdKAT $K$ and $T_{\Sigma,B}$ be a language interpreted on $K$. If the atoms in $\mathtt{t} \in T_{\Sigma, B}$ are globally complete, i.e., for all $\mathtt{a} \in \Atom(\mathtt{t})$, 
	$\mathbb{C}^A(\csemaok{a})$ and  $\mathbb{C}^A(\csemaerr{a}) \),
	then, for any $p \in \Test(K)$,
	$\asemaok{t} \alpha(p) = \alpha(\csemaok{t} p)$ and ${\asemaerr{t} \alpha(p) = \alpha(\csemaerr{t} p)}$.
\end{restatable}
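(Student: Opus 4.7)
The plan is to proceed by structural induction on $\mathtt{t} \in T_{\Sigma, B}$, tracking the ok and err components in parallel. The ok component comes essentially for free: $\intrpok{\cdot}$, $\csemaok{\cdot}$, and $\asemaok{\cdot}$ behave exactly as the bdKAT data of Section~\ref{sec:lclkat}, so $\asemaok{t} \alpha(p) = \alpha(\csemaok{t} p)$ is precisely the content of Lemma~\ref{lemma:glob_compl} applied to $\mathtt{t}$, using the hypothesis that every $\mathtt{a} \in \Atom(\mathtt{t})$ satisfies $\mathbb{C}^A(\csemaok{a})$. Hence the novel content is confined to the err equality, and in the inductive steps I will freely rewrite $\alpha(\csemaok{t'} p)$ as $\asemaok{t'} \alpha(p)$ using this ok conclusion.

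For the base case $\mathtt{t} \equiv \mathtt{a} \in \Atom$, I would unfold $\asemaerr{a} \alpha(p) = \alpha(\csemaerr{a} \gamma(\alpha(p)))$ from definition (\ref{def:abs_sem_il}) and apply $\mathbb{C}^A(\csemaerr{a})$ together with injectivity of $\gamma$ (equation (\ref{eqn:gamma_inject})) to conclude. For the choice $\mathtt{t_1 + t_2}$, the err components of both $\csema{\cdot}$ and $\asema{\cdot}$ are additive on their branches (via (\ref{eqn:bdia_add_cmd}) and (\ref{def:abs_sem_il})), so additivity of $\alpha$ and the two inductive hypotheses combine to give the result. For sequential composition $\mathtt{t_1 \cdot t_2}$, I would apply (\ref{eqn:csema_err_mul}) to split $\csemaerr{t_1 \cdot t_2} p$ as $\csemaerr{t_1} p + \csemaerr{t_2} \csemaok{t_1} p$, push $\alpha$ inside by additivity, rewrite the first summand by the err inductive hypothesis on $\mathtt{t_1}$, and rewrite the second by applying the err inductive hypothesis on $\mathtt{t_2}$ to the test $\csemaok{t_1} p$ and then replacing $\alpha(\csemaok{t_1} p)$ by $\asemaok{t_1} \alpha(p)$ via the ok conclusion on $\mathtt{t_1}$; this matches $\asemaerr{t_1 \cdot t_2} \alpha(p)$ by definition.

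The Kleene star $\mathtt{t_0^*}$ is the key case and the main obstacle. I would exploit the concrete factorization (\ref{eqn:csema_err_star}), namely $\csemaerr{t_0^*} p = \csemaerr{t_0} \csemaok{t_0^*} p$, which cleanly separates an arbitrary number of successful iterations from a single failing step. Applying $\alpha$, using the err inductive hypothesis on $\mathtt{t_0}$ instantiated at the test $\csemaok{t_0^*} p$, and then invoking Lemma~\ref{lemma:glob_compl} on the ok star component to rewrite $\alpha(\csemaok{t_0^*} p)$ as $\asemaok{t_0^*} \alpha(p)$, one obtains $\asemaerr{t_0} \asemaok{t_0^*} \alpha(p)$, which is precisely $\asemaerr{t_0^*} \alpha(p)$ by definition (\ref{def:abs_sem_il}). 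The delicate point is the match between the concrete identity (\ref{eqn:csema_err_star}) and the shape of $\asemaerr{t_0^*}$ in (\ref{def:abs_sem_il}); once this correspondence is observed, the CTC hypothesis on $K$ is used only indirectly through Lemma~\ref{lemma:glob_compl}, and the induction closes without further calculation.
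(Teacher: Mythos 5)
Your proposal is correct and matches the paper's intent: the paper's proof is literally ``analogous to the proof of Lemma~\ref{lemma:glob_compl}'', i.e.\ the same structural induction you carry out, with the ok component delegated to Lemma~\ref{lemma:glob_compl} and the err cases of composition and star handled exactly through \eqref{eqn:csema_err_mul} and \eqref{eqn:csema_err_star} together with the definition \eqref{def:abs_sem_il}. Your write-up simply makes explicit the details the paper leaves implicit (including the use of $\gamma$-injectivity in the atom case and the indirect use of CTC), so there is nothing to correct.
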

\begin{proof}
Analogous to the proof of Lemma \ref{lemma:glob_compl}.
\qed
\end{proof}

\begin{restatable}[Logical Completeness of $\LCIL_A$]{theorem}{thmlcomplil} \label{thm:lcompl_il}
Let $A$ be a Kleene abstract domain on a CTC bdKAT $K$ and $T_{\Sigma,B}$ be a language interpreted on $K$. Assume that 
the atoms in $\mathtt{t} \in T_{\Sigma, B}$ are globally complete, i.e., for all $\mathtt{a} \in \Atom(\mathtt{t})$, 
	{\rm $\mathbb{C}^A(\csemaok{a})$} and  {\rm $\mathbb{C}^A(\csemaerr{a})$} hold.
	If $\:\triokerr{p}{t}{q}{r}\:$ is valid, then it is provable in $\LCIL_A$.
\end{restatable}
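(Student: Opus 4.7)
The plan is to mirror the two-step structure of Theorem~\ref{thm:lcompl}. First, I will prove by structural induction on $\mathtt{t}\in T_{\Sigma,B}$ that both $\ptriok{p}{t}{\csemaok{t}p}$ and $\ptrierr{p}{t}{\csemaerr{t}p}$ are derivable in $\LCIL_A$ under the global completeness hypothesis on $\Atom(\mathtt{t})$. Then the target triple $\triokerr{p}{t}{q}{r}$ will follow by applying (relax) on each component: validity together with injectivity of $\gamma$ gives $A(q)=A(\csemaok{t}p)$ and $A(r)=A(\csemaerr{t}p)$, so the side-conditions of (relax) are met, exactly as in the second half of the proof of Theorem~\ref{thm:lcompl}.

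Since the ok component of $\LCIL_A$ coincides with $\LCK$ (including the (limit) rule), the ok side of the induction replays the argument of Theorem~\ref{thm:lcompl}: (transfer) for atoms with the local completeness premise discharged by Lemma~\ref{lemma:glob_compl_il}, (seq-ok) for composition, (join) for sums, and (limit) for Kleene star. The err base case and the (join) case are symmetric to their ok counterparts: (transfer) yields both ok and err postconditions in one shot via the two global completeness assumptions, and (join) is stated as a $\triboth$-rule and hence applies to the err component as well. The interesting cases are composition and iteration. For $\mathtt{t_1\cdot t_2}$, the inductive hypothesis on $\mathtt{t_1}$ gives $\ptriokerr{p}{t_1}{\csemaok{t_1}p}{\csemaerr{t_1}p}$, while applying the err-side inductive hypothesis on $\mathtt{t_2}$ at precondition $\csemaok{t_1}p$ yields $\ptrierr{\csemaok{t_1}p}{t_2}{\csemaerr{t_2}\csemaok{t_1}p}$; then (seq-err) produces an err postcondition $\csemaerr{t_1}p+\csemaerr{t_2}\csemaok{t_1}p$, which by identity~\eqref{eqn:csema_err_mul} is precisely $\csemaerr{t_1\cdot t_2}p$. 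For $\mathtt{t_0^*}$, the ok derivation $\ptriok{p}{t_0^*}{\csemaok{t_0^*}p}$ follows from the $\LCK$ construction via (limit); combining it with the err-side inductive hypothesis $\ptrierr{\csemaok{t_0^*}p}{t_0}{\csemaerr{t_0}\csemaok{t_0^*}p}$ via (rec-err) yields $\ptrierr{p}{t_0^*}{\csemaerr{t_0}\csemaok{t_0^*}p}$, and identity~\eqref{eqn:csema_err_star} certifies that this postcondition equals $\csemaerr{t_0^*}p$.

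The main obstacle is that the err-side inductive hypotheses in the composition and iteration cases must be invoked at intermediate preconditions that are themselves strongest ok postconditions of subterms (namely $\csemaok{t_1}p$ and $\csemaok{t_0^*}p$), which are not a priori images of $\alpha$. This is where the assumption that \emph{every} atom of $\mathtt{t}$ is globally complete becomes essential: by Lemma~\ref{lemma:glob_compl_il}, the abstract and concrete semantics commute at any test, so the err-side inductive hypothesis is legitimately available at these intermediate points. The remaining bookkeeping amounts to matching the algebraic shapes of (seq-err) and (rec-err) to the identities~\eqref{eqn:csema_err_mul} and~\eqref{eqn:csema_err_star}, which, as sketched above, fit exactly.
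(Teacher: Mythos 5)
Your proposal is correct and follows essentially the same route as the paper's proof: reduce the \textbf{ok} side to the $\LCK_A$ completeness argument, handle the \textbf{err} side by structural induction using (transfer)/(join) symmetry, (seq-err) matched against identity~\eqref{eqn:csema_err_mul}, (rec-err) matched against~\eqref{eqn:csema_err_star}, with Lemma~\ref{lemma:glob_compl_il} discharging the completeness side-conditions, and a final (relax) step justified by triple validity and injectivity of $\gamma$. The only difference is organizational (you derive the strongest-postcondition triples first and relax once at the end, whereas the paper phrases the induction as ``valid implies provable'' and applies (relax) inside the $\cdot$ and $^*$ cases), which is immaterial.
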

\begin{proof}
	As discussed in the proof of Theorem \ref{thm:lsound_il}, the \ok part of the semantic coincides with the $\LCK$ logic, so that $\triok{p}{t}{q}$ is provable by the completeness theorem of $\LCK_A$. For this reason only $\trierr{p}{t}{r}$ must be shown. The proof is by induction on the structure of $\mathtt{t} \in T_{\Sigma, B}$. By construction, the \err base case $\mathtt{t} \equiv \mathtt{a} \in \Atom$ and the inductive case $\mathtt{t} \equiv \mathtt{t_1 + t_2}$ are symmetric to their \ok counterparts, so that the proof of those cases is equivalent to Theorem \ref{thm:lcompl}. Therefore we only need to prove $\cdot$ and $^*$.
	
	\vspace{5px} \noindent ($\mathtt{t_1 \cdot t_2}$)\,: Consider the following assignments (the equality after the $\Rightarrow$ sign is true by Lemma \ref{lemma:glob_compl_il}):
	\begin{align*}
		s = \csemaok{t_1} p &\Rightarrow \asemaok{t_1} \alpha(p) = \alpha(\csemaok{t_1} p) = \alpha(s) \\
		t = \csemaerr{t_1} p &\Rightarrow \asemaerr{t_1} \alpha(p) = \alpha(\csemaerr{t_1} p) = \alpha(t) \\
		u = \csemaerr{t_2} s &\Rightarrow \asemaerr{t_2} \alpha(s) = \alpha(\csemaerr{t_2} s) = \alpha(u)
	\end{align*}
	and by induction we have that
	\begin{align*}
		\vtriok{p}{t_1}{s} &\Rightarrow \; \ptriok{p}{t_1}{s} \\
		\vtrierr{p}{t_1}{t} &\Rightarrow \; \ptrierr{p}{t_1}{t} \\
		\vtrierr{s}{t_2}{u} &\Rightarrow \; \ptrierr{s}{t_2}{u}
	\end{align*}
	Then by (seq-err) the triple $\trierr{p}{t_1 \cdot t_2}{t + u}$ is provable. By hypothesis we have
	\[ r \leq \csemaerr{t_1 \cdot t_2} p \stackrel{(\ref{eqn:csema_err_mul})}{=} \csemaerr{t_1} p + \csemaerr{t_2} \csemaok{t_1} p = t + u \]
	and
	\[ \alpha(r) = \alpha(\csemaerr{t_1 \cdot t_2} p) = \alpha(t + u) \]
	so that we can apply (relax) to obtain the result $\ptrierr{p}{t_1 \cdot t_2}{r}$.
	
	\vspace{5px} \noindent ($\mathtt{t_0^*}$)\,: Let $q = \csemaok{t_0^*} p$ and $s = \csemaerr{t_0} \csemaok{t_0^*} p$. By Lemma \ref{lemma:glob_compl_il} $\asemaok{t_0^*} \alpha(p) = \alpha(q)$ so that $\vtriok{p}{t_0^*}{q}$ and the completeness of the \ok part yields $\ptriok{p}{t_0^*}{q}$. By Lemma \ref{lemma:glob_compl_il} $\asemaerr{t_0} \alpha(q) = \alpha(s)$, meaning that $\vtrierr{q}{t_0}{s}$ and by induction $\ptrierr{q}{t_0}{s}$. By (rec-err) we obtain $\ptrierr{p}{t_0^*}{s}$.
	
	\noindent
	In order to use (relax), we need to ensure that the additional conditions $r \leq s$ and $\alpha(r) = \alpha(s)$ are satisfied. By $\vtrierr{p}{t_0^*}{r}$ and (\ref{eqn:csema_err_star}), $r \leq \csemaerr{t_0^*} p = s$ and $\alpha(r) = \alpha(\csemaerr{t_0^*} p) = \alpha(\csemaerr{t_0} \csemaok{t_0^*} p) = \alpha(s)$, so that (relax) yields the result $\ptrierr{p}{t_0^*}{r}$. \qed
\end{proof}

\begin{examplebf}
	Consider a relational bdKAT $K\ud \wp(\bZ\times \bZ)$ 
	on the set of integers $\bZ$, where $\textbf{1}_K\ud\{\tuple{z,z} \mid z\in \bZ\}$ and $\textbf{0}_K\ud\varnothing$, 
	and the  standard integer interval abstraction $\Int$~\cite{CC77,CC79}. Let us consider a language with primitive actions 
	$\Sigma \ud \{\mathtt{x:=x+1}, \Error\}$. The evaluation function $u:\Sigma \cup B \rightarrow K_\ok\times K_\err$ is defined as expected: 
	\begin{align*}
		u(\mathtt{x:=x+1}) &= (\{\tuple{z,z+1} \mid z \in \mathbb{Z}\}, \textbf{0}_K), && u(\Error) = (\textbf{0}_K, \textbf{1}_K).
	\end{align*}
	We study the correctness of the program ${\mathtt{r} \equiv ((\mathtt{x:=x+1}) +
	\Error)^*}$, 
	for the precondition $p \triangleq \{\tuple{0,0},\tuple{2,2}\}$ and the specification $\Spec \triangleq (\ok\! : \{\tuple{z,z} \mid z \geq 0 \}, \err\! :\textbf{0}_K)$. Let us 
	define an auxiliary sequence of tests $p_n \ud \{\tuple{n,n},\tuple{n+2,n+2}\}$ and
	$s \ud \{ \tuple{z,z} \mid z \geq 0 \}$.
	
	\noindent
	We can easily check the local completeness of the atoms by exploiting the
	characterization of the backward diamond operator in relational
	KATs of Lemma \ref{lemma:bdia_rela}.

	\noindent	
	We therefore have the following derivation in $\LCIL_{\Int}$ for $\mathtt{r}$:
	
	\begin{scprooftree}{0.75}
		\AxiomC{$\mathbb{C}^{\Int}_{p_n}(\csemaok{x := x+1})$}
		\AxiomC{$\mathbb{C}^{\Int}_{p_n}(\csemaerr{x := x+1})$}
		\RightLabel{(transfer)}
		\BinaryInfC{$\patriok{p_n}{x:=x+1}{p_{n+1}}{\Int}$}
		\AxiomC{$\mathbb{C}^{\Int}_{p_n}(\csemaok{\Error})$}
		\AxiomC{$\mathbb{C}^{\Int}_{p_n}(\csemaerr{\Error})$}
		\RightLabel{(transfer)}
		\BinaryInfC{$\patriok{p_n}{\Error}{0}{\Int}$}
		\RightLabel{(choice)}
		\BinaryInfC{$\patriok{p_n}{(\mathtt{x:=x+1}) + \Error}{p_{n+1}}{\Int}$}
		\RightLabel{(limit)}
		\UnaryInfC{$\dagger$}
	\end{scprooftree}
	\begin{scprooftree}{0.75}
		\AxiomC{$\mathbb{C}^{\Int}_{s}(\csemaok{x := x+1})$}
		\AxiomC{$\mathbb{C}^{\Int}_{s}(\csemaerr{x := x+1})$}
		\RightLabel{(transfer)}
		\BinaryInfC{$\patrierr{s}{x:=x+1}{0}{\Int}$}
		\AxiomC{$\mathbb{C}^{\Int}_{s}(\csemaok{\Error})$}
		\AxiomC{$\mathbb{C}^{\Int}_{s}(\csemaerr{\Error})$}
		\RightLabel{(transfer)}
		\BinaryInfC{$\patrierr{s}{\Error}{s}{\Int}$}
		\RightLabel{(choice)}
		\BinaryInfC{$\ddagger$}
	\end{scprooftree}	
	\begin{scprooftree}{0.75}
		\AxiomC{$\dagger$}
		\RightLabel{(limit)}
		\UnaryInfC{$\patriok{p_0}{((\mathtt{x:=x+1}) + \Error)^*}{s}{\Int}$}
		\RightLabel{(choice)}
		\AxiomC{$\ddagger$}
		\UnaryInfC{$\patrierr{s}{(\mathtt{x:=x+1}) + \Error}{s}{\Int}$}
		\RightLabel{(rec-err)}
		\BinaryInfC{$\patrierr{p_0}{((\mathtt{x:=x+1}) + \Error)^*}{s}{\Int}$}
	\end{scprooftree}

\noindent
	By soundness of $\LCIL_{\Int}$ in Theorem~\ref{thm:lsound_il}, the program $\mathtt{r}$ satisfies the \ok part of $\Spec$ 
	because $$\csemaok{r} p \subseteq \Int(s) = s \subseteq s = \Spec_{\ok}.$$ However, the \err part is not satisfied as $\Int(s) = s \nsubseteq \varnothing = \textbf{0}_K = \Spec_{\err}$. Moreover, $\LCIL_{\Int}$ also catches true alerts as $s \smallsetminus \Spec_{\err} = s$.
	\qed
\end{examplebf}

\subsection{Relationship with Incorrectness logic} \label{sect:lcil_il}

\begin{figure}[t]
	\centering
	\begin{framed}
	    \vspace*{-5pt}
	\begin{minipage}{\textwidth}
		\begin{minipage}{0.5\textwidth}
			\begin{prooftree}
				\AxiomC{$\mathtt{a} \in \Atom$}
				\RightLabel{(transfer)}
				\UnaryInfC{$\ptriokerril{p}{a}{\csemaok{a} p}{\csemaerr{a} p}$}
			\end{prooftree}
		\end{minipage}
		\hfill
		\begin{minipage}{0.35\textwidth}
		\begin{prooftree}
		\AxiomC{\phantom{$\mathtt{a} \in \Atom$}}
				\RightLabel{(empty)}
				\UnaryInfC{$\ptribothil{p}{t}{0}$}
		\end{prooftree}
		\end{minipage}
	\end{minipage}
	\\[3pt]
	\begin{minipage}{\textwidth}
		\begin{prooftree}
			\AxiomC{$p' \leq p \qquad \ptribothil{p'}{t}{q'} \qquad q \leq q'$}
			\RightLabel{(consequence)}
			\UnaryInfC{$\ptribothil{p}{t}{q}$}
		\end{prooftree}
	\end{minipage}
	\\[3pt]
	\begin{minipage}{\textwidth}
		\begin{minipage}{0.5\textwidth}
			\begin{prooftree}
				\AxiomC{$\ptribothil{p_1}{t}{q_1}$}
				\AxiomC{$\ptribothil{p_2}{t}{q_2}$}
				\RightLabel{(disj)}
				\BinaryInfC{$\ptribothil{p_1 + p_2}{t}{q_1 + q_2}$}
			\end{prooftree}
		\end{minipage}
		\hfill
		\begin{minipage}{0.425\textwidth}
			\begin{prooftree}
				\AxiomC{$\ptrierril{p}{t_{\text{$1$}}}{q}$}
				\RightLabel{(short-circuit)}
				\UnaryInfC{$\ptrierril{p}{t_{\text{$1$}} \cdot t_{\text{$2$}}}{q}$}
			\end{prooftree}
		\end{minipage}
	\end{minipage}
	\\[3pt]
	\begin{minipage}{\textwidth}
		\begin{minipage}{0.6\textwidth}
			\begin{prooftree}
				\AxiomC{$\ptriokil{p}{t_{\text{$1$}}}{r}\!\!$}
				\AxiomC{$\!\!\ptribothil{r}{t_{\text{$2$}}}{q}$}
				\RightLabel{(seq-normal)}
				\BinaryInfC{$\ptribothil{p}{t_{\text{$1$}} \cdot t_{\text{$2$}}}{q}$}
			\end{prooftree}
		\end{minipage}
		\hfill
		\begin{minipage}{0.36\textwidth}
		\begin{prooftree}
		\AxiomC{\phantom{$\ptriokil{p}{t_1}{r}$}}
				\RightLabel{(iterate zero)}
				\UnaryInfC{$\ptriokil{p}{t^*}{p}$}
		\end{prooftree}
		\end{minipage}
	\end{minipage}
	\\[3pt]
	\begin{minipage}{\textwidth}
		\begin{minipage}{0.5\textwidth}
			\begin{prooftree}
				\AxiomC{$\forall n \in \mathbb{N} . \; \ptriokil{p_n}{t}{p_{n+1}}$}
				\RightLabel{(back-v)}
				\UnaryInfC{$\ptriokil{p_0}{t^*}{\disj p_n}$}
			\end{prooftree}
		\end{minipage}
		\hfill
		\begin{minipage}{0.5\textwidth}
			\begin{prooftree}
				\AxiomC{$\ptribothil{p}{t^* \cdot t}{q}$}
				\RightLabel{(iterate non-zero)}
				\UnaryInfC{$\ptribothil{p}{t^*}{q}$}
			\end{prooftree}
		\end{minipage}
	\end{minipage}
	\\[3pt]
	\begin{minipage}{\textwidth}
		\begin{minipage}{\textwidth}
			\begin{prooftree}
				\AxiomC{$\ptribothil{p}{t_{\text{$i$}}}{q}$, with $i \in \{1,2\}$}
				\RightLabel{(choice)}
				\UnaryInfC{$\ptribothil{p}{t_{\text{$1$}} + t_{\text{$2$}}}{q}$}
			\end{prooftree}
		\end{minipage}
	\end{minipage}
				\end{framed}
	\caption{Proof system $\IL$.}
	\label{fig:il}
\end{figure}

Section \ref{sec:ual} has shown that $\LCK$ yields a generalization of $\UL$. The same can be done for $\IL$, i.e., we prove 
that $\LCIL_A$ with incorrectness triples generalizes the incorrectness logic of \cite{ohearn_incorrectness_2020}.
For the sake of clarity, we recall in Figure \ref{fig:il} an algebraic version of $\IL$.
Analogously to the reduction of Theorem~\ref{thm:lcl_ul_equiv}, this generalization is 
obtained by letting $A = A_{tr}$, where $A_{tr}$ is the trivial abstract domain.

\begin{restatable}{theorem}{thmlclilequiv} \label{thm:lcil_il_equiv}
Let $K$ be a CTC bdKAT and $T_{\Sigma,B}$ a language interpreted on $K$. For any $p,q \in \Test(K)$, $\mathtt{t} \in T_{\Sigma, B}$, 
	\[ \ptriokerrtr{p}{t}{q}{r} \quad \Leftrightarrow \quad \ptriokerril{p}{t}{q}{r}. \]
\end{restatable}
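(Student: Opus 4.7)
The plan is to mirror the proof of Theorem~\ref{thm:lcl_ul_equiv}, lifting it from $\UL/\LCK$ to $\IL/\LCIL$. The first observation is that for the trivial abstraction $A_{tr}$ every side condition involving $A_{tr}$ becomes trivial: $\mathbb{C}^{A_{tr}}_p(\csemaok{a})$ and $\mathbb{C}^{A_{tr}}_p(\csemaerr{a})$ always hold, and in $(\mathrm{relax})$ the inequalities $p \leq A_{tr}(p')$ and $q' \leq A_{tr}(q')$ collapse to $p,q'\leq 1$, which are automatic. Also, validity condition~(2) of an incorrectness triple is vacuous under $A_{tr}$ since $\alpha$ is constantly $\top$. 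So, by Theorems~\ref{thm:lsound_il} and~\ref{thm:lcompl_il}, $\ptriokerrtr{p}{t}{q}{r}$ is equivalent to the semantic conditions $q \leq \csemaok{t} p$ and $r \leq \csemaerr{t} p$, and Lemma~\ref{lemma:glob_compl_il}'s hypothesis is trivially met. This equivalence between provability in $\LCIL_{A_{tr}}$ and the pair of under-approximation inequalities is the hinge of the entire argument.

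For the $(\Rightarrow)$ direction I would induct on a derivation of $\ptriokerrtr{p}{t}{q}{r}$ in $\LCIL_{A_{tr}}$, exhibiting a corresponding derivation in $\IL$. The correspondences I expect to use are the natural ones: $(\mathrm{transfer})\mapsto(\mathrm{transfer})$; $(\mathrm{relax})\mapsto(\mathrm{consequence})$; $(\mathrm{seq\text{-}ok})$ via $(\mathrm{seq\text{-}normal})$ specialised to the ok postcondition; $(\mathrm{seq\text{-}err})$ by splitting the derived err-postcondition $r+s$ into two branches, using $(\mathrm{short\text{-}circuit})$ on $\ptrierril{p}{t_1}{r}$ and $(\mathrm{seq\text{-}normal})$ followed by promotion of $\trierril{q}{t_2}{s}$ to $\tribothil{p}{t_1\cdot t_2}{s}$, and then combining through $(\mathrm{disj})$; $(\mathrm{rec\text{-}err})$ analogously as a $(\mathrm{seq\text{-}normal})$ instance on $t^*\cdot t$ combined with $(\mathrm{iterate\ non\text{-}zero})$; $(\mathrm{join})$ by two applications of $(\mathrm{choice})$ combined with $(\mathrm{disj})$, exactly as in Theorem~\ref{thm:lcl_ul_equiv}; and $(\mathrm{limit})\mapsto(\mathrm{back\text{-}v})$.

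For the $(\Leftarrow)$ direction I would induct on the $\IL$ derivation and, whenever a rule has no syntactic counterpart in $\LCIL$, verify the under-approximation inequalities semantically and invoke completeness of $\LCIL_{A_{tr}}$ (Theorem~\ref{thm:lcompl_il}) to get the $\LCIL_{A_{tr}}$ proof. The rules with a direct counterpart ($\mathrm{transfer}$, $\mathrm{consequence}$, $\mathrm{disj}$, $\mathrm{back\text{-}v}$) are immediate. For $(\mathrm{empty})$, $0 \leq \csemaok{t} p$ and $0 \leq \csemaerr{t} p$ trivially. For $(\mathrm{short\text{-}circuit})$, the induction hypothesis gives $q \leq \csemaerr{t_1} p$, and by~\eqref{eqn:csema_err_mul} $\csemaerr{t_1} p \leq \csemaerr{t_1\cdot t_2} p$. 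For $(\mathrm{seq\text{-}normal})$, a case split on ok/err triples reduces it to $\LCIL$'s $(\mathrm{seq\text{-}ok})$ and $(\mathrm{seq\text{-}err})$; for the err case, one uses the induction hypothesis $r \leq \csemaok{t_1} p$ on the left premise together with $q \leq \csemaerr{t_2} r$ on the right, and concludes via~\eqref{eqn:csema_err_mul} and isotony. The $(\mathrm{iterate\ zero})$ and $(\mathrm{iterate\ non\text{-}zero})$ rules follow from~\eqref{eqn:csema_star_unfold} by the same reasoning as in Theorem~\ref{thm:lcl_ul_equiv}, and $(\mathrm{choice})$ is immediate by additivity of $\csemaok{\cdot}$ and $\csemaerr{\cdot}$ obtained from~\eqref{eqn:bdia_add_cmd}.

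The main obstacle I anticipate is the treatment of $(\mathrm{seq\text{-}err})$ and the err fragment in general: unlike in the ok-only case, the erroneous postcondition of a sequence is a sum $r+s$ coming from two distinct sources (an early error in $t_1$ and a later error in $t_2$ after successful execution of $t_1$), so translating between the two systems requires assembling the derivation via $(\mathrm{disj})$ in $\IL$ and, going the other way, using~\eqref{eqn:csema_err_mul} to justify under-approximation before appealing to completeness. Once that bookkeeping is laid out, the induction will carry through exactly as in the ok-only analogue.
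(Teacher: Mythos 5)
Your proposal is correct and follows essentially the same route as the paper's proof: reduce the \textbf{ok} fragment to Theorem~\ref{thm:lcl_ul_equiv}, simulate the \textbf{err} rules syntactically in the $(\Rightarrow)$ direction (in particular deriving (seq-err) via (seq-normal), (short-circuit) and (disj), and (rec-err) via (seq-normal) plus (iterate non-zero)), and in the $(\Leftarrow)$ direction verify the under-approximation inequalities semantically and appeal to soundness and completeness of $\LCIL_{A_{tr}}$, whose side conditions trivialize under the trivial abstraction. The only minor slip is citing~\eqref{eqn:csema_star_unfold} for the \textbf{err} component of (iterate non-zero): there you need~\eqref{eqn:csema_err_mul} and~\eqref{eqn:csema_err_star} (giving $\csemaerr{t^* \cdot t}\,p = \csemaerr{t^*}\,p$), exactly as the paper does.
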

\begin{proof}
	Notice that the \ok part of $\LCIL$ coincides with $\LCK$\footnote{With the additional rule (limit) and without the rules (rec), (iterate).} and likewise $\IL$ coincides with $\UL$. Hence this part of the equivalence follows by Theorem \ref{thm:lcl_ul_equiv}. We only need to show the equivalence of the \err part.
	
	\vspace{2px} \noindent $\Rightarrow)$ Assuming $\ptrierrtr{p}{t}{r}$ we need to show $\ptrierril{p}{t}{r}$. The proof is by induction on the derivation tree.
	
	\vspace{2px} \noindent (tranfer)\,: Coincides with (transfer). \\
	\noindent (relax)\,: Coincides with (consequence). \\
	\noindent (seq-ok)\,: Is the \ok part of (seq-normal). \\
	\noindent (seq-err)\,: Can be derived as follows
	\begin{scprooftree}{0.85}
		\AxiomC{$\ptriokil{p}{t_1}{q}$}		
		\AxiomC{$\ptrierril{q}{t_2}{s}$}
		\RightLabel{(seq-normal)}
		\BinaryInfC{$\ptrierril{p}{t_1 \cdot t_2}{s}$}
		\AxiomC{$\ptrierril{p}{t_1}{r}$}
		\RightLabel{(short-circuit)}
		\UnaryInfC{$\ptrierril{p}{t_1 \cdot t_2}{r}$}
		\RightLabel{(disj)}
		\BinaryInfC{$\ptrierril{p}{t_1 \cdot t_2}{s + r}$}
	\end{scprooftree}
	\noindent (rec-err)\,: Can be derived as follows
	\begin{prooftree}
		\AxiomC{$\ptriokil{p}{t^*}{q}$}
		\AxiomC{$\ptrierril{q}{t}{r}$}
		\RightLabel{(seq-normal)}
		\BinaryInfC{$\ptrierril{p}{t^* \cdot t}{r}$}
		\RightLabel{(iterate non-zero)}
		\UnaryInfC{$\ptrierril{p}{t^*}{r}$}
	\end{prooftree}
	\noindent (join)\,: As in Theorem \ref{thm:lcl_ul_equiv}. \\
	\noindent (limit)\,: Coincides with (back-v).	
	
	\vspace{5px} \noindent $\Leftarrow)$ Assuming $\ptrierril{p}{t}{r}$ we need to show $\ptrierrtr{p}{t}{r}$. As in Theorem \ref{thm:lcl_ul_equiv}, we exploit the completeness of $\LCIL$ and likewise we observe that the condition (2) is trivially true as $\asemaerr{t} \alpha(p) = \top = \alpha(q) = \top = \alpha(\csemaerr{t} p)$, meaning that the only condition to check is (1). The proof is by structural induction on the derivation tree of $\ptrierril{p}{t}{r}$.
	
	\vspace{3px} \noindent (transfer)\,: The premises $\mathbb{C}^A_p(\csemaerr{a})$ and $\mathbb{C}^A_p(\csemaok{a})$ hold for any $\mathtt{a}$ and $p$ because $A_{tr}(\cdot) = 1$, so that (transfer) yields $\ptrierrtr{p}{a}{\csemaerr{a} p}$.
	
	\vspace{3px} \noindent (empty)\,: For any $\mathtt{t} \in T_{\Sigma, B}$ it holds $0 \leq \csemaerr{t} p$, meaning $\vtrierrtr{p}{t}{0}$ and by completeness of $\LCIL$ we have $\ptrierrtr{p}{t}{0}$.

	\vspace{3px} \noindent (consequence)\,: The inequalities $p \leq A_{tr}(p')$ and $q' \leq A_{tr}(q)$ are always true because $A_{tr}(\cdot) = 1$ which is the greatest element of $\Test(K)$ meaning that the rule coincides with (relax).
	
	\vspace{3px} \noindent (disj) and (choice)\,: Same as in Theorem \ref{thm:lcl_ul_equiv}.

	\vspace{3px} \noindent (short-circuit)\,: By induction $\ptrierrtr{p}{t_1}{r}$ and by soundness $\vtrierrtr{p}{t_1}{r}$. By (\ref{eqn:csema_err_mul}) $\csemaerr{t_1 \cdot t_2} p = \csemaerr{t_1} p + \csemaerr{t_2} \csemaok{t_1} p \geq \csemaerr{t_1} p \geq r$
	which means $\vtrierrtr{p}{t_1 \cdot t_2}{r}$ and by completeness $\ptrierrtr{p}{t_1 \cdot t_2}{r}$.

	\vspace{3px} \noindent (seq-normal)\,: By induction and soundness we have $r \leq \csemaok{t_1} p$ and $q \leq \csemaerr{t_2} r$. By (\ref{eqn:csema_err_mul}) $\csemaerr{t_1 \cdot t_2} p = \csemaerr{t_1} p + \csemaerr{t_2} \csemaok{t_1} p \geq  \csemaerr{t_2} \csemaok{t_1} p \geq \csemaerr{t_2} r \geq q$, which means $\vtrierrtr{p}{t_1 \cdot t_2}{q}$ and by completeness $\ptrierrtr{p}{t_1 \cdot t_2}{q}$.
	
	\vspace{3px} \noindent (iterate non-zero)\,: By induction and soundness 
	\[ q \leq \csemaerr{t^* \cdot t} p \stackrel{(\ref{eqn:csema_err_mul})}{=} \csemaerr{t^*} p + \csemaerr{t} \csemaok{t^*} p \stackrel{(\ref{eqn:csema_err_star})}{=} \csemaerr{t^*} p \]
	which means $\vtrierrtr{p}{t^*}{q}$ and by completeness $\ptrierrtr{p}{t^*}{q}$. \qed
\end{proof}

The abstraction map $\alpha=\lambda x.\top$ of $A_{tr}$ makes the validity of a triple trivially true. In particular,
\( \asematrok{t} \alpha(p) = \top = \alpha(q) = \alpha(\csemaok{t} p) \)
and
\( \asematrerr{t} \alpha(p) = \top = \alpha(q) = \alpha(\csemaerr{t} p) \) hold. 
As a consequence, we obtain that
\begin{align}
\vtriokerrtr{p}{t}{q}{r} \quad \Leftrightarrow \quad\: \vtriokerril{p}{t}{q}{r} \label{eq:Atr-IL}
\end{align}
By this equivalence \eqref{eq:Atr-IL} and Theorems \ref{thm:lsound_il} and~\ref{thm:lcompl_il}, we can thus 
retrieve the logical soundness and completeness of $\IL$ as a consequence of the one of $\LCIL_{A_{tr}}$.

\begin{corollary}
Let $K$ be a CTC bdKAT and $T_{\Sigma,B}$ a language interpreted on $K$.
For any $p,q \in \Test(K)$, $\mathtt{t} \in T_{\Sigma, B}$,
	\(\ptriokerril{p}{t}{q}{r} \; \Leftrightarrow \;\: \vtriokerril{p}{t}{q}{r}\).
\end{corollary}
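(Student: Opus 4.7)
The plan is to assemble this corollary as a short chain of equivalences using three ingredients already proved in the paper: (i)~the syntactic equivalence $\ptriokerril{p}{t}{q}{r} \Leftrightarrow \ptriokerrtr{p}{t}{q}{r}$ from Theorem~\ref{thm:lcil_il_equiv}; (ii)~the logical soundness and completeness of $\LCIL_A$ given by Theorems~\ref{thm:lsound_il} and~\ref{thm:lcompl_il} specialised to $A = A_{tr}$; and (iii)~the semantic equivalence~\eqref{eq:Atr-IL} between validity in $A_{tr}$ and validity in $\IL$. The corollary should then fall out by transitivity.

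The one hypothesis that needs a brief sanity check is the global completeness assumption required by Theorem~\ref{thm:lcompl_il}. Since $\alpha_{A_{tr}} = \lambda x.\top$ and $\gamma_{A_{tr}}(\top) = 1$, for every atom $\mathtt{a} \in \Atom(\mathtt{t})$ and every $p \in \Test(K)$ we have $A_{tr}(\csemaok{a}\, p) = 1 = A_{tr}(\csemaok{a}(A_{tr}(p)))$ and identically for $\csemaerr{a}$; hence $\mathbb{C}^{A_{tr}}(\csemaok{a})$ and $\mathbb{C}^{A_{tr}}(\csemaerr{a})$ hold automatically, and the hypotheses of Theorem~\ref{thm:lcompl_il} are met vacuously.

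With this in hand, I would simply write the chain
\begin{align*}
\ptriokerril{p}{t}{q}{r}
  &\;\Leftrightarrow\; \ptriokerrtr{p}{t}{q}{r}
     && \text{[by Theorem~\ref{thm:lcil_il_equiv}]} \\
  &\;\Leftrightarrow\; \vtriokerrtr{p}{t}{q}{r}
     && \text{[by Theorems~\ref{thm:lsound_il} and~\ref{thm:lcompl_il}]} \\
  &\;\Leftrightarrow\; \vtriokerril{p}{t}{q}{r}
     && \text{[by \eqref{eq:Atr-IL}]}
\end{align*}
concluding the proof. There is no genuine obstacle here: all the substantive work — the inductive reasoning on derivation trees, the soundness of the abstract semantics, and the interplay between the algebraic proof rules and the backward-diamond based validity — has already been carried out in Theorems~\ref{thm:lsound_il}, \ref{thm:lcompl_il}, and~\ref{thm:lcil_il_equiv}. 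The only care required is to remark that the trivial abstraction $A_{tr}$ degenerates the semantic condition~(2) of triple validity to a tautology, which is precisely the content of~\eqref{eq:Atr-IL}, so no additional hypotheses need to be threaded through.
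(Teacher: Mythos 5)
Your proposal is correct and follows essentially the same route as the paper: the corollary is obtained by chaining Theorem~\ref{thm:lcil_il_equiv}, the soundness and completeness of $\LCIL_{A_{tr}}$ (Theorems~\ref{thm:lsound_il} and~\ref{thm:lcompl_il}), and the semantic equivalence~\eqref{eq:Atr-IL}, exactly as the paper does (mirroring the proof of Corollary~\ref{cor:ul_snd_compl}). Your explicit check that the global completeness hypothesis of Theorem~\ref{thm:lcompl_il} holds trivially for $A_{tr}$ (since $A_{tr}(\cdot)=1$) is a point the paper leaves implicit, and it is verified correctly.
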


\section{Local Completeness Logic in TopKAT} \label{sect:lctk}
We have shown in Section~\ref{sec:lclkat} how KAT extended with a modal  backward-diamond operator allows us to interpret and
represent the local completeness program logic. This result follows the approach 
by Moller, O'Hearn and Hoare~\cite{fahrenberg_algebra_2021}, who leverage a backward-diamond operator in their KAT interpretation
of correctness/incorrectness logics. 
On the other hand, Zhang, de Amorim and 
Gaboardi~\cite{zhang_incorrectness_2022} have recently shown that incorrectness logic can be formulated 
for a standard KAT, provided that it contains a top element, thus giving rise to a so-called TopKAT. In particular, \cite{zhang_incorrectness_2022}  observed that a TopKAT is enough to express the codomain of relational KATs. 
In this section, we take a similar path in studying an alternative formulation of local completeness logic based on a TopKAT.

\subsection{Abstracting TopKATs} \label{sect:top_abstr_dom}
We expect that the base case of abstract semantics 
$\asema{\mathtt{a}} p^\sharp$ for a basic action $\mathtt{a} \in \Atom$ is 
defined as best correct approximation in $A$ of the concrete semantics of $\mathtt{a}$ on the concretization of $p^\sharp$.  
In a bdKAT this is achieved in definition 
\eqref{def:abs-sem} through its backward-diamond operator, which is crucially used in \eqref{bd-csem} to define the 
strongest postcondition as $\csema{a}^K \gamma(p^\sharp) = \bdia{\intrp{a}} \gamma(p^\sharp)$.
Zhang et al.~\cite{zhang_incorrectness_2022} observed that in a relational model of KAT, the codomain inclusion $\cod(q) \subseteq \cod(pa)$ defining the meaning of an 
under-approximation triple $\otri{p}{a}{q}$ can be expressed in a TopKAT as the inequality 
$\top q \leq \top pa$, thus hinting that this latter condition could be taken as definition of validity of incorrectness triples in a TopKAT. We follow here a similar approach by considering the element $\top p \intrp{a}$ as a proxy for strongest postconditions in a TopKAT. 
It is worth noticing that while in a bdKAT a strongest postcondition $\bdia{\intrp{a}}p$ is always a test, in a TopKAT $K$, 
given $p \in \Test(K)$ and a term $\mathtt{t} \in T_{\Sigma, B}$,  
it is not guaranteed that there exists a test $q \in \Test(K)$ such that $\top p \intrp{t} = \top q$, as shown by the following example.

\begin{examplebf}[Strongest Postconditions in TopKAT] \label{exa:top_no_compl}
	Consider the Kleene algebra $A_3 = \{0,1,a\}$ consisting of 3 elements and characterized by Conway~\cite[Chapter~12]{conway2012regular}. This algebra can be lifted to a KAT by letting $\Test(A_3) \ud \{0,1\}$
	and defining the KAT operators as follows: 

		\smallskip
		\begin{center}
			$\begin{array}{|c|ccc|}
			\hline
				~+~ & ~0~ & ~1~ & ~a~\\
				\hline 
				0 & 0 & 1 & a \\
				1 & 1 & 1 & 1 \\ 
				a & a & 1 & a \\
				\hline
			\end{array}$
			\qquad\qquad
			$\begin{array}{|c|ccc|}
							\hline 
				~\cdot~ & ~0~ & ~1~ & ~a~\\
				\hline 
				0 & 0 & 0 & 0 \\
				1 & 0 & 1 & a \\ 
				a & 0 & a & 0 \\
								\hline 
			\end{array}$
			\qquad\qquad
			$\begin{array}{|c|c|c|}
							\hline 
							 \rule{0pt}{\normalbaselineskip}
				~\mathord{0^*}\ud 1~ & ~\mathord{1^*}\ud 1 & ~\mathord{a^*}\ud 1~
				\rule{0pt}{\normalbaselineskip}
				\\							 
				\hline 
			\end{array}$
		\end{center}

	\smallskip
	\noindent 
	We have that $1 \geq a$ and $1 \geq 0$, because $1+a=1$ and $1+0=1$, so that $A_3$ is a TopKAT with $\top = 1$. Moreover, $\top \cdot 1 \cdot a = 1 \cdot 1 \cdot a = a$, whereas there exists no $q \in \Test(A_3)$ satisfying $\top \cdot q = a$. Indeed, $\top \cdot 1 = 1 \cdot 1 = 1 \neq a$ and $\top \cdot 0 = 0 \neq a$.
	\qed
\end{examplebf}

In general, the lack of such a $q \in \Test(K)$ implies that the abstract domain cannot be defined as an abstraction of  
the set of topped-tests $\{\top p \mid p \in \Test(K) \}$, because in this case we could miss the abstraction 
$\alpha(\top p \intrp{a})$. To settle this issue, an abstract domain must provide an approximation of the larger set
\begin{equation*}
\topp(K) \ud \{ \top a \mid a \in K \}
\end{equation*}
which contains all the multiplicative elements of type $\top a$. %

\begin{definitionrm}[Top Kleene Abstract Domain]\label{def:tkad}
	A poset $(A, \leq)$ is a \emph{top Kleene abstract domain} of a TopKAT $K$ if: 
	\begin{enumerate}[{\rm (i)}]
	\item There exists a Galois insertion, defined by  $\gamma : A \rightarrow \topp(K)$ and $\alpha : \topp(K) \rightarrow A$, 
		of the poset $(A,\leq_A)$ into the poset 
		$(\topp(K),\leq_K)$;  
		\item $A$ is countably-complete. \qed
		\end{enumerate}
\end{definitionrm}

The abstract semantic  function
\( \asema{\cdot} : T_{\Sigma, B} \rightarrow (A \rightarrow A) \)
 on a top Kleene abstraction $A$ 
can be therefore defined 
for the base case
$\mathtt{a}\in \Atom$ as
\[\asema{a} p^\sharp \triangleq \alpha(\gamma(p^\sharp) \intrp{a}),\]
while the remaining inductive cases are defined 
as in \eqref{def:abs-sem} for Kleene abstractions. 
The monotonicity and soundness properties of this abstract semantics hold, provided that 
the TopKAT is  $^*$-continuous\footnote{This condition plays a role similar to the CTC condition for bdKATs.}, which is referred to as 
TopKAT$^*$.
\begin{restatable}[Soundness of TopKAT Abstract Semantics]{theorem}{propasematop}
Let $A$ be a Kleene abstraction of a TopKAT$^*$ $K$
and $T_{\Sigma,B}$ be a language interpreted on $K$. For all $p^\sharp, q^\sharp \in A$, $a \in K$ and $\mathtt{t} \in T_{\Sigma, B}$:
	\begin{align*}
		\tag{monotonicity} p^\sharp \leq_A q^\sharp \Rightarrow \asema{t} p^\sharp \leq_A \asema{t} q^\sharp \\
		\tag{soundness} \alpha(\top a \intrp{t}) \leq_A \asema{t} \alpha(\top a)
	\end{align*}
\end{restatable}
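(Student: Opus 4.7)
The plan is to prove both properties by structural induction on $\mathtt{t} \in T_{\Sigma,B}$, mirroring the pattern of Theorem~\ref{prop:prop_asema} but adapted to the $\topp(K)$-based abstraction. For monotonicity, the base case $\mathtt{t}\equiv \mathtt{a}\in\Atom$ follows directly from isotony of $\gamma$, the monotonicity of right-multiplication by $\intrp{a}$ in $K$, and isotony of $\alpha$; the compound cases $\mathtt{t_1+t_2}$, $\mathtt{t_1\cdot t_2}$, and $\mathtt{t^*}$ reuse the argument of Theorem~\ref{prop:prop_asema} essentially verbatim, since the inductive clauses of $\asema{\cdot}$ are identical to those in the bdKAT setting and depend only on monotonicity and $+$-additivity.

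For soundness, the base case $\mathtt{t}\equiv \mathtt{a}\in\Atom$ uses the Galois-insertion inequality $\top a \leq \gamma(\alpha(\top a))$ (which applies because $\top a\in\topp(K)$) together with monotonicity of multiplication in $K$ and isotony of $\alpha$ to derive
\[ \alpha(\top a\, \intrp{a}) \leq_A \alpha(\gamma(\alpha(\top a))\, \intrp{a}) = \asema{a}\alpha(\top a). \]
For $\mathtt{t}\equiv \mathtt{t_1+t_2}$, the key facts are that $\topp(K)$ is closed under $+$ (by distributivity, $\top a+\top b = \top(a+b)$) and that $\alpha$, as the left adjoint of a Galois insertion, preserves binary joins, so
\[ \alpha(\top a\,(\intrp{t_1}+\intrp{t_2})) = \alpha(\top a\,\intrp{t_1}) + \alpha(\top a\,\intrp{t_2}); \]
the inductive hypotheses then yield the desired inequality. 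For $\mathtt{t}\equiv \mathtt{t_1\cdot t_2}$, associativity of $\cdot$ lets me set $b\triangleq a\intrp{t_1}$ so that $\top a\,\intrp{t_1}=\top b$; the IH on $t_1$ gives $\alpha(\top b)\leq_A \asema{t_1}\alpha(\top a)$, and applying the (already proved) monotonicity of $\asema{t_2}$ followed by the IH on $t_2$ gives $\alpha(\top b\,\intrp{t_2})\leq_A \asema{t_2}\alpha(\top b)\leq_A \asema{t_1\cdot t_2}\alpha(\top a)$.

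The Kleene iteration case $\mathtt{t}\equiv \mathtt{t_0^*}$ is where the $^*$-continuity hypothesis is essential. I would first establish, by a nested induction on $n$ using the structural IH on $t_0$ together with monotonicity of $\asema{t_0}$, the auxiliary inequality $\alpha(\top a\,(\intrp{t_0})^n) \leq_A (\asema{t_0})^n \alpha(\top a)$. Then $^*$-continuity, in the form $\top a\,(\intrp{t_0})^* = \bigvee_{n\in\mathbb{N}} \top a\,(\intrp{t_0})^n$, combined with the fact that $\alpha$ preserves countable joins (using that $A$ is countably-complete, so such joins exist on the abstract side), yields
\[ \alpha(\top a\,(\intrp{t_0})^*) = \bigvee_{n\in\mathbb{N}} \alpha(\top a\,(\intrp{t_0})^n) \leq_A \bigvee_{n\in\mathbb{N}} (\asema{t_0})^n \alpha(\top a) = \asema{t_0^*}\alpha(\top a). \]

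The main obstacle is the care required to verify that the Galois insertion is between $\topp(K)$ and $A$ (not between $\Test(K)$ and $A$ as in the bdKAT setting): one must confirm that all elements appearing in the proof, most notably $\top a\,\intrp{t}$ and its $+$- and $^*$-combinations, lie in $\topp(K)$ so that $\alpha$ is legitimately applied. Closure of $\topp(K)$ under $+$ and the rewriting $(\top a)\cdot b = \top(a b)$ handle this concern uniformly. The only genuinely delicate step is the Kleene case, where $^*$-continuity is indispensable to turn the algebraic iteration $(\intrp{t_0})^*$ into an explicit countable supremum that can be pushed through $\alpha$.
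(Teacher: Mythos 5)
Your proposal is correct and follows essentially the same route as the paper: structural induction for both properties, with the compound cases adapted from the bdKAT soundness theorem, and the Kleene-star case handled via the auxiliary claim $\alpha(\top a\,(\intrp{t_0})^n) \leq_A (\asema{t_0})^n \alpha(\top a)$ (proved by induction on $n$ using the structural IH and monotonicity) combined with $^*$-continuity and join-preservation of $\alpha$, exactly as in the paper's explicit treatment of that case. Your extra bookkeeping that all concrete elements stay in $\topp(K)$ (via $\top a\cdot b=\top(ab)$ and distributivity over $+$) is a point the paper leaves implicit, but it does not change the argument.
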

\begin{proof}
	The first property can be proved by induction on the structure of $\mathtt{t} \in T_{\Sigma, B}$. Notice that the inductive cases are the same as Theorem \ref{prop:prop_asema}, therefore we only need to verify the base case.
	\[ \asema{a} p^\sharp = \alpha(\gamma(p^\sharp) \intrp{a}) \leq \alpha(\gamma(q^\sharp) \intrp{a})   = \asema{a} q^\sharp \]
	where we used the isotony of $\alpha(\cdot)$, $\gamma(\cdot)$ and $\cdot$.
	
	\noindent Likewise we prove the second property by induction on the structure of ${\mathtt{t} \in T_{\Sigma, B}}$. The computation can be easily adapted from Theorem $\ref{prop:prop_asema}$. As an example we explicitly prove the $\mathtt{t_0^*}$ case.

	\vspace{5px} \noindent($\mathtt{t_0^*}$)\,: In order to prove the result we need a preliminary fact
	\begin{equation} \label{eqn:claim_4}
		\alpha(\top a (\intrp{t_0})^n) \leq (\asema{t_0})^n \alpha(\top a)
	\end{equation}
	which can be shown by induction on $n$. The base case holds as $\alpha(\top a (\intrp{t_0})^0) = \alpha(\top a) = (\asema{t_0})^0 \alpha(\top a)$. The inductive case can be proved as follows
	\[ \alpha(\top a (\intrp{t_0})^{n+1}) = \alpha(\top a (\intrp{t_0})^n \intrp{t_0}) \leq \asema{t_0} \alpha(\top a (\intrp{t_0})^n) \leq (\asema{t_0})^{n+1} \alpha(\top a) \]
	The previous fact and the \textit{$^*$-continuity} of $K$ yield the soundness
	\begin{align*}
		\alpha(\top a\intrp{t_0^*}) &= \\
		\alpha(\top a (\intrp{t_0})^*) &= \textrm{[By $^*$-continuity]} \\
		\alpha(\disj \top a (\intrp{t_0})^n) &= \textrm{[By additivity of $\alpha(\cdot)$]} \\
		\disj \alpha(\top a (\intrp{t_0})^n) &\leq \textrm{[By (\ref{eqn:claim_4})]} \\
		\disj (\asema{t_0})^n \alpha(\top a) &= \\
		\asema{t_0^*} \alpha(\top a). \tag*{\qed}
	\end{align*} 
\end{proof}

\subsection{Local Completeness Logic on TopKAT}
Completeness and triple validity are adapted to the TopKAT framework as follows.
Given a Top Kleene abstract domain $A$ on a TopKAT$^*$ $K$, 
$A$ is defined to be locally complete for $a\in K$ on an element $b \in K$, denoted by 
$\mathbb{C}^A_b(a)$, when
		\[ A(\top b a) = A(A(\top b) a) \]
holds. Moreover, $A$ is globally complete for $a$, denoted by $\mathbb{C}^A(a)$, when it is locally complete for any $b \in K$.

Likewise, a triple $\tri{a}{t}{b}$, with $a,b \in K$ and $\mathtt{t} \in T_{\Sigma, B}$, 
is valid, denoted by ${\vtrit{a}{t}{b}}$, when: 
\[
\text{(1)~~$\top b \leq \top a \intrp{t}$;\quad
			(2)~~$\asema{t} \alpha(\top a) = \alpha(\top b) = \alpha(\top a \intrp{t})$. 
			}
\]
			
The corresponding proof system, denoted by $\LCTK_A$, has the same rules 
of $\LCK_A$ in Figure \ref{fig:lcl} except (transfer),
(relax) and (iterate) which are modified as follows:  
\begin{center}
{\small
			\begin{prooftree}
				\AxiomC{$\mathtt{c} \in \Atom$}
				\AxiomC{$\mathbb{C}^A_a(\intrp{c})$}
				\RightLabel{(transfer)}
				\BinaryInfC{$\ptrit{a}{c}{a \intrp{c}}$}
			\end{prooftree}
			\begin{prooftree}
				\AxiomC{$\top a' \leq \top a \leq A(\top a') \qquad \ptrit{a'}{t}{b'} \qquad \top b \leq \top b' \leq A(\top b)$}
				\RightLabel{(relax)}
				\UnaryInfC{$\ptrit{a}{t}{b}$}
			\end{prooftree}
			\begin{prooftree}
				\AxiomC{$\ptrit{a}{t}{b}$}
				\AxiomC{$\top b \leq A(\top a)$}
				\RightLabel{(iterate)}
				\BinaryInfC{$\ptrit{a}{t^*}{a + b}$}
			\end{prooftree}
}
\end{center}

This incarnation $\LCTK_A$ of local completeness logic for 
TopKAT$^*$ turns out to be logically sound and, under additional hypotheses, complete. 

\begin{restatable}[Logical Soundness of $\vdash^{{\scriptscriptstyle\mathrm{TK}}}_{A}$]{theorem}{thmlsoundtop} \label{thm:lsound_top}
	If $\ptrit{a}{t}{b}$ then
	\begin{enumerate}[{\rm (i)}]
		\item $\top b \leq \top a \intrp{t}$;
		\item $\asema{t} \alpha(\top a) = \alpha(\top b) = \alpha(\top a \intrp{t})$.
	\end{enumerate}
\end{restatable}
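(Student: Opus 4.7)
The plan is to mirror the proof of Theorem \ref{thm:lsound}, transferring each step from the bdKAT setting to the TopKAT setting by replacing the strongest postcondition $\bdia{\intrp{t}}p$ with the ``topped'' element $\top a \intrp{t}$. As a preliminary reduction, observe that (i) together with monotonicity of $\alpha$ and the soundness part of Theorem (Soundness of TopKAT Abstract Semantics) already yields
\[
\alpha(\top b) \leq \alpha(\top a \intrp{t}) \leq \asema{t}\alpha(\top a).
\]
So to establish (ii) it suffices to prove the single equality $\asema{t}\alpha(\top a)=\alpha(\top b)$; the sandwich above then forces the second equality of (ii) automatically. Thus the induction on the derivation of $\ptrit{a}{t}{b}$ only needs to track (i) and this one equality.

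For \textbf{(transfer)}, (i) is immediate from $b = a\intrp{c}$, and the equality $\asema{c}\alpha(\top a)=\alpha(\top a\intrp{c})$ is exactly the premise $\mathbb{C}^A_a(\intrp{c})$ unwound through the definition $\asema{c}p^\sharp\ud \alpha(\gamma(p^\sharp)\intrp{c})$, using that $\gamma\circ\alpha(\top a)=A(\top a)$. For \textbf{(relax)}, the double inequalities $\top a'\leq \top a\leq A(\top a')$ and $\top b\leq \top b'\leq A(\top b)$, combined with isotony of $\gamma\circ\alpha$ and idempotence of the upper closure (analogues of \eqref{eqn:uco_idem} and \eqref{eqn:gamma_inject}), force $\alpha(\top a)=\alpha(\top a')$ and $\alpha(\top b)=\alpha(\top b')$, so the inductive validity of the primed triple transfers. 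For \textbf{(seq)}, the inductive hypotheses $\top r\leq\top a\intrp{t_1}$ and $\top b\leq\top r\intrp{t_2}$ combine by right-multiplication and associativity as $\top b\leq\top r\intrp{t_2}\leq\top a\intrp{t_1}\intrp{t_2}=\top a\intrp{t_1\cdot t_2}$, and $\asema{t_1\cdot t_2}\alpha(\top a)=\asema{t_2}\asema{t_1}\alpha(\top a)=\asema{t_2}\alpha(\top r)=\alpha(\top b)$. The \textbf{(join)} case uses additivity of $\cdot$ and $\alpha$ exactly as in the bdKAT proof.

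The genuinely new content sits in the two Kleene-star rules, and this is where I expect the main obstacle. For \textbf{(rec)}, the required $\top b\leq\top a\intrp{t^*}$ follows from the chain $\top r\leq\top a\intrp{t}\leq\top a\intrp{t}^*$ (using $x\leq x^*$), then $\top b\leq \top(a+r)\intrp{t}^*\leq \top a\intrp{t}^*+\top r\intrp{t}^*\leq\top a\intrp{t}^*+\top a\intrp{t}^*\intrp{t}^*=\top a\intrp{t}^*$, where the key algebraic identity $\intrp{t}^*\intrp{t}^*=\intrp{t}^*$ is used; the abstract equality then drops out of the sandwich argument applied to $\alpha(\top b)\leq\asema{t^*}\alpha(\top a)\leq\asema{t^*}\alpha(\top(a+r))=\alpha(\top b)$ by inductive hypothesis and monotonicity. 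For \textbf{(iterate)}, the analogue of \eqref{eqn:claim_7} is needed: if $\asema{t}\alpha(\top a)\leq\alpha(\top a)$ then $\disj(\asema{t})^n\alpha(\top a)=\alpha(\top a)$, shown by induction on $n$. The hypothesis $\top b\leq A(\top a)$ delivers $\alpha(\top b)\leq\alpha(\top a)$ (via $\gamma$-injectivity and idempotence), hence $\asema{t}\alpha(\top a)=\alpha(\top b)\leq\alpha(\top a)$, so the lemma applies and gives $\asema{t^*}\alpha(\top a)=\alpha(\top a)=\alpha(\top a)+\alpha(\top b)=\alpha(\top(a+b))$; part (i) uses $1\leq\intrp{t}^*$ and $\intrp{t}\leq\intrp{t}^*$ as before.

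The main obstacle is ensuring that the Kleene-star manipulations really do go through in a plain TopKAT. Unlike bdKAT, where the unfold and induction properties of $\bdia{\cdot}$ follow directly from Proposition \ref{prop:prop_bdia}, here I rely on the $^*$-continuity of $K$ hidden in the TopKAT$^*$ hypothesis, which licenses $\top a\intrp{t}^*=\disj \top a\intrp{t}^n$ and is exactly what makes the abstract semantics $\asema{t^*}\ud\disj(\asema{t})^n$ meaningful and sound. Once $^*$-continuity is invoked in these two cases, the remaining bookkeeping is routine and mirrors \cite{bruni_logic_2021} and Theorem \ref{thm:lsound}.
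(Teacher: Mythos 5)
Your proposal is correct and follows essentially the same route as the paper: the same preliminary sandwich reduction (via part (i), monotonicity of $\alpha$, and the soundness of the TopKAT abstract semantics) to the single equality $\asema{t}\alpha(\top a)=\alpha(\top b)$, followed by induction on the derivation with the cases handled exactly as in Theorem~\ref{thm:lsound}, and the (transfer) and (iterate) arguments matching the ones the paper spells out (including the analogue of \eqref{eqn:claim_7}). One minor remark: the $^*$-continuity of $K$ is only needed where you invoke the abstract-semantics soundness theorem; the concrete-side star manipulations in (rec) and (iterate) use only the plain KAT facts $1\leq k^*$, $k\leq k^*$, $k^*k^*=k^*$.
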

\begin{proof}
	The proof is as in Theorem \ref{thm:lsound}. As an example we explicitly prove the cases (transfer) and (iterate).
%

	\vspace{5px} \noindent (transfer)\,: (i) is immediate because $\top b = \top a \intrp{c} \leq \top a \intrp{c}$. (ii) instead is a consequence of local completeness ${\asema{c} \alpha(\top a) = \alpha(\gamma(\alpha(\top a)) \intrp{c}) \stackrel{\text{LC}}{=} \alpha(\top a \intrp{c}) = \alpha(\top b)}$.

	\vspace{5px} \noindent (iterate)\,: For any $k \in K$ it holds $1 \leq k^*$ and $k \leq k^*$ \cite[Section 2.1]{KOZEN1994366}. In particular $\top b \leq \top a \intrp{t} \leq \top a (\intrp{t})^*$ and $\top a = \top a \cdot 1 \leq \top a (\intrp{t})^*$, thus implying (i) ${\top (a+b)\leq \top a (\intrp{t})^* = \top a \intrp{t^*}}$.	To show the (ii) we need a preliminary fact
	\begin{equation} \label{eqn:claim_8}
		\asema{t} \alpha(\top a) \leq \alpha(\top a) \Rightarrow \disj (\asema{t})^n \alpha(\top a)  = \alpha(\top a)
	\end{equation}
	As a first step, let us prove by induction on $n$ that $\alpha(\top a)$ is an upper-bound of the elements of the disjunction, i.e., $\forall n \in \mathbb{N} . \; (\asema{t})^n \alpha(\top a) \leq \alpha(\top a)$. The base case is trivially true $(\asema{t})^0 \alpha(\top a) = \alpha(\top a) \leq \alpha(\top a)$, while the inductive case can be shown as follows: ${(\asema{t})^{n+1} \alpha(\top a) = (\asema{t})^n \asema{t} \alpha(\top a) \leq (\asema{t})^n \alpha(\top a) \leq \alpha(\top a)}$.
	Notice that $\alpha(\top a)$ is also the least upper bound because it is part of the disjunction as $(\asema{t})^0 \alpha(\top a) = \alpha(\top a)$. The hypothesis $\top b \leq A(\top a)$ can be rewritten as $\alpha(\top b) \leq \alpha(\top a)$ by isotony of $\alpha(\cdot)$,$\gamma(\cdot)$, (\ref{eqn:gamma_inject}), (\ref{eqn:uco_idem}). Finally, $\asema{t} \alpha(\top a) = \alpha(\top b) \leq \alpha(\top a)$, meaning that the premise of (\ref{eqn:claim_8}) holds, so that we can conclude
	\[ \asema{t^*} \alpha(\top a) =  \disj (\asema{t})^n \alpha(\top a) = \alpha(\top a) = \alpha(\top a) + \alpha(\top b) = \alpha(\top (a + b)) \] \qed
\end{proof}

\noindent 
Logical completeness needs the following additional conditions:
\begin{enumerate}[{\rm (a)}]
	\item\label{tk1} Likewise $\LCK_A$, the same infinitary rule for Kleene star:
	\begin{prooftree}
		\AxiomC{$\forall n \in \mathbb{N} . \; \ptrit{a_n}{t}{a_{n+1}}$}
		\RightLabel{(limit)}
		\UnaryInfC{$\ptrit{a_0}{t^*}{\disj a_n}$}
	\end{prooftree}
	where we assume that:
	\begin{itemize}
		\item $\disj a_n$ always exists. Let us remark that for bdKAT, 
		such explicit condition was not needed, as it was entailed by the CTC requirement on the KAT.
		\item $\top$ distributes over $\disj a_n$, i.e., 
		\(\top \disj a_n = \disj \top a_n \).
	\end{itemize}
	\item\label{tk2} Global completeness of all the primitive actions and tests occurring in the program.
\end{enumerate}

As in $\LCK_A$, (limit) is sound:
\begin{restatable}{lemma}{lmlsoundlimittop}\label{lemma:lmlsoundlimittop}
	With the hypothesis of Theorem \ref{thm:lsound_top}, the rule (limit) is logically sound.
\end{restatable}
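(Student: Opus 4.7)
The plan is to mirror the proof of Lemma \ref{lemma:lsound_limit} for bdKAT, adapting it to work with the TopKAT proxy $\top a$ for ``program states'' and using the two additional distributivity/existence assumptions on lubs. Assume the rule (limit) has just concluded $\ptrit{a_0}{t^*}{\disj a_n}$ from premises $\ptrit{a_n}{t}{a_{n+1}}$ for all $n\in\mathbb{N}$. I need to verify validity conditions (i) and (ii) of the TopKAT triple, appealing to the soundness Theorem \ref{thm:lsound_top} on the sub-derivations of each premise (structural induction) to obtain $\top a_{n+1}\leq \top a_n\intrp{t}$ and $\asema{t}\alpha(\top a_n)=\alpha(\top a_{n+1})$.

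For condition (i), I would first show by induction on $n$ that $\top a_n\leq \top a_0\intrp{t^*}$. The base case uses $1\leq \intrp{t}^*=\intrp{t^*}$, giving $\top a_0=\top a_0\cdot 1\leq \top a_0\intrp{t^*}$. For the inductive step, the structural IH gives $\top a_{n+1}\leq \top a_n\intrp{t}$, which combined with $\top a_n\leq \top a_0\intrp{t^*}$ and monotonicity of multiplication yields $\top a_{n+1}\leq \top a_0\intrp{t}^*\intrp{t}\leq \top a_0\intrp{t}^*=\top a_0\intrp{t^*}$, using the axiom $\intrp{t}^*\intrp{t}\leq \intrp{t}^*$ from \eqref{eqn:star_unfold}. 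Taking lubs and invoking the assumed distributivity $\top\disj a_n=\disj \top a_n$ gives $\top\disj a_n=\disj \top a_n\leq \top a_0\intrp{t^*}$.

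For condition (ii), the key auxiliary fact is $(\asema{t})^n\alpha(\top a_0)=\alpha(\top a_n)$, proved by induction on $n$: the base case is trivial, and the inductive step uses the structural IH $\asema{t}\alpha(\top a_n)=\alpha(\top a_{n+1})$ to get $(\asema{t})^{n+1}\alpha(\top a_0)=\asema{t}\alpha(\top a_n)=\alpha(\top a_{n+1})$. Then, by the definition of $\asema{t^*}$, additivity of $\alpha$, and the assumed distributivity of $\top$ over $\disj a_n$:
\begin{equation*}
\asema{t^*}\alpha(\top a_0)=\disj (\asema{t})^n\alpha(\top a_0)=\disj \alpha(\top a_n)=\alpha\bigl(\disj \top a_n\bigr)=\alpha\bigl(\top\disj a_n\bigr).
\end{equation*}
The remaining equality $\alpha(\top\disj a_n)=\alpha(\top a_0\intrp{t^*})$ then follows from (i) sandwiched with soundness of the abstract semantics: $\alpha(\top\disj a_n)\leq \alpha(\top a_0\intrp{t^*})\leq \asema{t^*}\alpha(\top a_0)=\alpha(\top\disj a_n)$.

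The main obstacle is bookkeeping around the top element: in the bdKAT case the lub $\disj p_n$ lives in $\Test(K)$ and is guaranteed by CTC, and additivity of $\bdia{\cdot}$ in each argument does the heavy lifting; here, the analogous step requires both the \emph{existence} of $\disj a_n$ in $K$ and the \emph{distributivity} $\top\disj a_n=\disj \top a_n$, which are precisely the side conditions attached to the (limit) rule in \ref{tk1}. Apart from invoking these hypotheses at the right moment, the rest of the argument is a straightforward structural/numerical induction.
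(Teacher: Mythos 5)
Your proposal is correct and follows essentially the same route as the paper's proof: structural induction on the premises plus a numerical induction giving $\top a_n \leq \top a_0\intrp{t^*}$ for condition (i), the auxiliary fact $(\asema{t})^n\alpha(\top a_0)=\alpha(\top a_n)$ for condition (ii), and the existence and distributivity assumptions of the (limit) rule invoked at exactly the same points. The only cosmetic difference is that you explicitly derive the second equality $\alpha(\top\disj a_n)=\alpha(\top a_0\intrp{t^*})$ by the soundness sandwich, which the paper instead delegates to the general remark made at the start of the soundness theorem.
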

\begin{proof}
	Instantiating (i) we obtain $\top \disj a_n \leq \top a_0 \intrp{t^*}$. The distributivity condition of (limit) yields $\top \disj a_n = \disj \top a_n$. For this reason (i) is equivalent to ${\forall n \in \mathbb{N} . \; \top a_0 \intrp{t^*} \geq \top a_n}$. This fact can be shown by induction on $n$.

	\vspace{5px} \noindent For $n=0$\,: $\top a_0 \intrp{t^*} = \top a_0 (\intrp{t})^* \geq \top a_0$, as for any $k \in K$, $k^* \geq 1$ by (\ref{eqn:star_unfold}).
	
	\vspace{5px} \noindent For $n+1$\,: The rule has a premise $\ptrit{a_n}{t}{a_{n+1}}$, and by induction (on the derivation tree) we have that $\vtrit{a_n}{t}{a_{n+1}}$.
		\begin{align*}
			\top a_0 \intrp{t^*} &= \\
			\top a_0 (\intrp{t})^* &\geq \textrm{[By (\ref{eqn:star_unfold})]} \\
			\top a_0 (1 + (\intrp{t})^* \intrp{t}) &\geq \\
			\top a_0 (\intrp{t})^* \intrp{t} &= \\
			\top a_0 \intrp{t^*} \intrp{t} &\geq \\
			\top a_n \intrp{t} &\geq \textrm{[By $\vtrit{a_n}{t}{a_{n+1}}$]} \\
			\top a_{n+1}
		\end{align*}
	
	\noindent For condition (ii) instead we need to show $\asema{t^*} \alpha(\top a_0) = \alpha(\top \disj a_n)$, provided that $\asema{t} \alpha(\top a_n) = \alpha(\top a_{n+1})$.
	By definition, $\asema{t^*} \alpha(\top a_0) = \disj (\asema{t})^n \alpha(\top a_0)$ and we have also that $\alpha(\top \disj a_n) = \alpha(\disj \top a_n) = \disj \alpha(\top a_n)$. Therefore (ii) is implied by
	\[\forall n \in \mathbb{N} . \; (\asema{t})^n \alpha(\top a_0) = \alpha(\top a_n) \]
	That can be show by induction on $n$. The base case is trivial $(\asema{t})^0 \alpha(\top a_0) = \alpha(\top a_0)$, while the inductive can be shown as follows
	\begin{align*} 
	(\asema{t})^{n+1} \alpha(\top a_0) = \asema{t} (\asema{t})^n \alpha(\top a_0) = \asema{t} \alpha(\top a_n) = 
	\alpha(\top a_{n+1}). 
    \tag*{\qed}
	\end{align*} 
\end{proof}

Condition \eqref{tk2} entails global completeness:
\begin{restatable}{lemma}{lmglobcompltop} \label{lemma:glob_compl_top}
	Let $K$ a TopKAT$^*$, $A$ a top Kleene abstract domain on $K$ and $T_{\Sigma,B}$ a KAT language on $K$. For any $\mathtt{t} \in T_{\Sigma, B}$ and $a \in K$ we have
	\[ (\forall \mathtt{b} \in \Atom(\mathtt{t}) . \; \mathbb{C}^A(\intrp{b})) \Rightarrow \asema{t} \alpha(\top a) = \alpha(\top a \intrp{t}) \]
\end{restatable}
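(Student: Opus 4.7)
The plan is to adapt the proof of Lemma~\ref{lemma:glob_compl} to the TopKAT$^*$ setting, proceeding by structural induction on $\mathtt{t}\in T_{\Sigma,B}$. The key algebraic observation that makes the induction go through is that for any $a,b\in K$, the product $\top a b$ equals $\top(ab)$ by associativity, so intermediate expressions of the form $\top a \intrp{t_1}$ are still elements of $\topp(K)$, and the induction hypothesis can be applied on them.

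For the base case $\mathtt{b}\in \Atom$, I would expand $\asema{b}\alpha(\top a) = \alpha(\gamma(\alpha(\top a))\intrp{b}) = \alpha(A(\top a)\intrp{b})$, then invoke the global completeness hypothesis $\mathbb{C}^A(\intrp{b})$ to get $A(\top a \intrp{b}) = A(A(\top a)\intrp{b})$, and finally apply injectivity of $\gamma$ (property \eqref{eqn:gamma_inject}) to strip off the outer $\gamma$ and conclude $\alpha(\top a\intrp{b}) = \alpha(A(\top a)\intrp{b}) = \asema{b}\alpha(\top a)$. The sum case $\mathtt{t_1+t_2}$ is routine: expand $\intrp{t_1+t_2}=\intrp{t_1}+\intrp{t_2}$, use right-distributivity of $\cdot$ over $+$ to split $\top a \intrp{t_1+t_2} = \top a \intrp{t_1} + \top a \intrp{t_2}$, and apply additivity of $\alpha$ together with the inductive hypothesis.

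For the product case $\mathtt{t_1\cdot t_2}$, the inductive hypothesis for $t_1$ gives $\asema{t_1}\alpha(\top a) = \alpha(\top a \intrp{t_1})$. Here is where the associativity trick is essential: since $\top a \intrp{t_1} = \top(a\intrp{t_1})\in \topp(K)$, I can apply the inductive hypothesis for $t_2$ to the element $a\intrp{t_1}$, yielding $\asema{t_2}\alpha(\top a\intrp{t_1}) = \alpha(\top a\intrp{t_1}\intrp{t_2}) = \alpha(\top a \intrp{t_1\cdot t_2})$. Chaining these gives the result.

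The star case $\mathtt{t_0^*}$ is the main obstacle and requires the $^*$-continuity assumption. I would first establish the auxiliary claim $(\asema{t_0})^n \alpha(\top a) = \alpha(\top a (\intrp{t_0})^n)$ by induction on $n$, reusing the same associativity trick as in the product case for the inductive step. Then, using the $^*$-continuity axiom $\top a (\intrp{t_0})^* \cdot 1 = \disj \top a (\intrp{t_0})^n$ (which is the only available way to convert a star into a countable sup in a TopKAT), together with additivity of $\alpha$ over countable sups (guaranteed by Definition~\ref{def:tkad}(ii)), I conclude
\begin{align*}
\asema{t_0^*}\alpha(\top a) = {\textstyle\disj}(\asema{t_0})^n\alpha(\top a) = {\textstyle\disj}\alpha(\top a(\intrp{t_0})^n) = \alpha({\textstyle\disj}\top a(\intrp{t_0})^n) = \alpha(\top a \intrp{t_0^*}).
\end{align*}
The delicate point is ensuring each equality uses only the hypotheses available in a TopKAT$^*$ (no backward diamond, no CTC); $^*$-continuity is precisely what replaces the CTC-based identity~\eqref{eqn:csema_star} used in the bdKAT proof.
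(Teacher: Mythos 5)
Your proof is correct and follows essentially the same route as the paper's: the paper simply states that the argument is as in Lemma~\ref{lemma:glob_compl} and details only the $\mathtt{t_0^*}$ case, which you reproduce in the same way (the auxiliary claim $(\asema{t_0})^n \alpha(\top a) = \alpha(\top a (\intrp{t_0})^n)$, $^*$-continuity to turn the star into a countable join, and preservation of that join by $\alpha$). Your explicit observation that $\top a \intrp{t_1} \in \topp(K)$ by associativity, which licenses applying the induction hypothesis at the element $a\intrp{t_1}$, merely spells out what the paper leaves implicit.
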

\begin{proof}
	The proof is as in Lemma \ref{lemma:glob_compl}. We only prove explicitly the $\mathtt{t_0^*}$ case.
	
%
%
	\vspace{5px} \noindent $(\mathtt{t_0^*})$\,: The result is a consequence of the following claim
		\begin{equation} \label{eqn:claim_5}
			\forall n \in \mathbb{N}. \; (\asema{t_0})^n \alpha(\top a) = \alpha(\top a (\intrp{t_0})^n)
		\end{equation}
		that can be proved by induction. The base case is as follows: 
		$(\asema{t_0})^0 \alpha(\top a) = \alpha(\top a) = \alpha(\top a (\intrp{t_0})^0)$. The inductive case can be proved as follows
		\begin{align*}
			(\asema{t_0})^{n+1} \alpha(\top a) &= \\
			\asema{t_0} (\asema{t_0})^n \alpha(\top a) &= \\
			\asema{t_0} \alpha(\top a (\intrp{t_0})^n) &= \\
			\alpha(\top a (\intrp{t_0})^n \intrp{t_0}) &= \\
			\alpha(\top a (\intrp{t_0})^{n+1})
		\end{align*}
		Moreover by $^*$-continuity we have that $\top a \intrp{t_0^*} = \top a (\intrp{t_0})^* = \disj \top a (\intrp{t_0})^n$, meaning that
		\begin{align*}
			\asema{t_0^*} \alpha(\top a) &= \\
			\disj (\asema{t_0})^n \alpha(\top a) &= \textrm{[By (\ref{eqn:claim_5})]} \\
			\disj \alpha(\top a (\intrp{t_0})^n) &= \textrm{[Additivity of $\alpha(\cdot)$)]} \\
			\alpha(\disj \top a (\intrp{t_0})^n) &= \\
			\alpha(\top a \intrp{t_0^*}) \tag*{\qed}
		\end{align*} 
\end{proof}

\begin{restatable}[Logical Completeness of $\vdash^{{\scriptscriptstyle\mathrm{TK}}}_{A}$]{theorem}{thmlcompltop} \label{thm:lcompl_top}
Assume that conditions \eqref{tk1} and \eqref{tk2} hold.
	If $\vtrit{a}{t}{b}$ then $\ptrit{a}{t}{b}$.
\end{restatable}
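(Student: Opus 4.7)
The plan is to mirror the two-phase structure of the proof of Theorem~\ref{thm:lcompl}, adapted to the TopKAT setting. First I would establish, by structural induction on $\mathtt{t} \in T_{\Sigma, B}$, that for every $a \in K$ there exists some $c \in K$ with $\top c = \top a \intrp{t}$ such that $\ptrit{a}{t}{c}$ is derivable in $\LCTK_A$. Second, from the hypothesis $\vtrit{a}{t}{b}$ we obtain $\top b \leq \top a \intrp{t} = \top c$ and $\alpha(\top b) = \alpha(\top a \intrp{t}) = \alpha(\top c)$, hence $A(\top b) = A(\top c)$; a single application of (relax) with $a' = a$ and $b' = c$ then yields $\ptrit{a}{t}{b}$.

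For the inductive part, the atomic case follows directly from (transfer), since the global completeness hypothesis~\eqref{tk2} supplies the side-condition $\mathbb{C}^A_a(\intrp{c})$ for every $\mathtt{c} \in \Atom$. For $\mathtt{t_1 + t_2}$, Lemma~\ref{lemma:glob_compl_top} certifies the validity of each sub-triple $\vtrit{a}{t_i}{a\intrp{t_i}}$, the inductive hypothesis gives derivations for both branches, and (join) closes the case using $\top(a\intrp{t_1} + a\intrp{t_2}) = \top a \intrp{t_1 + t_2}$. For $\mathtt{t_1 \cdot t_2}$, the inductive hypothesis applied first at $a$ and then at $a\intrp{t_1}$ chains through (seq). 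For the Kleene star $\mathtt{t_0^*}$, I would set $a_n \triangleq a(\intrp{t_0})^n$, use Lemma~\ref{lemma:glob_compl_top} to check the validity of $\vtrit{a_n}{t_0}{a_{n+1}}$ for every $n$, invoke the inductive hypothesis to obtain $\ptrit{a_n}{t_0}{a_{n+1}}$, and apply the (limit) rule to derive $\ptrit{a}{t_0^*}{\disj a_n}$.

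The main obstacle is precisely this Kleene-star case, because unlike in a bdKAT the ``strongest postcondition'' $\top a \intrp{t_0^*}$ is not a priori expressible as a countable lub of topped elements. Closing the gap requires both parts of condition~\eqref{tk1}: the $^*$-continuity of $K$ yields $\top a \intrp{t_0}^* = \disj \top a (\intrp{t_0})^n = \disj \top a_n$, while the distributivity clause $\top \disj a_n = \disj \top a_n$ allows us to identify $\top \disj a_n$ with $\top a \intrp{t_0^*}$. With this identity in hand, $c \triangleq \disj a_n$ meets the specification of the inductive claim, and the final (relax) step is routine, completing the proof.
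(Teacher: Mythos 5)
Your proposal is correct and matches the paper's proof essentially step for step: both proceed by first deriving the strongest-postcondition triple $\ptrit{a}{t}{a\intrp{t}}$ by structural induction (using condition~\eqref{tk2} and Lemma~\ref{lemma:glob_compl_top}, with the Kleene-star case handled through the sequence $a_n = a(\intrp{t_0})^n$ and the rule (limit)), and then concluding with a single application of (relax). The only negligible difference is that the paper verifies the existence of $\disj a_n$ and the identity $\top \disj a_n = \disj \top a_n$ for this specific sequence directly from $^*$-continuity, whereas you partly invoke them as the assumptions packaged in condition~\eqref{tk1}; both justifications are sound.
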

\begin{proof}
	The proof is as in Theorem \ref{thm:lcompl}. The case $(\mathtt{t_0^*})$ requires special attention, and is proved explicitly.

%
	\vspace{5px} \noindent ($\mathtt{t_0^*}$)\,: Let $a_n = a_0 (\intrp{t_0})^n$. As a first step we prove that ${\forall n \in \mathbb{N} . \; \vtrit{a_n}{t_0}{a_{n+1}}}$: condition (i) holds because $\top a_{n+1} = \top a_0 (\intrp{t_0})^{n+1} = \top a_0 (\intrp{t_0})^n \intrp{t_0} = \top a_n \intrp{t_0}$, while (ii) holds by Lemma \ref{lemma:glob_compl_top}, $\asema{t_0} \alpha(\top a_n) = \alpha(\top a_n \intrp{t_0}) = \alpha(\top a_{n+1})$. Finally, the inductive hypothesis yields $\forall n \in \mathbb{N} . \; \ptrit{a_n}{t_0}{a_{n+1}}$.

	We want to use the rule (limit) with postcondition $\disj a_n$. To do so, we need to show two additional facts: 1) the disjunction exists 2) $\top$ distributes over the disjunction. 1) holds as $a_0 (\intrp{t_0})^* = \disj a_0 (\intrp{t_0})^n = \disj a_n$, where the first equality holds by $^*$-continuity. By $^*$-continuity we have that $\top a_0 (\intrp{t_0})^* = \disj \top a_0 (\intrp{t_0})^n = \disj \top a_n$ and $\top a_0 (\intrp{t_0})^* = \top \disj a_0 (\intrp{t_0})^n = \top \disj a_n$, thus implying $\disj \top a_n = \top \disj a_n$. Finally, by (limit) we obtain the result $\ptrit{a}{t_0^*}{a \intrp{t_0^*}}$. \qed
%
\end{proof}

Let us describe an example of derivation in $\LCTK_A$. 

\begin{examplebf}
	Consider a relational KAT $K=\wp(\bZ\times \bZ)$ on the set of integers $\mathbb{Z}$, where $\mathbf{1}_K \triangleq \{(z,z) \mid z \in \mathbb{Z} \}$ and $\mathbf{0}_K \triangleq \varnothing$. Notice that $\bZ\times \bZ \in K$ is the top element $\top$ of $K$, meaning that $K$ is a TopKAT. Let us consider a language with primitive actions $\Sigma = \{ \mathtt{x:= x+1} \}$ and primitive tests $B = \{ \mathtt{x \geq 0}, \mathtt{x < 0} \}$. The evaluation function ${u : \Sigma \cup B \ra K}$ is defined as expected by the following relations:
	\begin{align*}
		&u(\mathtt{x := x+1}) \ud \{(z,z+1) \mid z \in \mathbb{Z} \},\\ 
		& u(\mathtt{x \geq 0}) \ud \{(z,z) \mid z \in \mathbb{Z}, z \geq 0 \}, \\
		&u(\mathtt{x < 0}) \ud \{(z,z) \mid z \in \mathbb{Z}, z < 0 \}.
	\end{align*}
	Consider the following sign abstraction $\Sign\ud \{\bZ, \bZ_{\leq 0}, \bZ_{\neq 0}, \bZ_{\geq 0}, \bZ_{<0}, \bZ_{=0}, \bZ_{>0}, 
	\varnothing \}$ of $\wp(\mathbb{Z})$, whose abstraction and concretization maps are straightforward. 
	Let us verify that the program \[{\mathtt{r} \equiv \big((\mathtt{x \geq 0}) \cdot (\mathtt{x := x+1})\big)^* \cdot (\mathtt{x < 0})}\] does not terminate with precondition $p \triangleq \{(0,0), (10,10)\}$, i.e., we prove the specification $\Spec \triangleq \varnothing$. Let us define the following auxiliary elements: $q \ud \{(1,1),(11,11)\} $, ${s \ud p + q}$, ${t_{\geq 0} \ud \{(x,z) \mid x\in \mathbb{Z}, z\in \bZ_{\geq 0} \}}$, and observe that $\Sign(t_{\geq 0})= t_{\geq 0}$. 
	
	\noindent
	The following local completeness conditions for the atoms hold: 
	\begin{align*}
		\alpha(\Sign(\top p) \intrp{x \geq 0}) &= \alpha(t_{\geq 0} \intrp{x \geq 0}) = \mathbb{Z}_{\geq 0} = \alpha(\top p \intrp{x \geq 0}), \\
		\alpha(\Sign(\top p) \intrp{x := x+1}) &= \alpha(t_{\geq 0} \intrp{x := x+1}) = \mathbb{Z}_{> 0} = \alpha(\top p \intrp{x:= x+1}), \\
		\alpha(\Sign(\top s) \intrp{x < 0}) &= \alpha(t_{\geq 0} \intrp{x < 0}) = \varnothing = \alpha(\top s \intrp{x < 0}).
			\end{align*}
Moreover, we also have that:
	\begin{align*}
		\top q &= \{(x,z) \mid x \in \mathbb{Z}, z \in \{1,11\} \} \leq \: t_{\geq 0} = \Sign(\top p).
	\end{align*}
	
	\noindent The following derivation shows that the triple $[p] \;\mathtt{r}\; [\mathbf{0}_K]$
	is provable in $\LCTK_{\Sign}$:
	\begin{scprooftree}{0.63}
		\AxiomC{$\mathbb{C}^{\Sign}_p(\intrp{x \geq 0})$}
		\RightLabel{(transfer)}
		\UnaryInfC{$\patrit{p}{x \geq 0}{p}{\Sign}$}
		\AxiomC{$\mathbb{C}^{\Sign}_p(\intrp{x := x+1})$}
		\RightLabel{(transfer)}
		\UnaryInfC{$\patrit{p}{x := x+1}{q}{\Sign}$}
		\RightLabel{(seq)}
		\BinaryInfC{$\patrit{p}{(x \geq 0) \cdot (x := x+1)}{q}{\Sign}$}
		\AxiomC{$\top q \leq \Sign(\top p)$}
		\RightLabel{(iterate)}
		\BinaryInfC{$\patrit{p}{\big((x \geq 0) \cdot (x := x+1)\big)^*}{s}{\Sign}$}
		\AxiomC{$\mathbb{C}^{\Sign}_s(\intrp{x < 0})$}
		\RightLabel{(transfer)}
		\UnaryInfC{$\patrit{s}{x < 0}{\mathbf{0}_K}{\Sign}$}
		\RightLabel{(seq)}
		\BinaryInfC{$\patrit{p}{\big((x \geq 0) \cdot (x := x+1)\big)^* \cdot (x < 0)}{\mathbf{0}_K}{\Sign}$}
\end{scprooftree}
	
	\medskip
	\noindent
	By Theorem \ref{thm:lsound_top}, we have that $\top \mathbf{0}_K \subseteq \top p \intrp{r} \subseteq \Sign(\top \mathbf{0}_K) = \varnothing$, meaning that the program does not terminate, and $\Spec$ is satisfied as $\top p \intrp{r} = \varnothing = \top \Spec$.
\qed
\end{examplebf}

\subsection{Relationship with Under-Approximation Logic} \label{sect:lctk_ul}
We have shown in Section \ref{sec:ual} that the backward-diamond formulation of $\LCK$ generalizes $\UL$. The same can be done for the TopKAT formulation. A TopKAT version of the $\UL$ proof system has been already proposed in \cite[Figure 6]{zhang_incorrectness_2022}. The reduction here considered refers to such system, with the following minor differences:
\begin{itemize}
	\item We consider only propositional fragments of the logic, meaning that the rules (assume) and 
	(identity) are replaced by the following single  (transfer) rule:
	\vspace*{-5pt}
	\begin{prooftree}
		\AxiomC{$\mathtt{c} \in \Atom$}
		\RightLabel{(transfer)}
		\UnaryInfC{$\ptriul{a}{c}{a \intrp{c}}$}
	\end{prooftree}
	\item The premises of the (consequence) rule in \cite[Figure 6]{zhang_incorrectness_2022}, $b \leq b'$ and $c' \leq c$, are relaxed to $\top b \leq \top b'$ and $\top c' \leq \top c$. Notice that the former implies the latter. Furthermore, the soundness proof of \cite[Theorem 4]{zhang_incorrectness_2022} is not affected by this change, because
	\( (\top b' \geq \top b \land \top c \geq \top c' \land \top bp \geq c) \Rightarrow \top b' p \geq \top b p \geq \top c \geq \top c' \),
	and, by \cite[Theorem 3]{zhang_incorrectness_2022}, it holds that $\top b'p \geq \top c'$ entails $\top b'p \geq c'$.
	\item  The (limit) rules of $\LCTK_A$  and $\UL$ differ on the distributivity condition. We assume that 
	distributivity also holds in $\UL$. 
\end{itemize}

By instantiating to the trivial abstract domain $A_{tr}$, it turns out that the two proof systems become equivalent. 
\begin{restatable}[$\LCTK_{A_{tr}} \equiv \UL$]{theorem}{thmlcltopulequiv} \label{thm:lctk_ul_equiv}
	Let $K$ be a TopKAT$^*$. For any $a,b \in K$, $\mathtt{t} \in T_{\Sigma, B}$:
	\[ \ptrittr{a}{t}{b} \quad \Leftrightarrow \quad \ptriul{a}{t}{b}. \]
\end{restatable}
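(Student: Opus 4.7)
The strategy mirrors the proof of Theorem~\ref{thm:lcl_ul_equiv}, transplanted from bdKAT to TopKAT$^*$. The first observation is that, when $A = A_{tr}$, we have $A_{tr}(x) = \top$ for every $x\in\topp(K)$. Consequently, every local completeness premise $\mathbb{C}^{A_{tr}}_a(\intrp{c})$ in the $\LCTK_{A_{tr}}$ rule \text{(transfer)} is trivially satisfied, since $A_{tr}(\top a \intrp{c}) = \top = A_{tr}(A_{tr}(\top a) \intrp{c})$. For the same reason, the side conditions $\top a \leq A_{tr}(\top a')$, $\top b' \leq A_{tr}(\top b)$ in (relax) and $\top b \leq A_{tr}(\top a)$ in (iterate) hold vacuously. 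Hence under $A_{tr}$ the rules (transfer), (relax), (seq), (limit) of $\LCTK_{A_{tr}}$ become literally the $\UL$ rules (transfer), (consequence), (seq), (back-v), giving a rule-to-rule correspondence analogous to Table~\ref{tbl:rules_equiv}.

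For the ($\Rightarrow$) direction, I would proceed by induction on the derivation of $\ptrittr{a}{t}{b}$. Using the completeness Theorem~\ref{thm:lcompl_top}, one may further assume the derivation avoids (rec) and (iterate), because the completeness proof only uses (transfer), (relax), (seq), (join), and (limit). The four rules in the correspondence translate directly. For (join) with premises $\ptrittr{a}{t_1}{b_1}$ and $\ptrittr{a}{t_2}{b_2}$, I would, exactly as in Theorem~\ref{thm:lcl_ul_equiv}, apply $\UL$'s (choice) to each $\UL$ premise obtained inductively and then combine them via (disj) to derive $\ptriul{a}{t_1 + t_2}{b_1 + b_2}$.

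For the ($\Leftarrow$) direction, the key observation is that under $A_{tr}$ condition~(ii) of triple validity becomes automatic: $\asema{t} \alpha(\top a) = \top = \alpha(\top b) = \alpha(\top a \intrp{t})$. So to prove a triple in $\LCTK_{A_{tr}}$ it suffices to verify condition~(i), i.e.\ $\top b \leq \top a \intrp{t}$, and then invoke completeness (Theorem~\ref{thm:lcompl_top}). I would then handle the remaining $\UL$ rules by purely algebraic facts in a TopKAT$^*$: for (empty), $\top 0 = 0 \leq \top a \intrp{t}$; for (iterate zero), $\top a \leq \top a (\intrp{t})^* = \top a \intrp{t^*}$ via $1 \leq (\intrp{t})^*$; for (iterate non-zero), $\top a \intrp{t^* \cdot t} = \top a (\intrp{t})^* \intrp{t} \leq \top a (\intrp{t})^* = \top a \intrp{t^*}$; for (disj), use additivity of $\cdot$ on the right with the inductively obtained inequalities $\top b_i \leq \top a_i \intrp{t}$; and for (choice), $\intrp{t_i} \leq \intrp{t_1 + t_2}$ yields $\top a \intrp{t_i} \leq \top a \intrp{t_1 + t_2}$. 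In each case, soundness of $\UL$ (applied to the inductive premises) gives the inequality needed to instantiate completeness of $\LCTK_{A_{tr}}$.

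The main obstacle is essentially bookkeeping: verifying that the TopKAT$^*$ analogues of the algebraic identities used in the bdKAT proof (additivity, $^*$-unfold, and $\top$-distributivity over the countable lub used by (limit)/(back-v)) are available. These are either axioms of TopKAT$^*$ or explicit hypotheses imposed on the (limit)/(back-v) rules, so the adaptation is routine once the observation that conditions~(ii) trivialises under $A_{tr}$ is in place.
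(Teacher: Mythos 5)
Your proposal is correct and takes essentially the same route as the paper: trivialise the $A_{tr}$ side conditions to get the rule correspondence, handle (join) via (choice)+(disj) after rebuilding a derivation free of (rec)/(iterate) for the forward direction, and for the converse reduce each remaining $\UL$ rule to validity condition (i) plus completeness of $\LCTK_{A_{tr}}$, with the same algebraic inequalities for (empty), (disj), (iterate zero) and (iterate non-zero). One small correction: in the converse direction the needed inequalities come from logical soundness of $\LCTK_{A_{tr}}$ (Theorem~\ref{thm:lsound_top}) applied to the triples supplied by the induction hypothesis, not from ``soundness of $\UL$'' (which the paper only derives afterwards, as Corollary~\ref{cor:ul_snd_compl2}); also, your monotonicity argument for (choice) slightly shortcuts the paper's derivation of a zero-postcondition triple combined with (join), but both are fine.
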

\begin{proof}
	The $\Rightarrow$ direction can be shown as in Theorem \ref{thm:lcl_ul_equiv}. We focus instead on the opposite direction.
	
	\noindent $\Leftarrow$) The proof is by induction on the derivation tree of $\ptriul{a}{t}{b}$. The rules of Table \ref{tbl:rules_equiv} are immediate, meaning that only (empty),(choice),(disj),(iterate zero) and (iterate non-zero) are left to prove. As in Theorem \ref{thm:lcl_ul_equiv}, we leverage the completeness of the proof system and likewise, in order to prove that a triple is valid, it is enough to show that condition (1) holds because (2) is trivially true.
	
	\vspace{5px} \noindent (empty)\,: $\vtrittr{a}{t}{0}$ is trivial since $\top \cdot 0 = 0 \leq \top a \intrp{t}$ is always true, therefore by completeness of $\LCTK_{A_{tr}}$, $\ptrittr{a}{t}{0}$.

	\vspace{5px} \noindent (choice)\,: By induction we have, without loss of generality, $\ptrittr{a}{t_1}{b}$ and we need to show $\ptrittr{a}{t_1 + t_2}{b}$. Notice that for any $\mathtt{t_2} \in T_{\Sigma, B}$, $\vtrittr{a}{t_2}{0}$ because $\top \cdot 0 = 0 \leq \top a \intrp{t_2}$, therefore by completeness of $\LCTK_{A_{tr}}$ $\ptrittr{a}{t_2}{0}$. Finally, (join) yields the result $\ptrittr{a}{t_1 + t_2}{b}$.
	
	\vspace{5px} \noindent (disj)\,: By induction we have $\ptrittr{a_1}{t}{b_1}$ and $\ptrittr{a_2}{t}{b_2}$ and we need to show $\ptrittr{a_1 + a_2}{t}{b_1 + b_2}$. By soundness $\top b_1 \leq \top a_1 \intrp{t}$ and $\top b_2 \leq \top a_2 \intrp{t}$, so that $\top (b_1 + b_2) \leq \top (a_1 + a_2) \intrp{t}$. Finally, by completeness we retrieve the result: $\ptrittr{a_1 + a_2}{t}{b_1 + b_2}$.

	\vspace{5px} \noindent (iterate zero)\,: By (\ref{eqn:star_unfold}) $\intrp{t^*} = (\intrp{t})^* \geq 1$, so that $\top a \intrp{t^*} \geq \top a$ and by completeness $\ptrittr{a}{t^*}{a}$.
	
	\vspace{5px} \noindent (iterate non-zero)\,: By (\ref{eqn:star_unfold}) $\intrp{t^*} = (\intrp{t})^* \geq (\intrp{t})^* \intrp{t} = \intrp{t^*} \intrp{t} = \intrp{t^* \cdot t}$. Thus, by hypothesis $\top b \leq \top a \intrp{t^* \cdot t} \leq \top a \intrp{t^*}$ and by completeness $\ptrittr{a}{t^*}{b}$.
	\qed
\end{proof}

In turn, the logical soundness and completeness of $\UL$ can be retrieved as a consequence of those of $\LCTK$.

\begin{restatable}{corollary}{ulsndcompl}\label{cor:ul_snd_compl2}
	Under the same hypotheses of Theorem \ref{thm:lctk_ul_equiv}, the proof system $\UL$ is sound and complete, that is,
	\(\ptriul{p}{t}{q} \; \Leftrightarrow \;\: \vtriul{p}{t}{q}\).
\end{restatable}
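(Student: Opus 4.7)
The plan is to mirror the argument used for Corollary \ref{cor:ul_snd_compl}, adapting it to the TopKAT setting. The chain of equivalences that will carry the proof is: $\UL$-provability $\Leftrightarrow$ $\LCTK_{A_{tr}}$-provability (by Theorem~\ref{thm:lctk_ul_equiv}) $\Leftrightarrow$ $\LCTK_{A_{tr}}$-validity (by Theorems~\ref{thm:lsound_top} and~\ref{thm:lcompl_top}) $\Leftrightarrow$ $\UL$-validity (by collapsing of condition (ii) under the trivial abstraction).

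First I would invoke Theorem~\ref{thm:lctk_ul_equiv} to transform the $\UL$ derivation into an $\LCTK_{A_{tr}}$ derivation $\ptrittr{a}{t}{b}$, and vice versa. Next, applying Theorem~\ref{thm:lsound_top} (soundness) in the forward direction and Theorem~\ref{thm:lcompl_top} (completeness, available since the hypotheses of Theorem~\ref{thm:lctk_ul_equiv} guarantee the required $^*$-continuity and, vacuously under $A_{tr}$, global completeness of the atoms) in the backward direction, I obtain the equivalence $\ptrittr{a}{t}{b} \Leftrightarrow \vtrittr{a}{t}{b}$.

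The remaining step is the analogue of equivalence~\eqref{equiv3} for the TopKAT framework: I need to check that $\vtrittr{a}{t}{b} \Leftrightarrow \vtriul{a}{t}{b}$. By Definition of validity in $\LCTK$, the triple $\vtrittr{a}{t}{b}$ asserts both (1) $\top b \leq \top a \intrp{t}$ and (2) $\asema{t}^{A_{tr}} \alpha(\top a) = \alpha(\top b) = \alpha(\top a \intrp{t})$. Since the abstraction map of $A_{tr}$ is $\alpha = \lambda x.\top$, condition (2) collapses to the trivial equality $\top = \top = \top$ and therefore always holds. Thus $\vtrittr{a}{t}{b}$ reduces to $\top b \leq \top a \intrp{t}$, which is exactly the semantic condition of $\UL$-validity as used in~\cite{zhang_incorrectness_2022}. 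Chaining the three equivalences yields the corollary.

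The main obstacle, modest as it is, is ensuring that the hypotheses of Theorem~\ref{thm:lctk_ul_equiv} (TopKAT$^*$ and the distributivity of $\top$ over countable suprema for the (limit) rule) are sufficient to also legitimately invoke Theorem~\ref{thm:lcompl_top}: the latter requires global completeness of the primitive actions, but this is automatic for $A_{tr}$ since $A_{tr}(\cdot) = \top$ forces every equality of the form $A_{tr}(\top a \intrp{c}) = A_{tr}(A_{tr}(\top a) \intrp{c})$ to reduce to $\top = \top$. Once this observation is in place, the derivation of the corollary is a one-line concatenation of biconditionals, exactly in the style of the proof of Corollary~\ref{cor:ul_snd_compl}.
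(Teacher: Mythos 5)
Your proposal is correct and follows essentially the same route as the paper: the chain $\UL$-provability $\Leftrightarrow$ $\LCTK_{A_{tr}}$-provability (Theorem~\ref{thm:lctk_ul_equiv}) $\Leftrightarrow$ validity in $\LCTK_{A_{tr}}$ (Theorems~\ref{thm:lsound_top} and~\ref{thm:lcompl_top}) $\Leftrightarrow$ $\UL$-validity via the collapse of condition (ii) under $\alpha = \lambda x.\top$. Your extra check that global completeness of atoms is automatic for $A_{tr}$ is a welcome detail that the paper leaves implicit.
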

\begin{proof}
	\begin{align*}
		\ptriul{p}{t}{q} &\Leftrightarrow \qquad \textrm{[by Theorem \ref{thm:lctk_ul_equiv}]} \\
		\ptrittr{p}{t}{q} &\Leftrightarrow \qquad \textrm{[by Theorems \ref{thm:lsound_top}, \ref{thm:lcompl_top}]} \\
		\vtrittr{p}{t}{q} &\Leftrightarrow \\
		\vtriul{p}{t}{q} & \tag*{\qed}
	\end{align*}
\end{proof}

\section{Incorrectness Logic in TopKAT}\label{IL-TopKAT}
In Section \ref{sect:lctk} we proposed an alternative formulation of local completeness logic based on a TopKAT. Here, we take a similar path as in Section \ref{sect:inc} in adding support for abnormal termination. Like in the modal formulation, each language term is interpreted as a pair of KAT elements, meaning that the interpretation function has signature $\intrp{\cdot} : T_{\Sigma, B} \ra (K \times K)$ and is defined inductively as in (\ref{def:il_intrp}). Following the approach of Section \ref{sect:lctk} we consider an element $\top pa$ as a proxy for the strongest postcondition and in particular we have two such elements corresponding to, resp., normal and abnormal termination: $\top p \intrpok{t}$ and $\top p \intrperr{t}$.

The abstract semantic is a function of type $\asema{\cdot} : T_{\Sigma, B} \ra (A \ra (A \times A))$, where $A$ is a top Kleene abstract domain. As usual it is defined inductively, with base case $\mathtt{a} \in \Atom$ defined as
\[ \asema{a} p^\sharp \triangleq \ilc{\alpha(\gamma(p^\sharp) \intrpok{a})}{\alpha(\gamma(p^\sharp) \intrperr{a})} \]
and the inductive cases as in (\ref{def:abs_sem_il}). It turns out that if $K$ is a TopKAT$^*$ then the abstract semantic is sound and monotone.

\subsection{Local completeness incorrectness logic on TopKAT}
The proof system $\LCIL$ can be adapted to the TopKAT framework as follows. Let $A$ be a top Kleene abstract domain on a TopKAT$^*$ $K$ and a language $T_{\Sigma, B}$. If $a,b \in K$ and $\mathtt{t} \in T_{\Sigma, B}$, then 
\begin{itemize}
	\item The triple $\triok{a}{t}{b}$ is valid (1) $\top b \leq \top a \intrpok{t}$, and (2) $\asemaok{t} \alpha(\top a) = \alpha(\top b) = \alpha(\top a \intrpok{t})$.
	\item The triple $\trierr{a}{t}{c}$ is valid (1) $\top c \leq \top a \intrperr{t}$, and (2) $\asemaerr{t} \alpha(\top a) = \alpha(\top c) = \alpha(\top a \intrperr{t})$.
	\item A triple $\triokerr{a}{t}{b}{c}$ is valid if both $\triok{a}{t}{b}$ and $\trierr{a}{t}{c}$ are valid. In particular, if $b=c$ then the triple $\triboth{a}{t}{b}$ is valid.
\end{itemize}

The proof system, denoted $\LCTIL_A$, has the same rules as $\LCIL_A$ in Figure \ref{fig:ilcl} except for the rules (transfer) and (relax), modified as follows:
\begin{center}
	\begin{prooftree}
		\AxiomC{$\mathtt{c} \in \Atom$}
		\AxiomC{$\mathbb{C}^A_a(\intrpok{c})$}
		\AxiomC{$\mathbb{C}^A_a(\intrperr{c})$}
		\RightLabel{(transfer)}
		\TrinaryInfC{$\ptritokerr{a}{c}{a \intrpok{c}}{a \intrperr{c}}$}
	\end{prooftree}
	\begin{prooftree}
		\AxiomC{$\top a' \leq \top a \leq A(\top a') \quad \ptritboth{a'}{t}{b'} \quad \top b \leq \top b' \leq A(\top b)$}
		\RightLabel{(relax)}
		\UnaryInfC{$\ptritboth{a}{t}{b}$}
	\end{prooftree}
\end{center}
Moreover, for the rule (limit) we assume that the infinitary disjunction exists and that $\top$ distributes over $\disj a_n$. Observe that these additional conditions are the same as (\ref{tk1}) for the TopKAT formulation of the local completeness logic.

It turn out that $\LCTIL_A$ is logically sound and complete.

\begin{theorem}[Logical Soundness of $\LCTIL_A$]
	The triples provable in $\LCTIL_A$ are valid.
\end{theorem}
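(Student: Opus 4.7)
The plan is to proceed by structural induction on the derivation tree of a provable triple, exactly mimicking the pattern used in Theorem~\ref{thm:lsound_il} but with the TopKAT-style validity conditions taken from Theorem~\ref{thm:lsound_top}. The soundness conditions to discharge at every node are: (1)~$\top b \leq \top a \intrpok{t}$ and $\top c \leq \top a \intrperr{t}$; (2)~$\asemaok{t}\alpha(\top a) = \alpha(\top b) = \alpha(\top a \intrpok{t})$ and $\asemaerr{t}\alpha(\top a) = \alpha(\top c) = \alpha(\top a \intrperr{t})$. As in Theorem~\ref{thm:lsound_il}, the second equality in each clause of (2) follows from the first combined with (1) and soundness of the abstract semantics (already established for TopKAT$^*$), so it suffices to verify the first equality.

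The \ok-component of every rule shared with $\LCTK_A$ (namely, \emph{(transfer)}, \emph{(relax)}, \emph{(seq-ok)}, \emph{(join)}, \emph{(limit)}) is settled directly by invoking Theorem~\ref{thm:lsound_top} and Lemma~\ref{lemma:lmlsoundlimittop}, since $\intrpok{\cdot}$ and $\asemaok{\cdot}$ coincide with $\intrp{\cdot}$ and $\asema{\cdot}$ of $\LCTK_A$. The \err-component for \emph{(transfer)}, \emph{(relax)}, and \emph{(join)} is handled symmetrically: the arguments transcribe verbatim those given for the \ok-case in the proof of Theorem~\ref{thm:lsound_top}, because the definitions of $\intrperr{\mathtt{a}}$, $\intrperr{t_1+t_2}$ and the associated abstract operators are structurally symmetric to their \ok counterparts (additivity of $\alpha$, $\top\cdot(-)$, and multiplication is all that is needed).

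The genuinely new cases are \emph{(seq-err)} and \emph{(rec-err)}, which rely on the two specific identities of the \err-interpretation. For \emph{(seq-err)}, unfolding the definition gives $\intrperr{t_1 \cdot t_2} = \intrperr{t_1} + \intrpok{t_1}\cdot \intrperr{t_2}$, hence $\top a \intrperr{t_1\cdot t_2} = \top a \intrperr{t_1} + \top a \intrpok{t_1} \intrperr{t_2}$ by distributivity. The inductive hypotheses yield $\top c \leq \top a \intrperr{t_1}$, $\top b \leq \top a \intrpok{t_1}$, and $\top s \leq \top b \intrperr{t_2}$, so $\top s \leq \top a \intrpok{t_1} \intrperr{t_2}$ by monotonicity of multiplication, and therefore $\top(c+s) = \top c + \top s \leq \top a \intrperr{t_1\cdot t_2}$. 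The abstract equality is obtained by combining additivity of $\alpha$ with the three inductive abstract equalities, mirroring the computation in Theorem~\ref{thm:lsound_il}. For \emph{(rec-err)}, the defining identity $\intrperr{t^*} = \intrpok{t^*} \cdot \intrperr{t}$ gives $\top a \intrperr{t^*} = \top a \intrpok{t^*} \intrperr{t}$; from $\top b \leq \top a \intrpok{t^*}$ and $\top r \leq \top b \intrperr{t}$ one concludes $\top r \leq \top a \intrperr{t^*}$, and the abstract equality follows from $\asemaerr{t^*}\alpha(\top a) = \asemaerr{t}\asemaok{t^*}\alpha(\top a) = \asemaerr{t}\alpha(\top b) = \alpha(\top r)$.

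The only place where care is required is ensuring that in the TopKAT setting the surrogates $\top(-)$ commute with the algebraic manipulations used above; this works because $\top(a+b) = \top a + \top b$ and multiplication distributes on both sides, both of which are i-semiring laws available in any TopKAT. Thus no new algebraic machinery beyond what was developed for $\LCTK_A$ and for the structural decomposition of $\intrperr{\cdot}$ in Section~\ref{sect:inc} is needed, and the main obstacle is simply the book-keeping of checking that the $\top(-)$-prefixed inequalities survive each inductive step; this is routine given the i-semiring axioms and the inductive equalities~\eqref{eqn:csema_err_mul} and \eqref{eqn:csema_err_star} are replaced here by their purely multiplicative TopKAT analogues.
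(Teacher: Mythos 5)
Your proposal is correct and follows exactly the route the paper intends: the paper's own proof of this theorem is just the remark that it is analogous to Theorem~\ref{thm:lsound_il}, i.e.\ reuse the $\LCTK_A$ soundness results (Theorem~\ref{thm:lsound_top} and Lemma~\ref{lemma:lmlsoundlimittop}) for the \ok-part and the rules shared with it, treat (transfer), (relax), (join) for \err\ symmetrically, and verify (seq-err) and (rec-err) directly from the definitions $\intrperr{t_1\cdot t_2}=\intrperr{t_1}+\intrpok{t_1}\intrperr{t_2}$ and $\intrperr{t^*}=\intrpok{t^*}\intrperr{t}$ together with distributivity and additivity of $\alpha$. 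Your write-up actually supplies more detail than the paper does, and the key inequalities you derive (e.g.\ $\top s \leq \top a\,\intrpok{t_{\text{$1$}}}\intrperr{t_{\text{$2$}}}$ and $\top r \leq \top a\,\intrperr{t^*}$) are exactly the TopKAT analogues of the steps in Theorem~\ref{thm:lsound_il}.
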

\begin{proof}
	Analogous to the proof of Theorem \ref{thm:lsound_il}. \qed
\end{proof}

\begin{theorem}[Logical completeness of $\LCTIL_A$]
	Assume that the atoms in $\mathtt{t} \in T_{\Sigma, B}$ are globally complete, i.e., for all $\mathtt{a} \in \Atom$, $\mathbb{C}^A(\intrpok{a})$ and $\mathbb{C}^A(\intrperr{a})$. If $\triokerr{a}{t}{b}{c}$ is valid,then it is provable in $\LCTIL_A$.
\end{theorem}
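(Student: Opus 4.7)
The plan is to mirror the structure of the proof of Theorem~\ref{thm:lcompl_il}, adapting it to the TopKAT$^*$ setting. Since the $\ok$-only fragment of $\LCTIL_A$ coincides with $\LCTK_A$ (augmented with (limit)), Theorem~\ref{thm:lcompl_top} immediately yields provability of the $\ok$-component $\triok{a}{t}{b}$. The remaining task is to derive $\trierr{a}{t}{c}$ under the assumption of global completeness of $\intrpok{d}$ and $\intrperr{d}$ for every $\mathtt{d} \in \Atom(\mathtt{t})$. I would proceed by structural induction on $\mathtt{t}$.

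The base case $\mathtt{t} \equiv \mathtt{d} \in \Atom$ follows directly from (transfer), which is applicable because the required local completeness premises are instances of the assumed global completeness, and then (relax) adjusts to the postcondition $c$ prescribed by validity. The choice case $\mathtt{t} \equiv \mathtt{t_1 + t_2}$ is symmetric to its $\ok$ counterpart: (join) combines the inductive hypotheses on the summands and a final (relax) finishes. For $\mathtt{t} \equiv \mathtt{t_1 \cdot t_2}$, I would unfold $\intrperr{t_1 \cdot t_2} = \intrperr{t_1} + \intrpok{t_1} \intrperr{t_2}$, set $a' \ud a \intrpok{t_1}$, and take $b_1 \ud a \intrperr{t_1}$ and $b_2 \ud a' \intrperr{t_2}$ as intermediate postconditions. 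A TopKAT analog of Lemma~\ref{lemma:glob_compl_il}, derived by applying Lemma~\ref{lemma:glob_compl_top} componentwise to $\ok$ and $\err$, yields validity of the triples $\triokerr{a}{t_1}{a'}{b_1}$ and $\trierr{a'}{t_2}{b_2}$; the inductive hypothesis lifts these to provability, so (seq-err) produces $\trierr{a}{t_1 \cdot t_2}{b_1 + b_2}$ and a final (relax) concludes using the assumed validity of the original triple. For $\mathtt{t} \equiv \mathtt{t_0^*}$, I would use $\intrperr{t_0^*} = \intrpok{t_0^*} \intrperr{t_0}$: Theorem~\ref{thm:lcompl_top} gives $\ptritok{a}{t_0^*}{a \intrpok{t_0^*}}$, the induction hypothesis delivers $\ptriterr{a \intrpok{t_0^*}}{t_0}{a \intrpok{t_0^*} \intrperr{t_0}}$, (rec-err) combines them, and (relax) adjusts to $c$.

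The main obstacle will be justifying the TopKAT analog of Lemma~\ref{lemma:glob_compl_il} for the $\err$-component, i.e., $\asemaerr{t} \alpha(\top a) = \alpha(\top a \intrperr{t})$ under global completeness of every atomic $\ok$/$\err$ interpretation. I expect this to follow from a combined structural induction using the identities $\top a \intrperr{t_1 \cdot t_2} = \top a \intrperr{t_1} + \top a \intrpok{t_1} \intrperr{t_2}$ and $\top a \intrperr{t_0^*} = \top a \intrpok{t_0^*} \intrperr{t_0}$, which are consequences of distributivity in KAT together with the definitions in~(\ref{def:il_intrp}); the iteration case additionally requires $^*$-continuity and the distributivity of $\top$ over countable joins in $\intrpok{t_0^*}$, paralleling the argument in Lemma~\ref{lemma:glob_compl_top}. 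Once this auxiliary lemma is in place, the inductive steps above go through essentially verbatim from Theorem~\ref{thm:lcompl_il}.
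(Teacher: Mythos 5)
Your proposal matches the paper's proof, which is itself only a two-line sketch: establish the TopKAT analog of the global-completeness lemma (combining Lemmas~\ref{lemma:glob_compl_il} and~\ref{lemma:glob_compl_top}, using $^*$-continuity and distributivity of $\top$ over the countable joins exactly as you indicate) and then repeat the inductive argument of Theorem~\ref{thm:lcompl_il}, with the $\ok$-part discharged by the completeness of the $\ok$ fragment (Theorem~\ref{thm:lcompl_top}, whose derivations use only rules available in $\LCTIL_A$). Your expanded case analysis for $\cdot$ and $^*$ via (seq-err), (rec-err) and (relax) is precisely what the paper's "proceed as in Theorem~\ref{thm:lcompl_il}" amounts to.
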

\begin{proof}
	It is easy to check that $\asemaok{t} \alpha(\top a) = \alpha(\top a \intrpok{t})$ and $\asemaerr{t} \alpha(\top a) = \alpha(\top a \intrperr{t})$, as in Lemmas \ref{lemma:glob_compl_il} and \ref{lemma:glob_compl_top}. Then we can proceed as in Theorem \ref{thm:lcompl_il} to obtain the result. \qed
\end{proof}

\subsection{Relationship with Incorrectness logic}
In Section \ref{sect:lcil_il} and \ref{sect:lctk_ul} we have shown that the modal formulation of $\LCIL$ generalizes $\IL$ and $\LCTK$ generalizes $\UL$. Likewise, it is possible to show that $\LCTIL$ generalizes $\IL$. A TopKAT formulation of the $\IL$ proof system was already proposed in \cite[Figure 4]{zhang_incorrectness_2022}. We refer to such system for the proposed reduction, with the same minor changes that we discussed in Section \ref{sect:lctk_ul}, namely: 1) we consider only propositional fragments of the logic, 2) the premises of (consequence) are relaxed to $\top b \leq \top b'$ and $\top c' \leq \top c$, 3) we assume that the distributivity condition is satisfied also in $\IL$.

Similarly to the reductions already proposed in this paper, we let $A = A_{tr}$, where $A_{tr}$ is the trivial abstract domain.

\begin{theorem}[$\LCTIL \equiv \IL$] \label{thm:lctil_il_equiv}
	Let $K$ be a TopKAT$^*$. For any $a,b \in K$, $\mathtt{t} \in T_{\Sigma, B}$,
	\[ \ptritokerrtr{a}{t}{b}{c} \quad \Leftrightarrow \quad \ptriokerril{a}{t}{b}{c} \]
\end{theorem}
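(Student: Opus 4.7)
The plan is to mirror the proof strategy used for Theorems~\ref{thm:lcil_il_equiv} and~\ref{thm:lctk_ul_equiv}, combining them. First, I would split the equivalence into its \ok and \err components. The \ok part of $\LCTIL_{A_{tr}}$ coincides with $\LCTK_{A_{tr}}$ (its rules (transfer), (relax), (seq-ok), (join) restricted to \ok triples, plus (limit)), while the \ok part of the TopKAT version of $\IL$ coincides with the corresponding TopKAT $\UL$ system. Thus the \ok equivalence follows directly from Theorem~\ref{thm:lctk_ul_equiv}, and I only need to handle the \err component.

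For the $\Rightarrow$ direction I would induct on the derivation tree in $\LCTIL_{A_{tr}}$. The rules (transfer), (relax) (which collapses to (consequence) since $A_{tr}(\,\cdot\,) = \top$ makes the side conditions vacuous), (join) and (limit) translate in exactly the same way as in Theorem~\ref{thm:lcil_il_equiv}: in particular, (join) is simulated via (choice) composed with (disj). The two genuinely new cases are (seq-err) and (rec-err): (seq-err) with premises $\ptritokerrtr{a}{t_1}{b}{r}$ and $\ptriterrtr{b}{t_2}{s}$ is simulated by deriving $\ptrierril{a}{t_1 \cdot t_2}{s}$ via (seq-normal) from $\ptriokil{a}{t_1}{b}$ and $\ptrierril{b}{t_2}{s}$, deriving $\ptrierril{a}{t_1 \cdot t_2}{r}$ via (short-circuit), and joining the two with (disj); (rec-err) with premises $\ptritok{a}{t^*}{b}$ and $\ptriterr{b}{t}{c}$ is simulated by (seq-normal) followed by (iterate non-zero), exactly as in Theorem~\ref{thm:lcil_il_equiv}.

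For the $\Leftarrow$ direction I would use the logical completeness of $\LCTIL_{A_{tr}}$, observing that for $A_{tr}$ the abstract condition (2) in the validity of an \err triple is trivially met (both sides equal $\top$). Hence for each of the $\IL$ rules whose counterpart is not already in Table~\ref{tbl:rules_equiv} (i.e.\ (empty), (consequence), (disj), (choice), (short-circuit), (seq-normal), (iterate zero), (iterate non-zero)), it suffices to verify by soundness of the hypotheses and the TopKAT identities for $\intrperr{\cdot}$ from (\ref{def:il_intrp}) that the conclusion of the rule is a valid triple in the TopKAT sense (i.e.\ $\top c \leq \top a \intrperr{t}$ with the $\top$-distributivity and $^*$-continuity of $K$). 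For instance, (short-circuit) is discharged using $\intrperr{t_1 \cdot t_2} = \intrperr{t_1} + \intrpok{t_1}\intrperr{t_2} \geq \intrperr{t_1}$; (iterate non-zero) uses $\intrperr{t^*} = \intrpok{t}^* \intrperr{t} \geq \intrpok{t}^*\intrpok{t}\intrperr{t} = \intrperr{t^* \cdot t}$; (iterate zero) uses $\intrperr{t^*} \geq 0$ trivially; the remaining cases are analogous to the corresponding steps of Theorems~\ref{thm:lcl_ul_equiv} and~\ref{thm:lctk_ul_equiv}.

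The main obstacle I anticipate is verifying that the inequalities between the erroneous interpretations $\intrperr{\cdot}$ yield the expected TopKAT-style strongest-postcondition inequalities after left-multiplying by $\top a$, since (unlike in the modal case) these are not tests and thus cannot be manipulated as Boolean-algebra elements. This requires the $^*$-continuity of $K$ and the additional $\top$-distributivity assumption to justify the simulations involving Kleene star (most delicately in (rec-err) and (iterate non-zero)). Once these ingredients are in place, the inductive case analysis proceeds just as in the two earlier reductions.
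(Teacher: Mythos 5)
Your proposal is correct and follows essentially the same route as the paper, whose proof of this theorem is literally ``analogous to Theorem~\ref{thm:lcil_il_equiv}'': the \ok component is delegated to the TopKAT reduction of Theorem~\ref{thm:lctk_ul_equiv}, and the \err component repeats the case analysis of Theorem~\ref{thm:lcil_il_equiv} with $\top a\,\intrperr{t}$ in place of $\csemaerr{t}p$, using soundness/completeness of $\LCTIL_{A_{tr}}$ and the triviality of condition~(2) under $A_{tr}$. One small slip worth fixing: in the (iterate non-zero) case you write $\intrperr{t^* \cdot t} = \intrpok{t}^{*}\intrpok{t}\,\intrperr{t}$, whereas by~\eqref{def:il_intrp} one has $\intrperr{t^* \cdot t} = \intrperr{t^*} + \intrpok{t}^{*}\intrperr{t} = \intrpok{t}^{*}\intrperr{t} = \intrperr{t^*}$, which in fact gives the required inequality even more directly.
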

\begin{proof}
	Analogous to the proof of Theorem \ref{thm:lcil_il_equiv}. \qed
\end{proof}

\noindent Finally, we can retrieve the logical soundness and completeness of $\IL$ as a consequence of those of $\LCTIL$.

\begin{corollary}
	Under the hypothesis of Theorem \ref{thm:lctil_il_equiv} the proof system $\IL$ is logically sound and complete, that is, $\ptriokerril{a}{t}{b}{c} \; \Leftrightarrow \; \vtriokerril{a}{t}{b}{c}$.
\end{corollary}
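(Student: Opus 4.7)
The plan is to proceed exactly in parallel with Corollary~\ref{cor:ul_snd_compl2}, chaining three equivalences that together yield the desired biconditional. First, I would invoke Theorem~\ref{thm:lctil_il_equiv} to convert $\ptriokerril{a}{t}{b}{c}$ in $\IL$ into $\ptritokerrtr{a}{t}{b}{c}$ in $\LCTIL_{A_{tr}}$. Second, by the logical soundness and completeness of $\LCTIL_A$ just established (instantiated at the trivial abstraction $A = A_{tr}$), this is in turn equivalent to $\vtritokerrtr{a}{t}{b}{c}$. Third, I would observe that validity under $A_{tr}$ coincides with $\IL$-validity, yielding $\vtriokerril{a}{t}{b}{c}$.

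The third step is the only one requiring a brief justification, but it is essentially trivial: since the abstraction map of $A_{tr}$ is $\alpha = \lambda x.\top$, the completeness-equality condition~(2) of validity is automatic in both the ok and err components, namely $\asemaok{t} \alpha(\top a) = \top = \alpha(\top b) = \alpha(\top a \intrpok{t})$ and analogously for err. Hence the validity of $\triokerr{a}{t}{b}{c}$ in $\LCTIL_{A_{tr}}$ collapses to its under-approximation condition~(1), which is exactly the definition of $\IL$-validity in the TopKAT framework of \cite{zhang_incorrectness_2022}. This mirrors precisely the equivalence~\eqref{eq:Atr-IL} used in the bdKAT case.

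There is no genuine obstacle here, since all the difficulty has been absorbed into the preceding theorems. The proof can therefore be written as a short display of equivalences, in the same style as Corollary~\ref{cor:ul_snd_compl2}:
\begin{align*}
\ptriokerril{a}{t}{b}{c} &\Leftrightarrow \quad \text{[by Theorem~\ref{thm:lctil_il_equiv}]} \\
\ptritokerrtr{a}{t}{b}{c} &\Leftrightarrow \quad \text{[by logical soundness and completeness of $\LCTIL_{A_{tr}}$]} \\
\vtritokerrtr{a}{t}{b}{c} &\Leftrightarrow \quad \text{[since $\alpha_{A_{tr}} = \lambda x.\top$ makes condition~(2) trivial]} \\
\vtriokerril{a}{t}{b}{c}.
\end{align*}
The only subtlety worth double-checking is that the propositional-fragment conventions and the distributivity assumption on $\top$ needed for $\LCTIL_A$ in Section~\ref{IL-TopKAT} match those imposed on $\IL$ in the adapted system of \cite{zhang_incorrectness_2022}; these were already reconciled in the discussion preceding Theorem~\ref{thm:lctk_ul_equiv}, so no further adjustment is required.
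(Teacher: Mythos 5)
Your proposal is correct and matches the paper's argument: the paper obtains this corollary by exactly the same chain — Theorem~\ref{thm:lctil_il_equiv}, then logical soundness and completeness of $\LCTIL_{A_{tr}}$, then the observation that $\alpha_{A_{tr}}=\lambda x.\top$ trivializes condition~(2) so that $A_{tr}$-validity coincides with $\IL$-validity — in the same style as Corollary~\ref{cor:ul_snd_compl2}. The subtleties you flag (global completeness of atoms being automatic under $A_{tr}$, and the (limit)/distributivity conventions) are handled the same way in the paper.
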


\section{Conclusion}

This work has shown that the abstract interpretation-based local completeness  logic introduced in \cite{bruni_logic_2021} can be generalized to and interpreted
in Kleene algebra with tests. In particular, we proved that this can be achieved both for KATs extended 
with a modal backward diamond operator playing the role of strongest postcondition, 
 and for KATs endowed with a
top element. 
Our results generalize both 
the modal \cite{fahrenberg_algebra_2021} 
and top \cite{zhang_incorrectness_2022} KAT
approaches that encode Hoare correctness and O'Hearn incorrectness logic using different classes of KATs. 
In particular, our KAT-based logic leverages an abstract interpretation 
of KAT, a problem that was not studied so far. 

Our plan for future work includes, but is not limited to, the following questions. 
\begin{itemize}
\item For a KAT with top $\top$, following the technical idea underlying the approach 
by Zhang et al.~\cite{zhang_incorrectness_2022}, 
we defined an abstract domain as an approximation of all
the algebraic elements of type $\top\cdot a$, where $a$ is any element of the KAT (cf.\ Definition~\ref{def:tkad}). Although this definition technically works, it is somehow artificial, 
because the elements $\top \cdot a$ do not carry a clear intuitive meaning. As an interesting future task, we would like 
to characterize under which conditions an element $\top \cdot a$ coincides with $\top \cdot p$ for some test $p \in \Test(K)$, 
and if such test $p$ is unique.
\item This work is a first step towards an \emph{algebraic and equational approach to abstract interpretation}. 
We envisage that the reasoning made by an abstract interpreter of programs could be made purely equational within a KAT equipped
with a suitable collection of axioms. The ambition would be to conceive a notion of \emph{abstract Kleene algebra} (AKA) 
making this slogan true: \emph{AKA is for the abstract interpretation of programs what KAT is for concrete interpretation of programs}. 
\end{itemize}

\subsubsection*{Acknowledgements.}
Francesco Ranzato has been partially funded by  the \emph{Italian Ministry of University and Research}, under the PRIN 2017 project no.\ 201784YSZ5 ``AnalysiS of PRogram Analyses (ASPRA)'', 
by \emph{Facebook Research}, under a ``Probability and Programming Research Award'', and 
by an \emph{Amazon Research Award} for ``AWS Automated Reasoning''.


\begin{thebibliography}{10}
\providecommand{\url}[1]{\texttt{#1}}
\providecommand{\urlprefix}{URL }
\providecommand{\doi}[1]{https://doi.org/#1}

\bibitem{NetKAT}
Anderson, C.J., Foster, N., Guha, A., Jeannin, J.B., Kozen, D., Schlesinger,
  C., Walker, D.: {N}et{K}{A}{T}: Semantic foundations for networks. In:
  Proceedings of the 41st ACM SIGPLAN-SIGACT Symposium on Principles of
  Programming Languages. p. 113–126. POPL '14, ACM (2014).
  \doi{10.1145/2535838.2535862}

\bibitem{mine_tutorial_2017}
Antoine Min\'{e}.: Tutorial on static inference of numeric invariants by abstract
  interpretation. {\em Found. Trends Program. Lang.}, 4(3–4):120–372, dec 2017.
  \doi{10.1561/2500000034}

\bibitem{BeckettGW16}
Beckett, R., Greenberg, M., Walker, D.: Temporal {N}et{K}{A}{T}. In:
  Proceedings of the 37th {ACM} {SIGPLAN} Conference on Programming Language
  Design and Implementation, {PLDI} 2016. pp. 386--401. {ACM} (2016).
  \doi{10.1145/2908080.2908108}

\bibitem{BruniGGGP20}
Bruni, R., Giacobazzi, R., Gori, R., Garcia{-}Contreras, I., Pavlovic, D.:
  Abstract extensionality: on the properties of incomplete abstract
  interpretations. Proc. {ACM} Program. Lang.  \textbf{4}({POPL}),  28:1--28:28
  (2020). \doi{10.1145/3371096}

\bibitem{bruni_logic_2021}
Bruni, R., Giacobazzi, R., Gori, R., Ranzato, F.: A {Logic} for {Locally}
  {Complete} {Abstract} {Interpretations}. In: Proceedings 36th {ACM}/{IEEE}
  {Symposium} on {Logic} in {Computer} {Science} ({LICS 2021}). pp. 1--13. IEEE
  (2021). \doi{10.1109/LICS52264.2021.9470608}

\bibitem{pldi2022}
Bruni, R., Giacobazzi, R., Gori, R., Ranzato, F.: Abstract interpretation
  repair. In: Proceedings of the 43rd ACM SIGPLAN International Conference on
  Programming Language Design and Implementation. p. 426–441. PLDI 2022,
  Association for Computing Machinery, New York, NY, USA (2022).
  \doi{10.1145/3519939.3523453}

\bibitem{Cohen96thecomplexity}
Cohen, E., Kozen, D., Smith, F.: The complexity of {K}leene algebra with tests.
  Tech. rep., Cornell University, USA (1996),
  \url{https://www.cs.cornell.edu/~kozen/Papers/ckat}

\bibitem{conway2012regular}
Conway, J.: Regular Algebra and Finite Machines. Chapman and Hall mathematics
  series, Dover Publications (2012)

\bibitem{CC77}
Cousot, P., Cousot, R.: Abstract interpretation: a unified lattice model for
  static analysis of programs by construction or approximation of fixpoints.
  In: Proceedings of the 4th ACM SIGACT-SIGPLAN Symposium on Principles of
  Programming Languages (POPL 1977). pp. 238--252. ACM (1977).
  \doi{10.1145/512950.512973}

\bibitem{CC79}
Cousot, P., Cousot, R.: Systematic design of program analysis frameworks. In:
  Proceedings of the 6th ACM SIGACT-SIGPLAN Symposium on Principles of
  Programming Languages (POPL 1979). pp. 269--282. ACM (1979).
  \doi{10.1145/567752.567778}

\bibitem{desharnais_kleene_2006}
Desharnais, J., M\"{o}ller, B., Struth, G.: Kleene algebra with domain. ACM
  Trans. Comput. Logic  \textbf{7}(4),  798–833 (oct 2006).
  \doi{10.1145/1183278.1183285}

\bibitem{fischer_propositional_1979}
Fischer, M.J., Ladner, R.E.: Propositional dynamic logic of regular programs.
  Journal of Computer and System Sciences  \textbf{18}(2),  194--211 (Apr
  1979). \doi{10.1016/0022-0000(79)90046-1}

\bibitem{FosterKM0T15}
Foster, N., Kozen, D., Milano, M., Silva, A., Thompson, L.: A coalgebraic
  decision procedure for {N}et{K}{A}{T}. In: Proceedings of the 42nd Annual
  {ACM} {SIGPLAN-SIGACT} Symposium on Principles of Programming Languages,
  {POPL} 2015. pp. 343--355. {ACM} (2015). \doi{10.1145/2676726.2677011}

\bibitem{GRS00}
Giacobazzi, R., Ranzato, F., Scozzari., F.: Making abstract interpretation
  complete. Journal of the {ACM}  \textbf{47}(2),  361--416 (March 2000).
  \doi{10.1145/333979.333989}

\bibitem{glr15}
Giacobazzi, R., Logozzo, F., Ranzato, F.: Analyzing program analyses. In:
  Proceedings of the 42nd Annual {ACM} {SIGPLAN-SIGACT} Symposium on Principles
  of Programming Languages, {POPL} 2015. pp. 261--273 (2015).
  \doi{10.1145/2676726.2676987}

\bibitem{Greenberg22}
Greenberg, M., Beckett, R., Campbell, E.: Kleene algebra modulo theories: A
  framework for concrete {K}{A}{T}s. In: Proceedings of the 43rd ACM SIGPLAN
  International Conference on Programming Language Design and Implementation.
  p. 594–608. PLDI 2022, Association for Computing Machinery, New York, NY,
  USA (2022). \doi{10.1145/3519939.3523722}

\bibitem{hoare_axiomatic_1969}
Hoare, C.A.R.: An axiomatic basis for computer programming. Commun. ACM
  \textbf{12}(10),  576–580 (oct 1969). \doi{10.1145/363235.363259}

\bibitem{KOZEN1994366}
D.~Kozen. A completeness theorem for {K}leene algebras and the algebra of
  regular events. {\em Information and Computation}, 110(2):366--390, 1994.
  \doi{10.1006/inco.1994.1037}

\bibitem{kozen_kleene_1997}
Kozen, D.: Kleene algebra with tests. ACM Transactions on Programming Languages
  and Systems  \textbf{19}(3),  427--443 (May 1997).
  \doi{10.1145/256167.256195}

\bibitem{kozen_hoare_2000}
Kozen, D.: On {H}oare logic and {K}leene algebra with tests. ACM Trans. Comput.
  Logic  \textbf{1}(1),  60–76 (jul 2000). \doi{10.1145/343369.343378}

\bibitem{goos_kleene_1997}
Kozen, D., Smith, F.: Kleene algebra with tests: Completeness and decidability.
  In: Proceedings 10th International Workshop on Computer Science Logic, {CSL}
  1996, Annual Conference of the EACSL. Lecture Notes in Computer Science,
  vol.~1258, pp. 244--259. Springer (1996). \doi{10.1007/3-540-63172-0\_43}

\bibitem{bugs2022}
Le, Q.L., Raad, A., Villard, J., Berdine, J., Dreyer, D., O'Hearn, P.W.:
  Finding real bugs in big programs with incorrectness logic. Proc. ACM
  Program. Lang.  \textbf{6}(OOPSLA1) (apr 2022). \doi{10.1145/3527325}

\bibitem{esparza_equational_2017}
Mamouras, K.: Equational theories of abnormal termination based on kleene
  algebra. In: Proceedings 20th International Conference on Foundations of
  Software Science and Computation Structures, {FOSSACS} 2017. Lecture Notes in
  Computer Science, vol. 10203, pp. 88--105 (2017).
  \doi{10.1007/978-3-662-54458-7\_6}

\bibitem{fahrenberg_algebra_2021}
M{\"{o}}ller, B., O'Hearn, P.W., Hoare, T.: On algebra of program correctness
  and incorrectness. In: Proceedings of the 19th International Conference on
  Relational and Algebraic Methods in Computer Science, RAMiCS 2021. Lecture
  Notes in Computer Science, vol. 13027, pp. 325--343. Springer (2021).
  \doi{10.1007/978-3-030-88701-8\_20}

\bibitem{moller_algebras_2006}
Möller, B., Struth, G.: Algebras of modal operators and partial correctness.
  Theoretical Computer Science  \textbf{351}(2),  221--239 (Feb 2006).
  \doi{10.1016/j.tcs.2005.09.069}

\bibitem{ohearn_incorrectness_2020}
O'Hearn, P.W.: Incorrectness logic. Proceedings of the ACM on Programming
  Languages  \textbf{4}(POPL),  1--32 (Jan 2020). \doi{10.1145/3371078}

\bibitem{Poskitt21}
Poskitt, C.M.: Incorrectness logic for graph programs. In: Proceedings of the
  14th International Conference on Graph Transformation, {ICGT} 2021. Lecture
  Notes in Computer Science, vol. 12741, pp. 81--101. Springer (2021).
  \doi{10.1007/978-3-030-78946-6\_5}

\bibitem{RaadBDDOV20}
Raad, A., Berdine, J., Dang, H., Dreyer, D., O'Hearn, P.W., Villard, J.: Local
  reasoning about the presence of bugs: Incorrectness separation logic. In:
  Proceedings 32nd International Conference on Computer Aided Verification,
  {CAV} 2020. Lecture Notes in Computer Science, vol. 12225, pp. 225--252.
  Springer (2020). \doi{10.1007/978-3-030-53291-8\_14}

\bibitem{RaadBDO22}
Raad, A., Berdine, J., Dreyer, D., O'Hearn, P.W.: Concurrent incorrectness
  separation logic. Proc. {ACM} Program. Lang.  \textbf{6}({POPL}),  1--29
  (2022). \doi{10.1145/3498695}

\bibitem{Ranzato13}
Ranzato, F.: Complete abstractions everywhere. In: Proceedings of the 14th
  International Conference on Verification, Model Checking, and Abstract
  Interpretation, {VMCAI} 2013. Lecture Notes in Computer Science, vol.~7737,
  pp. 15--26. Springer (2013). \doi{10.1007/978-3-642-35873-9\_3}

\bibitem{SmolkaEFG15}
Smolka, S., Eliopoulos, S.A., Foster, N., Guha, A.: A fast compiler for
  {N}et{K}{A}{T}. In: Proceedings of the 20th {ACM} {SIGPLAN} International
  Conference on Functional Programming, {ICFP} 2015. pp. 328--341. {ACM}
  (2015). \doi{10.1145/2784731.2784761}

\bibitem{Cantor}
Smolka, S., Kumar, P., Foster, N., Kozen, D., Silva, A.: Cantor meets {S}cott:
  Semantic foundations for probabilistic networks. In: Proceedings of the 44th
  ACM SIGPLAN Symposium on Principles of Programming Languages. p. 557–571.
  POPL 2017, ACM (2017). \doi{10.1145/3009837.3009843}

\bibitem{yan2022}
Yan, P., Jiang, H., Yu, N.: On incorrectness logic for quantum programs. Proc.
  ACM Program. Lang.  \textbf{6}(OOPSLA1) (apr 2022). \doi{10.1145/3527316}

\bibitem{zhang_incorrectness_2022}
Zhang, C., de~Amorim, A.A., Gaboardi, M.: On {Incorrectness} {Logic} and
  {Kleene} {Algebra} with {Top} and {Tests}. Proceedings of the ACM on
  Programming Languages  \textbf{6}(POPL),  1--30 (Jan 2022).
  \doi{10.1145/3498690}

\end{thebibliography}
\end{document}